\documentclass{article}
\usepackage[colorlinks=true,allcolors=blue]{hyperref}
\usepackage{graphicx} 
\usepackage{caption}
\usepackage{subcaption}
\usepackage{comment}
\usepackage{float}
\usepackage{appendix}
\usepackage{amsmath,amssymb,amsthm,amsfonts}
\usepackage{fullpage,times,fourier,charter}
\usepackage{bm}
\usepackage{xcolor}
\usepackage[backend=biber,style=numeric,doi=true,isbn=false,url=true]{biblatex}
\graphicspath{{figures/}}

\allowdisplaybreaks

\makeatletter
\renewcommand\section{\@startsection {section}{1}{\z@}%
  {-2.4ex \@plus -1ex \@minus -.2ex}{1.2ex \@plus.1ex}%
  {\normalfont\bf\sffamily}}
\renewcommand\subsection{\@startsection{subsection}{2}{\z@}%
  {-2.0ex\@plus -0.4ex \@minus -.2ex}{1.0ex \@plus .1ex}%
  {\normalfont\small\bf\sffamily}}

\def\sn{\mathop{\rm sn}\nolimits}

\newtheorem{proposition}{Proposition}[section]
\newtheorem{remark}{Remark}[section]

\newcommand\partialderiv[3][]{\frac{\partial^{#1}#2}{\partial #3^{#1}}}
\def\sech{\mathop{\rm sech}\nolimits}
\def\arcsinh{\mathop{\rm arcsinh}\nolimits}

\numberwithin{equation}{section}

\def\be{\begin{equation}}
\def\ee{\end{equation}}
\def\bse{\begin{subequations}}
\def\ese{\end{subequations}}

\advance\textheight 1mm
\advance\voffset 0mm
\advance\parskip 1pt

\title{Wedge problems and dispersive shock waves in the two-dimensional Toda lattice}
\author{Marco Calabrese$^1$, Gino Biondini$^2$, Christopher Chong$^3$ and Panayotis Kevrekidis$^1$\\[1ex]
\normalsize\it 
1: Department of Mathematics and Statistics, University of Massachussetts, Amherst, MA\\
\normalsize\it 
2: Department of Mathematics, State University of New York at Buffalo, Buffalo, NY\\
\normalsize\it 
3: Department of Mathematics, Bowdoin College, Brunswick, ME
}
\date{\small\today}

\begin{document}

\maketitle
\begin{abstract}

We study the formation and interaction of dispersive shock waves (DSWs) in the two-dimensional Toda lattice subject to wedge-type initial conditions, 
and show that their interaction gives rise to a discrete analog of Mach reflection for 
dispersive shock waves in discrete systems. 
The initial jump across each  leg  of the wedge acts locally as a Riemann problem for the one-dimensional Toda lattice, producing two oblique DSWs whose leading-edge soliton amplitude is determined explicitly by the one-dimensional Whitham modulation theory. 
The two-dimensional nature of the problem manifests when these oblique DSWs meet along the symmetry axis.
We show that,  for compressive wedges (i.e., when the initial conditions are such that two oblique DSWs that are generated propagate toward each other),
a critical slope $q_{\mathrm{cr}}$ separates two regimes: in the subcritical regime ($q<q_{\mathrm{cr}}$) the interaction is resonant and it produces an expanding DSW whose amplitude, length and velocity are explicitly computed by using exact soliton solutions of the two-dimensional Toda lattice; in the supercritical regime ($q>q_{\mathrm{cr}}$) the interaction is ordinary and produces a localized peak whose amplitude is determined analytically. 
We also show qualitatively that a similar dichotomy between two regimes exists for expansive wedges (i.e., when the initial conditions are such that the two oblique DSWs propagate away from each other). 
We confirm all analytical predictions by comparing them with the results of direct  numerical simulations. 
 Finally, we show that the 
continuum limit of the result is consistent with the analogous theory for the Kadomtsev-Petviashvili equation, providing an independent validation of the analytical framework.    
\end{abstract}
\tableofcontents

\bigskip
\section{Introduction}
\label{s:intro}

The study of dispersive shock waves (DSWs) has a rich and multifaceted history. DSWs are expanding oscillatory wave trains that can emerge from the dispersive regularization of wave breaking. In the prototypical setting of a Riemann problem, namely for step-like initial conditions connecting two distinct constant states, a DSW forms and expands as a modulated periodic wave train. 
In many cases, this modulation can be effectively described quantitatively using Whitham modulation theory \cite{Whitham1974,EL201611}. 
In this framework, the DSW can be seen as a modulated genus-one solution of the underlying equation, whose slowly varying parameters are governed by a system of hyperbolic conservation laws for the branch points of the associated spectral curve.
In the prototypical KdV-type scenario relevant to the present work, the wave train degenerates into small amplitude linear waves at one edge (called the trailing edge) and into a large solitary wave at the other (called the leading edge) where modulation equations yield explicit formulae for its speed and its amplitude. Significant progress has been made in the theory of one-dimensional DSWs, both through analytical developments \cite{EL201611, MINZONI20166, MILLER201666, ABLOWITZ201684} and through numerous experimental realizations in nonlinear optics \cite{wan2007dispersive,PhysRevLett.118.254101, PhysRevLett.126.183901}, Bose-Einstein condensates \cite{PhysRevA.74.023623, PhysRevA.80.043606, PMID:30405131} and fluid dynamics \cite{Maiden_Franco_Webb_El_Hoefer_2020,  PhysRevLett.116.174501, PhysRevLett.120.144101}. By contrast, the study of DSWs in two spatial dimensions remains comparatively less well developed.

The study of two-dimensional DSWs is a more recent frontier that has only lately begun to be systematically explored \cite{PhysRevA.80.061601,BBH_PRL2025}. 
A noteworthy contribution in this respect is the recent work \cite{BBH_PRL2025}, which investigated Riemann type problems for the Kadomtsev-Petviashvili  (KP) equation, i.e., the prototypical integrable model for weakly two-dimensional, weakly nonlinear waves \cite{AblowitzSegur1981,NMPZ1984}. The authors of \cite{BBH_PRL2025} considered wedge-type initial conditions (ICs) and demonstrated the existence of a critical wedge angle separating two qualitatively distinct dynamical regimes: a subcritical regime in which two oblique DSWs interact resonantly to produce an expanding vertical stem DSW, and a supercritical regime in which the interaction is ordinary and no stem is formed. These results reveal a rich and novel phenomenology that has no counterpart in one spatial dimension, and raise the natural question of the potential
universality of such features. In particular, a notable question 
concerns whether analogous phenomena arise in other two-dimensional dispersive systems, and more specifically in discrete ones.

A parallel active research direction concerns DSWs in spatially discrete systems, where the dispersion arises from the lattice structure. While much of the literature on DSWs is rooted in continuous models, recent theoretical and experimental developments have demonstrated that discrete metamaterials can exhibit genuinely discrete DSW features \cite{:/publisher/EDPSciences/journal/EuropeanPhysicalJournalB/14/1/10.1007/s100510050119, PhysRevE.80.056602, nesterenko2013dynamics}; indeed, relevant 
experiments have lately emerged  both in mechanical settings such as hollow
ellpitic cylinders~\cite{kimyang}, as well as in magnetic ones
in~\cite{talcohen}. Among discrete models, integrable systems occupy a privileged role: the availability of exact analytical tools (Lax pairs, inverse scattering transform, multi-soliton solutions and Whitham modulation equations) enables a precise  analytical characterization of the wave dynamics that can serve as a paradigm for understanding more general, non-integrable discrete systems. The one-dimensional Toda lattice (1DTL) is the prototypical example in this class, and its Whitham modulation theory and DSWs were studied in full detail \cite{AMBYK,GBCCPK}.
Very recently, quasi-one-dimensional DSWs were studied in a two-dimensional analog of the Fermi–Pasta–Ulam chain \cite{chong2026travelingdispersiveshockwaves}. That work focused exclusively on line initial data. The study of genuinely two-dimensional dynamics, such as those arising from a wedge problem, is a natural next step.

The present work addresses precisely this focal point of
discrete dispersive shock waves in two dimensions by bringing together three active research directions: two-dimensional dispersive shock waves, discrete nonlinear systems, and integrability. 
We consider a prototypical semi-discrete integrable model in (2+1) dimensions: the two-dimensional Toda lattice (2DTL), which has been studied extensively in the literature (e.g., see \cite{BW2010,HirotaItoKako,HirotaOhtaSatsuma,HirotaOhtaSatsuma_PTPS,KajiwaraSatsuma,JPhysA2004,Mikhailov1979,Mizumachi2025,Zhang2002} and references therein).

The 2DTL appears in several equivalent formulations  in the literature. One common representation of the model is given in the light-cone coordinates $(r,s)$  (see, e.g.,  \cite{hirota1988two, BW2010})
\vspace*{-0.4ex}
\begin{equation}\label{e:Toda2D_0}
\frac{\partial^2}{\partial r \partial s}\log (1+V_n)=V_{n+1}-2V_n+V_{n-1},  
\end{equation}
where $n\in \mathbb{N}$ is the discrete lattice index. For the purposes of the present work, however, it will be more convenient to  work in the laboratory coordinates $(y,t)$, where $y\in \mathbb{R}$ denotes a continuous transverse direction and $t$ denotes time. 
We therefore perform the invertible change of independent variables 
\begin{equation}
\label{e:labframecoordinates}
  r = \tfrac12{(t+y)},\qquad 
  s = \tfrac12{(t-y)}.
\end{equation}
Using~\eqref{e:labframecoordinates}, Eq. \eqref{e:Toda2D_0} becomes
\begin{equation} \label{2dTodaLog}
 \left ( \frac{\partial^2}{\partial t^2}- \frac{\partial^2}{\partial y^2} \right )\log{(1+V_n)}=V_{n+1}-2V_n+V_{n-1}\,
\end{equation}
which is the form of the 2DTL directly connected to the $y$ independent reduction \eqref{1dTodalog} discussed below in Section \ref{s: PeriodicSolutions2d}.
Introducing the strain variable 
$B_n=\log{(1+V_n)}$,  Eq.  \eqref{2dTodaLog}  becomes:  
\vspace*{-0.4ex}
\begin{equation} \label{eq_Toda1}
\partialderiv[2]{B_n}t - \partialderiv[2]{B_n}y = e^{B_{n+1}}-2e^{B_n}+e^{B_{n-1}} \,.
\end{equation}
We adopt this formulation for two reasons. First, it naturally reduces to the one-dimensional Toda lattice along oblique traveling-wave reductions in the variable $k_1n+k_2 y+\Omega t+\Phi_0$. 
Second, in the long-wave, small-amplitude limit, the wedge problem considered in this work reduces to the corresponding wedge problem for the KP equation, providing a direct connection between the discrete and continuum frameworks and between our work and the recent
one of~\cite{BBH_PRL2025}. 

\begin{figure}[b!]
\centerline{\includegraphics[width=0.7\textwidth]{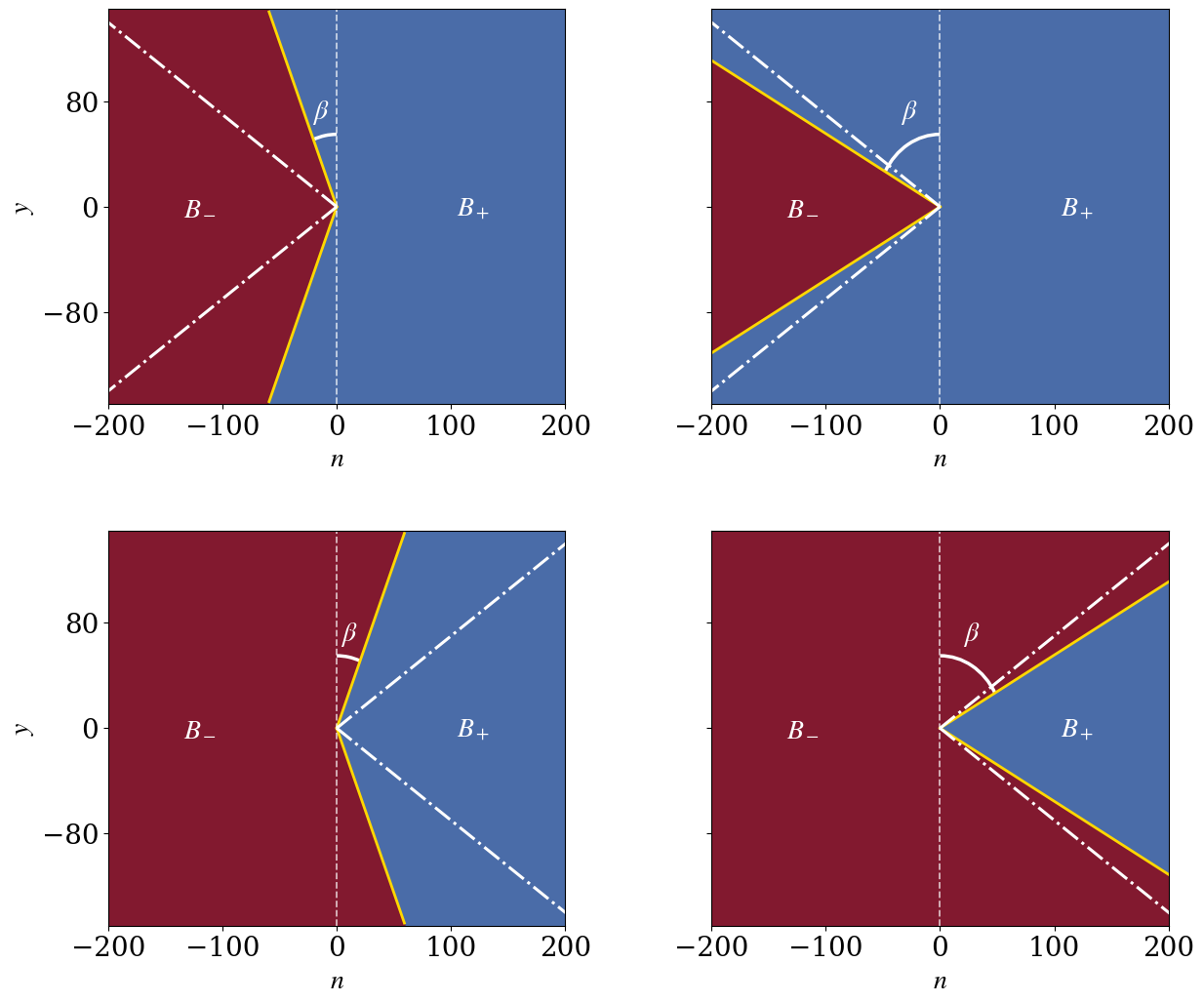}}
\caption{The four dynamical regimes of the wedge problem~\eqref{InitialValue2dToda}, with $B_-=0.5$  (dark red) and $B_+=0$ (blue) as in the numerical simulations of Section  \ref{s:wedge}. Top row: type I wedges, with $q=0.4$ (subcritical, left) and $q=1.8$ (supercritical, right). Bottom  row: type II wedges, with $q=-0.4$ (subcritical, left) and $q=-1.8$ (supercritical, right). The gold line marks the wedge boundary at the slope $q$ of each panel; the angle $\beta$ formed with the $y-$axis which satisfies $\beta=\arctan{q}$ is indicated in each panel. The vertical dashed line marks $n=0$. 
The white dash-dotted lines correspond to the critical slope $|q|= q_{\mathrm{cr}}\approx1.429$ from Eq.~\eqref{Eq:Res_qA} separating the subcritical and supercritical regimes. 
} 
\label{fig:wedge}
\vspace*{-1.4ex}
\end{figure}

Specifically, motivated by the results of~\cite{BBH_PRL2025},
in this work we study Eq. \eqref{eq_Toda1} subject to wedge-type ICs of the form
\begin{equation} \label{InitialValue2dToda}
    B_n(y,0)=\begin{cases}
        B_-, &n + q\,|y| \le 0,\\
        B_+, &n + q\,|y| > 0,
    \end{cases}
\end{equation}
where $B_->B_+$ are constants and $q\in \mathbb{R}$ is the slope parameter of the wedge. Since Eq. \eqref{eq_Toda1} is second-order in time, a second initial condition $\dot B_n(y,0)$ must also be prescribed; this is chosen so that waves are right-moving, see Eq.  \eqref{InitialVelocity} below. 
Depending  on the sign of~$q$, the initial data take two geometrically distinct forms, as illustrated in Figure~\ref{fig:wedge}. 
The angle $\beta$ formed by the wedge with the positive $y$ axis, counted counterclockwise,
is determined by the slope parameter as $\beta = \tan q$.
Hereafter, similarly to~\cite{BBH_PRL2025}, we will denote as type~I wedges those ICs with $q>0$
(as in the right panel of Fig.~\ref{fig:wedge})
and as type~II wedges those with $q<0$
(as in the left panel of Fig.~\ref{fig:wedge}).

For type~I wedges, the region of greater amplitude $B_-$ forms a rightward-pointing wedge in the $ny$-plane, whereas for type~II wedges it forms a leftward-pointing wedge.
Away from the line $y=0$,
the jump across each branch of the wedge acts locally as a Riemann problem for the one-dimensional Toda lattice. 
We will show that, as a consequence, the dynamics initially generates two oblique DSWs propagating symmetrically away from the wedge boundaries. The leading-edge soliton amplitude of each oblique DSW is completely determined by the one-dimensional Whitham modulation theory.
Since the dynamics is purely right-moving, the two oblique DSWs generated at the wedge boundaries both propagate towards increasing $n$; depending on the wedge orientation, this causes them to converge along the symmetry axis $y=0$ for type II wedges, or to diverge from one another for type I wedges. This simple observation anticipates why type II wedges give rise to a genuinely two-dimensional interaction, the central object of study of this work, whereas type I wedges exhibit a comparatively simpler phenomenology, discussed only briefly in Section \ref{TypeI}. The truly two-dimensional nature of the problem manifests itself when these oblique DSWs interact along the symmetry axis $y=0$. The outcome of this interaction
depends critically on the slope parameter $q$ and is the central object of study of this work.

The main results of this work can be summarized as follows. For type II wedges, we show that a critical value $q_{\mathrm{cr}}$  separates two dynamically distinct regimes. In the subcritical regime $|q|<q_{\mathrm{cr}}$
the two oblique DSWs interact resonantly and produce an expanding vertical stem DSW:  leveraging the exact  two-soliton  solutions of the 2DTL \cite{BW2010}, we derive explicit analytical expressions for the stem amplitude, its vertical expansion rate and its horizontal velocity,  and confirm all  predictions against numerical simulations. In the supercritical regime $|q|>q_{\mathrm{cr}}$, the interaction is ordinary, i.e., the two oblique solitons simply cross without generating additional structure, and produce instead a localized peak of high amplitude propagating along $y=0$, whose amplitude we also determine analytically.
For type I wedges, the interaction is less intricate, but the subcritical and supercritical regimes remain qualitatively distinct, separated by the same critical value $q_{\mathrm{cr}}$, up to sign, as for type II wedges, as we discuss. The critical slope $q_\mathrm{cr}$ itself is derived analytically, and the continuum limit of all results is shown to be consistent with the analogous theory for the KP equation developed in \cite{BBH_PRL2025}, providing an independent validation of the analytical framework.

The remainder of this paper is organized as follows. 
In Section \ref{s:1DTL} we recall some relevant background on the one-dimensional Toda lattice, including Flaschka variables and Whitham modulation theory, and compute the amplitude of the leading-edge soliton of the DSWs produced by a Riemann problem. 
In Section~\ref{s:2DTL_background} we present the soliton solutions of the 2DTL and derive a new class of periodic traveling wave solutions that generalizes the genus-one solutions of the 1DTL to the two-dimensional setting. 
These solutions provide the elemental building blocks for the more complicated, genuinely two-dimensional solutions studied in the remainder of the work.
Section~\ref{s:wedge} contains the main results on the wedge problem for the 2DTL: 
After establishing the local one-dimensional reduction and deriving the critical slope, we analyze the four dynamical regimes, providing analytical predictions and numerical confirmation for each. 
In Section~\ref{s:continuum} we show how the KP equation is obtained as the long-wave small-amplitude limit of the 2DTL, and we verify the consistency of the critical slope formula with the analogous result of \cite{BBH_PRL2025}. 
Finally, Section~\ref{s:conclusions} contains a brief summary and some concluding remarks.
The appendices contain a description of the numerical scheme (Appendix \ref{app:D}), additional background on the one-dimensional Toda lattice (Appendix \ref{app:A}), detailed derivations of the soliton solutions (Appendices \ref{app:B} and \ref{app:C}),  and a discussion of the quantitative differences between the 2DTL and the KP equation together with the extension to asymmetric wedge configurations (Appendix \ref{app:E}).

\section{Background: The one-dimensional Toda lattice} 
\label{s:1DTL}

In this section we recall relevant background material about the 1DTL that will be used in later sections to study the 2DTL, and we obtain the amplitude of the leading-edge solitons of the DSWs produced by Riemann problems.

\subsection{Flaschka variables, periodic traveling wave solutions and Whitham modulation equations}

The one-dimensional Toda lattice \cite{toda2012theory} describes an infinite chain of particles with exponential nearest neighbor interaction, governed by the equations of motion:
\begin{equation} \label{Toda1dNewton}
    \ddot{u}_n=e^{u_{n-1}-u_n}-e^{u_n-u_{n+1}},
\end{equation}
where $u_n = u_n(t)$ denotes the displacement of the $n-$th particle from its equilibrium position, dots denote differentiation with respect to time and the mass is fixed to one. 
For our purposes, an equivalent and more convenient formulation can be obtained by introducing the strain variable $B_n:=u_{n}-u_{n+1}$ in terms of which Eq. \eqref{Toda1dNewton} becomes:
\begin{equation}\label{1dTodaStainvariabel}
    \ddot{B}_n = e^{B_{n+1}} - 2 e^{B_n} + e^{B_{n-1}}.
\end{equation} 
When the solutions of~\eqref{eq_Toda1} are independent of $y$, the problem reduces to
the one-dimensional system~\eqref{1dTodaStainvariabel}. In Section~\ref{s:wedge} we show that \eqref{eq_Toda1} also reduces to \eqref{Toda1dNewton} along any oblique direction.

When a nonlinear dispersive system is given initial data with a sharp jump, the competition between nonlinearity and dispersion prevents the formation of a classical shock and instead gives rise to an expanding oscillatory wave train connecting two distinct constant states: this is a dispersive shock wave (DSW). As time progresses, the DSW expands so that its amplitude varies continuously from small-amplitude linear waves at one edge (the trailing edge) to a large solitary wave at the other (leading edge). The quantitative description of DSWs is provided by Whitham modulation theory, which replaces the original equation with a system of hyperbolic conservation laws governing the slowly varying parameters of a modulated periodic wave train. To apply this framework to the Toda model, it is convenient to reformulate the dynamics in terms of the Flaschka variables $(a_n,b_n)$, defined as in \cite{AMBYK} by
\begin{equation} \label{FlaschkaVariables}
a_n=-\dot{u}_n,\qquad 
b_n=e^{u_{n-1}-u_{n}},
\end{equation}
in terms of which the equations of motion \eqref{Toda1dNewton} become the first-order system:
\begin{gather} 
\label{Toda1dFlaschka}
    \dot{a}_n = b_{n+1}-b_n,\qquad
    \dot{b}_n = b_n(a_n-a_{n-1})\,.
\end{gather}
The strain variable is recovered through $B_n=\log{b_{n+1}}.$

Recall that the Toda lattice is a completely integrable system. 
While Whitham modulation equations can be derived for any system admitting periodic traveling waves, integrability guarantees that they take the particularly elegant structure of a diagonal system of conservation laws in Riemann invariant form, with characteristic speeds that can be computed explicitly.
Specifically, in the Flaschka coordinates, equations \eqref{FlaschkaVariables} are equivalent to the Lax representation of Toda lattice:
\begin{equation} \label{LaxEq} 
  \dot{L}+[L,E] = 0\,,
\end{equation} 
with
\begin{equation}
L = e^\partial+a_n(t)+b_n(t)e^{-\partial},\qquad 
E=e^\partial+a_n(t),
\end{equation} 
where $[X,Y]:=XY-YX$ is the Lie bracket on $\mathfrak{a}_\infty:=\{(a_{ij})_{i,j\in \mathbb{Z}} ~|~ a_{ij}=0 ~\forall~|i-j| \gg 0 \}$ and $e^\partial \psi_n:=\psi_{n+1}$ is the unit shift operator 
(for further details, please see \cite{PhysRevB.9.1924, PhysRevB.9.1921, 1974ZhETF..67..543M}). 
In turn, complete integrability guarantees the existence of an infinite family of conserved quantities and of finite-gap solutions, which in the case of~\eqref{Toda1dFlaschka} are associated with hyperelliptic curves.
Specifically, it was shown in \cite{Krichever_Novikov_1981} that finite-gap genus-$g$ solutions are naturally associated with a hyperelliptic curve $w^2=\prod_{i=1}^{2g+2}(\eta-\lambda_i)$ where $\eta$ is the spectral parameter, $w$ the associated algebraic function defining the curve, and $\lambda_i$ its branch points which coincide with the Riemann invariants. 

In the special case $g=1$, one obtains the genus-one periodic traveling wave solutions of the 1DTL, which, together with their solitonic limit, provide the building blocks for the one-dimensional and two-dimensional DSWs studied later.
The periodic solutions of the Toda lattice, derived in \cite{Teschl}, form a four-parameter family, naturally parametrized by the four brach points $\lambda_1\le\lambda_2\le\lambda_3\le\lambda_4$ of the underlying hyperelliptic curve, which coincide with the Riemann invariants of the Whitham modulation system \eqref{WhitamTodaGenusg}. 

In terms of the Flaschka variables \eqref{FlaschkaVariables}, these solutions are given by
\vspace*{-1ex}
\begin{subequations}\label{PeriodicSolutions}
\begin{align} 
    &b_{n+1}(t)=\tfrac{1}{8}\sum_{i=1}^4 \lambda_i^2+2\hat{R}_n(t)-\mu^2_n(t)-\tfrac{1}{4} a_n^2(t),\\
    &a_n(t)=\tfrac{1}{2}\sum_{i=1}^4 \lambda_i-2\mu_n(t),
    \label{e:a_n(t)_periodic}
\end{align}
\end{subequations}
where  the various quantities appearing in~\eqref{PeriodicSolutions} are as follows:
\bse
\begin{align}
    &\mu_n(t) = \frac{\lambda_2}{2} \frac{1-(\lambda_1/\lambda_2)W\,\sn^2(Z^{\mathrm{1d}}_n(t),m)}{1-W\, \sn^2(Z^{\mathrm{1d}}_n(t),m)},\qquad
    Z^{\mathrm{1d}}_n(t) = 2nF(\Xi,m)+\omega_{1d} t+Z_0,\qquad 
    \hat{R}_n(t)=-\sigma_n(t)\sqrt{P(\mu_n(t))},\qquad
    \\ 
    & P(z) =\Pi_{i=1}^4\big(z-\tfrac12 \lambda^i \big),\qquad
    m = \frac{(\lambda_3-\lambda_2)(\lambda_4-\lambda_1)}{(\lambda_4-\lambda_2)(\lambda_3-\lambda_1)}, \qquad
    \omega_{\mathrm{1d}} = \tfrac{1}{2}\sqrt{(\lambda_4-\lambda_2)(\lambda_3-\lambda_1)},\\
    &\Xi = \sqrt{\frac{\lambda_4-\lambda_2}{\lambda_4-\lambda_1}},\qquad
    W = \frac{\lambda_3-\lambda_2}{\lambda_3-\lambda_1}\,.
\end{align}
\ese

Here, $\text{sn}(z,m)$ is the Jacobi elliptic sine with parameter $0<m<1$, $F(z,m)$ is the inverse of $\text{sn}(z,m)$, $Z_0$ is a constant translation parameter, $\mu_n(t)$ is the Dirichlet eigenvalue of the scattering problem  for the Toda lattice, and $\sigma_n(t)=\pm 1$ is the sign associated with $\mu_n(t)$.

It was shown in \cite{AMBYK} that the $2g+2$ branch points $\lambda_1,\dots,\lambda_{2g+2}$ of the hyperelliptic curve above play the role of Riemann invariants for the corresponding Whitham modulation equations: 
\begin{equation}\label{WhitamTodaGenusg}
(\lambda_i)_t-s(\lambda_i) (\lambda_i)_x=0,\qquad  i=1,...,2g+2\,,
\end{equation}
where $s(\lambda_i)$ are the characteristic speeds, a general expression for which was given in \cite{AMBYK}. 
In particular, the characteristic speeds $s(\lambda_i)$ for the genus-one Whitham system are given by~\cite{AMBYK}
\begin{equation}
        s(\lambda)=\frac{(\lambda)^2-\frac{1}{2}\sigma_1\lambda+\gamma_1}{2(\lambda-\alpha_1)}\,,
\end{equation} 
with
\begin{equation}
       \alpha_1= \frac{I_1^1}{I_1^0},\qquad
       \gamma_1=-\frac{I_1^2+\gamma_0I_1^1}{I_1^0},\qquad
       \sigma_1=\sum_{i=1}^4 \lambda_i\, 
\end{equation}
and where the integrals $I_1^k$ are 
\begin{equation}
          I_1^k(\lambda_1,\lambda_2,\lambda_3,\lambda_4)=\int_0^\frac{\pi}{2}\frac{(\lambda_3-(\lambda_3-\lambda_2)\cos^2{\theta})^k}{\sqrt{((\lambda_2-\lambda_1)+(\lambda_3-\lambda_2)\sin^2{\theta})((\lambda_4-\lambda_2)-(\lambda_3-\lambda_2)\sin^2{\theta})}}d\theta\,. 
\end{equation}
Explicit expressions for the resulting speeds for the case $g=1$ were obtained in \cite{GBCCPK}.

\subsection{Riemann problems for the one-dimensional Toda lattice}
\label{s:Riemann1DTL}

In preparation for the study of Riemann problems for the two-dimensional Toda lattice that
will be carried out in later sections,
we now briefly turn our attention to the Riemann problem for \eqref{Toda1dFlaschka}, following the approach of \cite{AMBYK,GBCCPK}. 
We consider piecewise-constant initial data of the form:
\begin{equation}
\label{RiemannProblemFlaschka}
    a_n(0)=a_-\chi_{\{n\le0\}}+a_+\chi_{\{n>0\}},\qquad 
    b_n(0)=b_- \chi_{\{n\le0\}}+b_+\chi_{\{n>0\}},
\end{equation}
with $a_\pm\in \mathbb{R}$, $b_\pm >0$ and where $\chi_I$ is the characteristic function
of a set~$I\subset \mathbb{R}$, i.e.,  $\chi_I(z)=1$ if $z\in I$ and $\chi_I(z)=0$ otherwise. 
In order to express these initial data in terms of the Riemann invariants of the Whitham system, we recall that for a genus-zero solution the Riemann invariants $\lambda_1$ and $\lambda_2$ are related to the Flaschka variables \eqref{FlaschkaVariables} by
\cite{AMBYK}
\begin{equation}
    a_n(0) = \tfrac12{(\lambda_1+\lambda_2)}\, \qquad
     b_n(0) = \tfrac1{16}{(\lambda_2-\lambda_1)^2}\,.
\end{equation}
Inverting these relations, the step initial data \eqref{RiemannProblemFlaschka} correspond to a step function in the Riemann invariant space:
\begin{equation} \label{RiemannProblemLambda}
    \lambda_1=\lambda^-_1\chi_{\{n\le0\}}+\lambda^+_1\chi_{\{n>0\}},\qquad
    \lambda_2=\lambda^-_2\chi_{\{n\le0\}}+\lambda^+_2\chi_{\{n>0\}},
\end{equation}
where $\lambda^\pm_1:=a_\pm -2\sqrt{b_\pm}$ and $\lambda^\pm_2:=a_\pm+2\sqrt{b_\pm}$. 
We define the intervals $L_k:=[\min{(\lambda_k^-,\lambda_k^+)},\max{(\lambda_k^-,\lambda_k^+)}]$ for $k=1,2$, which measure the size of the jump in each Riemann invariant. It is important to note that the Toda lattice is a bidirectional system: generic step-like initial data of the form \eqref{RiemannProblemFlaschka} produce two counter-propagating DSWs, one traveling to the left and one to the right.  Depending on the relative ordering of the four values $\lambda_1^\pm,\lambda_2^\pm$, three qualitatively distinct dynamical scenarios arise:  
(i) a rarefaction wave, in which the data are non-increasing and a smooth global solution exists for all time; 
(ii) a genus-one DSW, arising when $L_1$ and $L_2$ overlap and the jump leads to shock formation regularized by a single modulated wave train; and 
(iii) a genus-two DSW, arising when $L_1$ and $L_2$ are disjoint and the regularization requires a genus-two embedding.
Next we briefly discuss the first two scenarios; 
the third one is discussed in Appendix \ref{app:A}.

\paragraph{Rarefaction waves.} 
When $\lambda_1^-\ge \lambda_1^+$ and $\lambda_2^-\ge \lambda_2^+$, the initial data \eqref{RiemannProblemLambda} are non-increasing in terms of the Whitham system \eqref{WhitamTodaGenusg}. In this case, no wave breaking occurs and a global smooth solution exists for all the time: The Riemann invariants spread out into a rarefaction wave, connecting the left and the right constant states via a smooth monotone transition. 
(For a detailed treatment of rarefaction waves in the context  of hyperbolic conservation laws, we refer the reader to \cite{Dafermos:1315649}.)

\paragraph{Genus-one regularization.} 
This scenario occurs when the intervals $L_1$ and $L_2$ overlap, i.e., when $\lambda_1^- \le \lambda_2^- \le \lambda_1^+ \le \lambda_2^+$. In this case, the initial data lead to a shock formation, which is regularized by a genus-one finite gap solution. Denoting  $(\lambda_1^-,\lambda_2^-,\lambda_1^+,\lambda_2^+) =:(\lambda_1^o,\lambda_2^o,\lambda_3^o,\lambda_4^o)$ for convenience, the solution is obtained by embedding the initial data into the degenerate genus-one Whitham system with initial condition
\begin{equation}
  \lambda_1(x,0)=\lambda_1^o,\qquad 
  \lambda_2(x,0)=\begin{cases}
        \lambda_2^o, \ \ x\le0, \\ \lambda_1^o, \ \ x>0,
    \end{cases}\qquad
  \lambda_3(x,0)= \begin{cases}
        \lambda_4^o, \ \ x\le0, \\ \lambda_3^o, \ \ x>0, 
    \end{cases}\qquad
  \lambda_4(x,0)=\lambda_4^o\,.
\end{equation}
Since these initial conditions are non-increasing, the genus-one Whitham system \eqref{WhitamTodaGenusg}  admits a global solution in $\mathbb{R}\times [0,\infty)$ given by 
\begin{equation}
    \lambda_1(x,t)=\lambda_1^o, \qquad
    \lambda_2(x,t)= \begin{cases}
            \lambda_2^o, \ \ x\le s_2^-t, \\
            \lambda_1^o,\ \ x>s_2^+t,
        \end{cases}\qquad
    \lambda_3(x,t)= \begin{cases}
            \lambda_4^o, \ \ x\le s_3^-t, \\
            \lambda_3^o,\ \ x>s_3^+t,
        \end{cases}\qquad
    \lambda_4(x,t)=\lambda_4^o,
\end{equation}
in which $\lambda_1(x,t)$ and $\lambda_4(x,t)$ remain constant for all time, while $\lambda_2$ and $\lambda_3$ vary smoothly in the expanding regions $s_2^-t < x < s_2^+t$ and $s_3^- t <x < s_3^+ t$ respectively, each connecting its two constant states. 
Outside these two regions, the solution remains equal to its initial, constant value. 
The result is two DSWs that expand between a small-amplitude linear wave and a soliton edge whose speeds are determined by $s_i$.
In particular, the speed of the right-going soliton edge is given by
\begin{equation} \label{CharSpeed}
             s_3^-=\frac{\sqrt{G(4\sqrt{b_1}+G)}}{4\log{\left( \sqrt{\frac{G}{4\sqrt{b_1}}}+\sqrt{1+\frac{G}{4\sqrt{b_1}}}\right)}},
         \end{equation}
with $G=a_2-a_1+2\sqrt{b_2}-2\sqrt{b_1}$. 
Similar expressions can be obtained for the other speeds. 
Note that the expression \eqref{CharSpeed} for $s_3^-$ is consistent with the analysis of \cite{GBCCPK} upon imposing $a_2=-a_1=2c$ and $b_1=b_2=1.$

\paragraph{Amplitude of the leading-edge solitons in the DSWs.}
We now obtain a key analytical ingredient that will be used for the characterization of two-dimensional wedge problems in Section~\ref{s:wedge}.
Namely, we relate the amplitude of the leading-edge soliton of each DSW to the initial jump in~\eqref{RiemannProblemFlaschka}.

Since the Toda lattice is bidirectional, the Riemann problem \eqref{RiemannProblemFlaschka} generically produces two counter-propagating DSWs. The amplitude of the leading-edge soliton of each DSW is obtained by combining the explicit expression \eqref{PeriodicSolutions}  with the inverse relation \eqref{RiemannProblemLambda}, and taking the appropriate solitonic limit in which one pair of brach points coalesces. 
In the rest of this section, superscripts $L$ and $R$ refer to the left- and right-going DSW, respectively, while subscripts $a$ and $b$ refer to the corresponding Flaschka variable. For the left-going DSW one finds 
\begin{align}
\label{eq: SolitonAmplitudeLeftRiemannProblem}
    A_a^L=a_++2\sqrt{b_+}-a_--2\sqrt{b_-},\qquad 
    A_b^L=  \tfrac14{(a_+-a_-+2\sqrt{b_+})^2}-b_-,
\end{align}
where we used the fact that for $n<0$ one has $\lambda_2=\lambda_2^o$ and $\lambda_3=\lambda_4^o \equiv \lambda_4$. 
Similarly, the leading edge propagating from the right has amplitudes
\begin{align}
\label{eq: SolitonAmplitudeRightRiemannProblem}
    A_a^R=a_+-2\sqrt{b_+}-a_-+2\sqrt{b_-},\qquad
    A_b^R=\tfrac14{(a_+-a_-+2\sqrt{b_-})^2}-b_+,
\end{align}
where we used the fact that for $n>0$ one has $\lambda_2=\lambda_1^o\equiv \lambda_1$ and $\lambda_3=\lambda_3^o$. 
The above expressions follow from the extremal values of the Flaschka variables,
whose derivation is given in Appendix~\ref{app:A}. 
In particular, one has
\bse
\begin{align}
    a^L_{\mathrm{max}} &=-2\sqrt{b_-}+a_++2\sqrt{b_+},\qquad
    a^L_{\mathrm{min}} =a_-,\qquad
    a^R_{\mathrm{max}} =-2\sqrt{b_+}-a_-+2\sqrt{b_-},\qquad
    a^R_{\mathrm{min}} =a_+,
\\
    b_{\mathrm{max}}^L &= \tfrac14{(a_+-a_-+2\sqrt{b_+})^2},\qquad
    b_{\mathrm{min}}^L = b_-,\qquad
    b^R_{\mathrm{max}} = \tfrac14{(a_+-a_-+2\sqrt{b_-})^2},\qquad
    b_{\mathrm{min}}^R = b_+.
\end{align}
\ese
The above amplitude formulae are the key ingredient for connecting the one-dimensional Whitham theory to the two-dimensional wedge problem that will be analyzed in Section~\ref{s:wedge}.
Indeed, as will be shown there, the leading-edge soliton of each oblique DSW generated by the two-dimensional dynamics via~\eqref{eq_Toda1} is completely characterized by these expressions, evaluated with the appropriate initial data.

\section{The two-dimensional Toda lattice and its periodic and soliton solutions} 
\label{s:2DTL_background}

In this section we present several results that will be needed for the analytical characterization of the wedge problem in Section~\ref{s:wedge}.
Specifically, we recall the soliton solutions of the 2DTL and we derive a new class of periodic traveling wave solutions that generalizes the genus-one solutions of the 1DTL of Section~\ref{s:1DTL} to the two-dimensional setting. 

\subsection{Casorati determinants, one-soliton solutions and their physical parameters} 
\label{s:OneSol}

Recall from Section~\ref{s:intro} that the 2DTL can be written in light cone coordinates $(r,s)$ and in terms of the dependent variable $V_n = V_n(r,s)$
as equation~\eqref{e:Toda2D_0}.
Also recall that, passing to the laboratory frame coordinates via the transformation 
$r=\frac12(t+y)$ and $s=\frac12(t-y)$ and introducing the strain variable $B_n=\log{(1+V_n)}$, 
the system becomes Eq. \eqref{eq_Toda1},
which is the formulation we use throughout this paper. 
The 2DTL is integrable \cite{doi:10.1142/9789814317344_0050}, a property that, as in the one-dimensional case, is the key ingredient enabling the exact analytical treatment of the wedge problem: it guarantees the existence  of multi-soliton solutions and of the tau-function theory recalled below. The 2DTL hierarchy was calculated in \cite{konopelchenko1992new,Zhang2002}.

A general framework for constructing exact solutions of 2DTL is provided by the tau function formalism
(e.g., see \cite{HirotaOhtaSatsuma_PTPS}). 
The field $V_n$ is expressed in terms of the tau functions $\tau_n$ via:
\begin{equation}
    V_n(r,s)=\frac{\partial^2} {\partial s \partial r}\log\tau_n(r,s)\,.
\end{equation}
In the laboratory frame, this relation becomes
\begin{equation} \label{Bsoliton}
    B_n(y,t)=\log \left( \frac{\tau_{n+1}\tau_{n-1}}{\tau_n^2} \right),
\end{equation}
where in the right hand side of \eqref{Bsoliton} the variables $r$ and $s$ are expressed in terms of $y$ and $t$ via \eqref{e:labframecoordinates}.
Explicit solutions are obtained by choosing $\tau_n$ as the Casorati determinant~\cite{HirotaOhtaSatsuma_PTPS}
\begin{equation}
    \tau_n^{N}(r,s)= \det \begin{pmatrix}
        f_n^{(1)} & \cdots  & f_{n+N-1}^{(1)} \\ \vdots & \ddots & \vdots  \\
         f_n^{(N)} & \cdots  & f_{n+N-1}^{(N)} 
    \end{pmatrix}
\end{equation}
where $f_n^{(1)},...,f_n^{(N)}$ are $N$ linearly independent solutions of the linear system
\begin{equation}
        \frac{\partial f_n^{(k)}}{\partial r}=f^{(k)}_{n+1},\qquad 
         \frac{\partial f_n^{(k)}}{\partial s}=-f^{(k)}_{n-1},\qquad 
         1\le k \le N. 
\end{equation}
It was shown in~\cite{BW2010} that a large variety of multi-soliton solutions of the 2DTL can be obtained in this way, including elastic and inelastic, non-resonant, partially resonant and fully resonant ones.

The one- and two-soliton solution recalled in the following subsections are obtained from this construction with $N=1$ and $N=2$ respectively.

Specifically, the one-soliton solution of \eqref{e:Toda2D_0} is obtained by choosing $N=1$ and $f_n^{(1)}=e^{\theta_1}+e^{\theta_2}$, 
where the exponential phases $\theta_1$ and $\theta_2$ given by
\begin{equation}
    \theta_j(n,r,s) = n\log{p_j} + p_j r -\frac{1}{p_j} s + \theta_{j,0},\qquad j=1,2,
\end{equation}
where $0<p_1<p_2$ are positive real parameters and $\theta_{j,0}$ is a
real phase parameter. Introducing the logarithmic parameters $P_i:=\log{p_i}$, so that $p_i=e^{P_i}$ and $P_1<P_2$ and transforming to laboratory frame coordinates (see Appendix \ref{app:B} for the full derivation), the exponential phases become $\tilde \theta_j(n,y,t):=\theta\left ( n, \frac{y+t}{2}, \frac{t-y}{2}\right )$ given by  the following
\begin{equation*}
  \tilde \theta_j(n,y,t)=nP_j+y\cosh{P_j}+t\sinh{P_j}+\theta_{j,0}  
\end{equation*}
where for simplicity we will denote this as $\theta_j(n,y,t)$. The one-soliton solution then takes the form

\begin{equation} \label{SolitonHyperbolic}
V_n(y,t) = \sinh^2 \left[ \tfrac12(P_2-P_1)\right ]
    \sech^2 \left[ \tfrac12 \big( n(P_2-P_1)+y(\cosh{P_2}-\cosh{P_1})+t(\sinh{P_2}-\sinh{P_1}) + t_0+y_0 \big) \right] \,,
\end{equation}
where again for simplicity we overload the notation by setting $V_n(y,t) = V_n(r(y,t),s(y,t))$.
This solution describes a line soliton localized along a line in the $(n,y)$-plane,that translates rigidly as  $t$ increases. 

It is naturally parameterized by two physically meaningful quantities: the amplitude $A$, given by
\begin{equation} \label{Amplitude}
    A_V(P_1,P_2):= \sinh^2 \left( \tfrac12(P_2-P_1)\right),
\end{equation}
and the slope $q$ in the $ny$-plane, given by
\begin{equation} \label{Slope}
q(P_1,P_2):=\frac{\cosh{P_2}-\cosh{P_1}}{P_2-P_1}\,.
\end{equation}
\begin{remark}
In  the limit $m\rightarrow 1$, the genus-one periodic solution degenerates into a  line soliton. Since $V_n=b_n-1$ one has
\begin{equation*}
    A_V= \sup{V}-\inf{V}=(b_{\mathrm{max}}-1)-(b_{\mathrm{min}}-1)=b_{\mathrm{max}}-b_{\mathrm{min}}
\end{equation*}
which is precisely the amplitudes \eqref{eq: SolitonAmplitudeLeftRiemannProblem}  and \eqref{eq: SolitonAmplitudeRightRiemannProblem} used in Section \ref{s:1DTL}. Therefore,  the quantity $A_V$ appearing in \eqref{Amplitude} coincides with the leading-edge soliton amplitude introduced there.  Notice in addition that the soliton solutions have zero background and so $\inf{V}=0$.  
\end{remark}
\begin{remark}
Although we refer to $q$ as the slope parameter, it is in fact an orientation parameter. The corresponding line in the $(n,y)-$ plane has geometric slope $1/q$. We retain the terminology “slope parameter” since $q$ naturally parametrizes the line orientation and coincides with the parameter defining the wedge geometry \eqref{InitialValue2dToda}.
\end{remark}                                                                                            
Note that, in the laboratory coordinates, the slope $q $ can take any real value depending on the choice of $P_1$ and $P_2$, similarly to the KP equation.
This is in contrast  with the parametrization used in \cite{BW2010} where the soliton direction $d_{i,j}= (\log{p_i}-\log{p_j})/(p_i-p_j)$ was always positive for $p_i \neq p_j$. 
Inverting the relation \eqref{Amplitude} and \eqref{Slope} yields (see Appendix \ref{app:B} for details)
\begin{equation}\label{InverseRelation}
    P_1=\gamma - \alpha,\qquad 
    P_2=\alpha + \gamma,
\end{equation}
with 
\begin{equation}\label{e:alphagammafromAq}
\alpha=\arcsinh\big(\sqrt{A_V}\big),\qquad
\gamma = \arcsinh\left( q\frac{\arcsinh\big(\sqrt{A_V}\big)}{\sqrt{A_V}}\right),
\end{equation}
and where again we set $P_2>P_1$ without loss of generality (since \eqref{SolitonHyperbolic} is symmetric under the exchange $P_1 \leftrightarrow P_2$). 
These inversion formulae are a key ingredient for the wedge problem analysis of Section~\ref{s:wedge}.  
The amplitude and slope  of the oblique leading-edge solitons formed in the dynamics are determined from the one-dimensional Whitham theory of Section~\ref{s:Riemann1DTL}, and then converted into the parameters $P_1,P_2$ needed for the two-soliton analysis.

It is important to note in addition that the line solitons described by \eqref{SolitonHyperbolic} propagate in the negative $n$-direction for $t>0$. 
Since the 2DTL equation is invariant under the time-reversal symmetry $t\mapsto -t$, however,
one can write down a line soliton with the same amplitude and slope propagating in the positive $n$-direction for $t>0$. 
More generally, therefore, one may write 
\begin{equation} 
\label{SolitonHyperbolicBid}
V_n(y,t) = \sinh^2 \left[ \tfrac12(P_2-P_1)\right ]
    \sech^2 \left[ \tfrac12 \big( n(P_2-P_1)+y(\cosh{P_2}-\cosh{P_1})+\sigma t(\sinh{P_2}-\sinh{P_1}) + t_0+y_0 \big) \right]\,, 
\end{equation}
with $\sigma = \pm1$, 
where $\sigma=1$ corresponds to propagation in the negative $n$-direction, while $\sigma=-1$ corresponds to propagation in the positive $n$-direction.

\begin{remark}[Extension to arbitrary constant backgrounds]
The family of line soliton solutions~\eqref{SolitonHyperbolicBid} is normalized so that 
$V_n \rightarrow 0$ as $n\to\pm\infty$, corresponding to a zero background state, as indicated above. 
More generally, the 2DTL admits traveling waves on arbitrary constant backgrounds through the following invariance property.
If $\hat{V}_n(y,t)$ is a solution of \eqref{2dTodaLog}, then for every constant $B_0\in \mathbb{R}$,
\begin{equation}\label{SymmetryBackground}
    V_n(y,t)= e^{B_0}\hat{V}_n(e^{B_0/2}y,e^{B_0/2}t) + e^{B_0}-1
\end{equation}
is also an exact solution of \eqref{2dTodaLog}.
\end{remark}
Indeed, note that 
$V_{n+1}-2V_n+V_{n-1}=e^{B_0}( \hat V_{n+1}-2\hat V_n+\hat V_{n-1} )$.
Moreover,
$\log{(1+V_n)}=\log\big( 1+ \hat{V}_n(e^{B_0/2}y,e^{B_0/2}t)\big)+B_0$,
which immediately yields
\begin{equation}
   \left ( \frac{\partial^2}{\partial t^2}-\frac{\partial^2}{\partial y^2}  \right ) \log{(1+V_n)}=e^{B_0} \left( \frac{\partial^2}{\partial \hat t^2}-\frac{\partial^2}{\partial \hat y^2}\right )\log{(1+\hat V_n)},
\end{equation}
where $\hat t=e^{B_0/2}t$ and $\hat y=e^{B_0/2}y$.
Hence both sides of Eq.~\eqref{2dTodaLog} acquire the same multiplicative factor $e^{B_0}$, proving the claim.
Consequently, every line soliton obtained above generates a corresponding family of exact  solutions on the arbitrary  constant background $e^{B_0}-1$.

\subsection{Two-soliton solutions and soliton interactions} \label{s:2Sol}

The two-soliton solutions of the 2DTL are obtained by choosing $N=2$ and $M=4$ in the Casorati determinant construction, where $M$ denotes the number of exponential phases entering  the linear combination 
$f_n^{(k)}= \sum_{m = 1}^M C_{k,m} e^{\theta_m}$ for $k=1,2$,
with four parameters $P_1<P_2<P_3<P_4$ in logarithmic variables
and where the $k\times m$ matrix $C = (C_{k,m})$ is referred to as the coefficient matrix \cite{BW2010}.
In laboratory frame coordinates, the exponential phases are as follows:
\begin{equation} \label{phase}
   \theta_j(n,y,t) = y\cosh{P_j}+nP_j+\sigma t\sinh{P_j}+\theta_{j,0},\qquad j=1,2,3,4\,,
\end{equation}
where again with some abuse of notation we set $\theta_j(n,y,t)=\theta_j(n,r(y,t),s(y,t))$
for $j=1,\dots,4$.
The simplest choice is simply $f_n^{(k)} = e^{\theta_{2k-1}} + e^{\theta_{2k}}$ for $k=1,2$,
to which we refer  as the ordinary solutions, following the terminology of \cite{BW2010};
more general two-soliton solutions are obtained by other choices of the coefficient matrix.
By Theorem~3.8 of \cite{BW2010}, any such solution has exactly two asymptotic line solitons as $n\rightarrow -\infty $(referred to as incoming) and two as $n \rightarrow +\infty$ (outgoing). Each  of these four asymptotic solitons is identified by an index pair $[i,j]$ with $i<j$ and $i,j\in \{1,2,3,4\}$, indicating which two of the four ordered phase parameters $P_1<P_2<P_3<P_4$ determine its amplitude and slope via the formulae of  Section~\ref{s:OneSol}. The index pair is thus a bookkeeping label for a pair of phase parameters, not a name attached to a specific physical soliton: different asymptotic solitons (incoming, outgoing, or emerging from an interaction) can share one of the two indices, and the same physical soliton can carry different index pairs in different regimes. This asymptotic classification is unrelated to the sign $\sigma$ in \eqref{SolitonHyperbolicBid}, which instead labels the direction along the soliton line in which it translates as $t$ increases. 
The elastic two-soliton solutions are those for which the pair of index pairs identifying the incoming one-solitons coincides with that identifying the outgoing  ones; in this case the two solitons pass through each other, acquiring only a finite phase shift, with no permanent change in slope or amplitude. 
Like in the KP equation,  the interaction of elastic two-soliton solutions of the 2DTL is uniquely determined by the degree of overlap between the two index pairs $[i_1,j_1]$ and $[i_2,j_2]$. 

Following~\cite{BW2010}, and taking $i_1 < i_2$ without loss of generality,
three qualitatively distinct scenarios arise:
    \begin{description}
     \item[Resonant interaction:] 
        These solutions manifest when $i_2<j_1<j_2$. 
        In this case, the index pairs are $[1,3]$ and $[2,4]$, both as $n\to -\infty$ and as $n \to\infty$. Unlike the ordinary case, the interaction produces a time-dependent pattern: the two solitons interact via a quartet of Y-shaped junctions, each satisfying Miles' resonance condition, giving rise to a third ``stem'' soliton in each case.
        
        \item[Ordinary interaction:] 
        These solutions manifest when $j_1<i_2$. 
        In this case, the index pairs in the ordered list $P_1<P_2<P_3<P_4$ are $[1,2]$ and $[3,4]$, both as $n\to -\infty$ and as $n \to\infty$. 
        These are traveling wave solutions of the 2DTL.
        At any fixed time $t$, the interaction manifests as an X-shaped pattern in the $ny$-plane. Note that the simplest coefficient matrix choice $f_n^{(k)} = e^{\theta_{2k-1}} + e^{\theta_{2k}}$ for $k=1,2$ introduced before falls precisely into this category, which  is why we refer to both as `ordinary'.
        
        \item[Asymmetric interaction:] 
        These solutions manifest when $j_2<j_1$.
        In this case, the index pairs are $[1,4]$ and $[2,3]$, both as $n\to -\infty$ and as $n \to\infty$. 
        As in the ordinary case, the interaction produces an X-shaped pattern at any fixed time. 
        Unlike the case of ordinary interactions however, in this case the asymptotic line solitons undergo nontrivial position shifts as a result of the interaction.
        This case does not arise in the wedge problem considered in the next section.  
        We refer the reader to \cite{BW2010} for a further characterization.
    \end{description}
\begin{figure}[t!]
    \centering
    \includegraphics[trim= 0 10 0 10,clip,width=\textwidth]{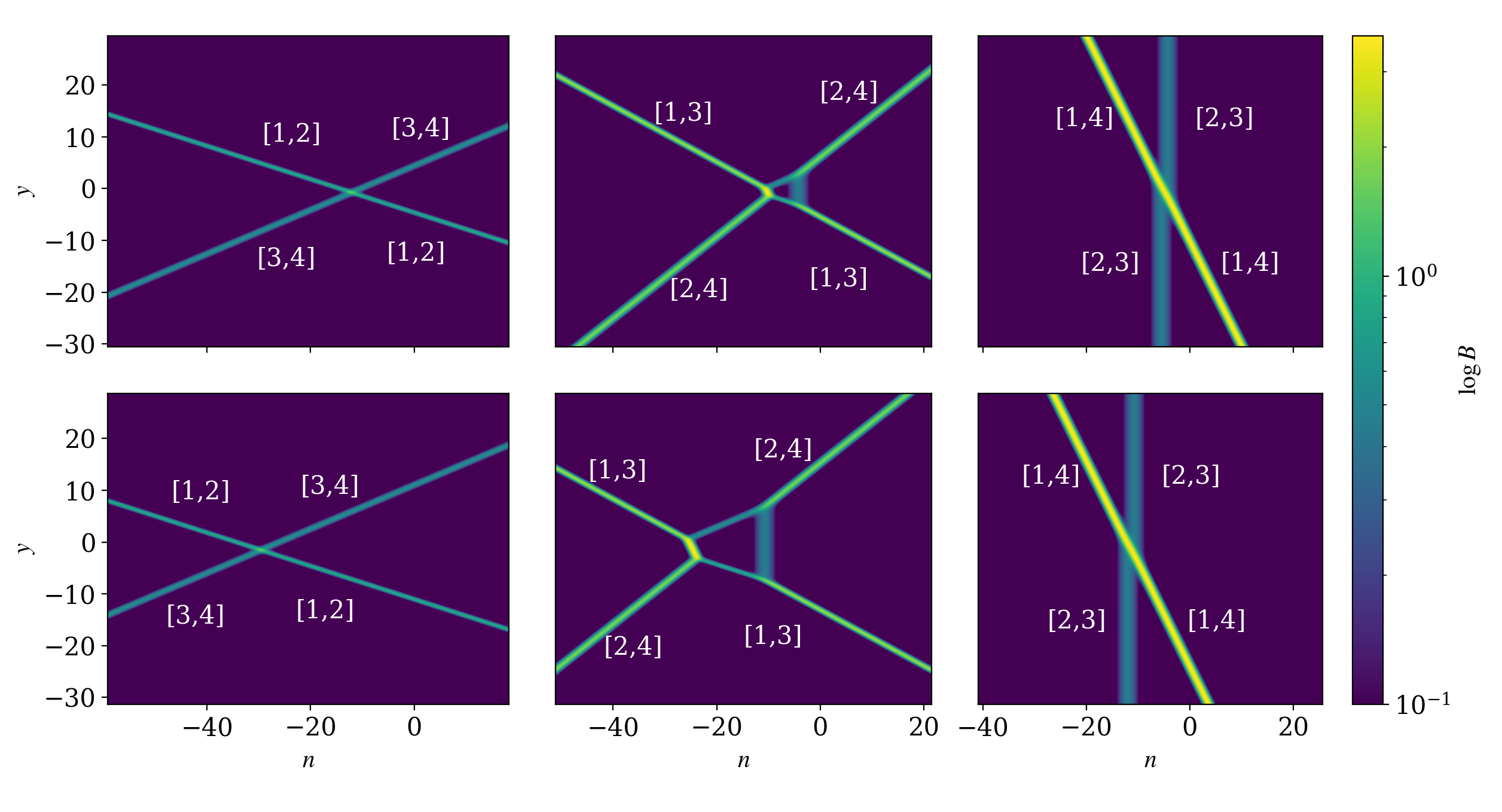}
    \caption{The three types of elastic two-soliton solutions of the 2DTL (index pairs as labeled), with phase parameters $(P_1,P_2,P_3,P_4)=(-2.3,-0.69,0.69,2.71)$, as in \cite{BW2010}. 
    Columns, from left to right: ordinary, resonant, and asymmetric interactions. 
    Rows: earlier and later snapshots (top row: $t = 4$; bottom row: $t = 10$), 
    illustrating the temporal evolution of each solution type. 
    In the resonant interaction (center column), the pair of Y-junctions mediating the stem visibly separates further between the two snapshots.} 
    \label{fig:Interactions}
\end{figure}
Anticipating the results of Section \ref{s:wedge}, we note that resonant interactions of this kind will be responsible for the stem DSW formation in subcritical type II wedges (Section \ref{s:SubII}, Fig. \ref{fig:SubII}), while ordinary interactions will produce the localized peak in supercritical type II wedges (Section \ref{s:SuperII}, Fig. \ref{fig:SuperII}). 
Asymmetric interactions, by contrast, do not arise in the wedge problem considered in this work, as shown in Remark \ref{Obs:Asymmetric} below; we include them here only for completeness of the classification, following \cite{BW2010}. Representative examples of all three interaction types for generic soliton parameters are shown in
Fig.~\ref{fig:Interactions}. We also note that type~I wedges arise from the mirror-image case in which the sign of the parameter $\gamma$ is reversed.
However, since the two oblique DSWs diverge from each other rather than converge in that case, the resulting phenomenology is qualitatively different, as shown in Section \ref{TypeI}.

\subsection{Periodic traveling wave solutions of the two-dimensional Toda lattice}\label{s: PeriodicSolutions2d}

The one-dimensional Toda lattice possesses a family of genus-one periodic traveling waves \eqref{PeriodicSolutions} which form the local building blocks of dispersive shock waves. We now show that the two-dimensional Toda lattice admits a corresponding family of exact oblique periodic traveling waves. We should mention that other classes of periodic solutions of the 2DTL were recently obtained in \cite{duarte2021class, ting2016periodic}.
However, the class of solutions derived here establishes the connection between the finite-gap solutions of the one-dimensional Toda lattice and the two-dimensional wedge problem studied later in Section~\ref{s:wedge}.
 
The genus-one periodic solutions \eqref{PeriodicSolutions} recalled in Section \ref{s:1DTL} are expressed in terms of the elliptic phase $Z_n^{\mathrm{1d}}(t)=2nF(\Xi,m)+\omega_{\mathrm{1d}}t+Z_0$. Following the structure of the one-dimensional finite-gap solution, we extend the elliptic phase to 
\begin{equation} \label{TravelingVariable2d}
    Z_n(y,t)=2nF(\Xi,m)+\ell y+\omega t+Z_0
\end{equation}
where $\ell$ is the transverse wave parameter. Notice that the discrete phase shifts acts as  $Z_{n\pm1}(y,t)=Z_n\pm 2 F(\Xi,m)$ as for $Z_n^{\mathrm{1d}}(t)$. We now consider solutions of Eq. \eqref{2dTodaLog} of the form $V_n(y,t)=V(Z_n(y,t))$ , using the chain rule and substituting into the equation yields
\begin{equation}\label{DDE2d*}
    (\omega^2-\ell^2)\frac{d^2}{dZ^2}\log{(1+V)}=V(Z+2F)-2V(Z)+V(Z-2F)
\end{equation}
Equation \eqref{DDE2d*} is a nonlinear advance-delay differential equation for the traveling-wave profile. Solving it directly appears highly nontrivial.

On the other hand, considering the formulation of the 1DTL in terms of the field $V_n$ defined through the Flaschka variable $a_n$ via
\begin{equation}\label{trasformation}
V_n=4a_n^2-1,
\end{equation}
in terms of which the 1DTL takes the form
\begin{equation} \label{1dTodalog}
    \frac{d^2}{dt^2}\log(1+V_n)= V_{n+1} - 2V_n + V_{n-1}\,,
\end{equation}
which is the $y-$independent reduction of the 2DTL \eqref{e:Toda2D_0} in the laboratory frame coordinates. Substituting in Eq. \eqref{1dTodalog} the one-dimensional phase $Z_n^{\mathrm{1d}}(t)$ one obtains another highly nontrivial nonlinear delay equation
\begin{equation}\label{DDE1d*}
    \omega^2_{\mathrm{1d}}\frac{d^2}{dZ^2}\log{(1+V)}=V(Z+2F)-2V(Z)+V(Z-2F)
\end{equation}
Notice that the reduced equation governing the two-dimensional traveling waves \eqref{DDE2d*} coincides exactly with the one-dimensional traveling wave equation \eqref{DDE1d*} after the identification
\begin{equation}\label{correction*}
   \omega^2-\ell^2  =\omega^2_{\mathrm{1d}}.
\end{equation}
Consequently, every genus one finite gap solution constructed in Section \ref{s:1DTL} generates an exact oblique periodic traveling-wave solution of the two-dimensional Toda lattice. Equivalently, the two-dimensional periodic traveling waves are obtained by replacing the one-dimensional phase $Z_n^{\rm{1d}}(t)$ with the oblique phase $Z_n(y,t)$, while leaving the finite-gap profile and all spectral parameters unchanged. Equation \eqref{trasformation} provides an explicit family of exact periodic traveling-wave solutions of the two-dimensional Toda lattice given by  $a_n(t) = a(Z_n(y,t))$ and $V_n(y,t) = 4a_n^2-1$,
with 
$Z_n(y,t) = 2nF(\Xi,m)+\ell y+\omega t+Z_0$. The situation is analogous to what happens for the KP equation (e.g., see \cite{RSPA2017v473p20160695}), except for the fact that, in that case, the correction \eqref{correction*}  affects $\omega$ rather than $\omega^2$, owing to the fact that KP is first-order in time while 2DTL is second-order.

Since the numerical simulations throughout this work are performed in the strain variable $B_n$, it is useful to express the above family in those coordinates as well, specifically one has
\begin{equation*}
    B_n(y,t)=\log{\left (4a_n^2(Z_n(y,t))\right )}
\end{equation*}
\begin{remark}
  The previous derivation shows that every oblique traveling wave of the 2DTL satisfies exactly the same nonlinear advance-delay equation as the one-dimensional traveling wave reduction of the 1DTL. Consequently, the family of oblique periodic traveling waves of the two-dimensional Toda lattice is precisely the genus-one finite-gap family constructed in Section \ref{s:1DTL}, with the only modification being the extension of the elliptic phase from $Z_n^{\mathrm{1d}}(t)$ to $Z_n(y,t)$. This observation provides the mathematical justification for employing the one-dimensional Whitham theory in the study of the two-dimensional wedge problem presented in  the next section. Since the Whitham modulation equations are derived from this family of periodic traveling waves, and this family is unchanged by the above reduction, the modulation of each oblique branch of the wedge is governed locally by the same one-dimensional periodic solutions derived in Section \ref{s:1DTL}. The transverse direction influences only the evolution of the phase through the relation \eqref{correction*} while leaving the underlying periodic profile unchanged.
\end{remark}
\begin{remark}
    The elliptic $Z_n$ introduced above is the natural variable arising from the finite-gap construction of the one-dimensional Toda lattice \cite{Teschl} and is therefore not the standard physical phase. Since the periodic solutions \eqref{PeriodicSolutions} depend on $\sn^2(Z,m)$, whose period is two times  the complete elliptic integral of the first kind $2K(m)$, one may introduce the rescaled phase
    \begin{equation*}
        \Phi=\frac{\pi}{K(m)}Z
    \end{equation*}
    which is $2\pi-$periodic. In  terms of $\Phi$, the traveling-wave variable assumes the well-known form
    \begin{equation*}
        \Phi_n(y,t)=k_1n+k_2 y+\Omega t+\Phi_0
    \end{equation*}
where $k_1=\frac{2\pi F(\Xi,m)}{K(m)}$ is the lattice wavenumber, $k_2=\frac{\pi}{K(m)}\ell$ is the transverse wavenumber, and $\Omega=\frac{\pi}{K(m)}\omega$ is the temporal frequency. Rewriting the genus-one finite-gap solutions of Section \ref{s:1DTL} in terms of the physical phase $\Phi$ therefore provides an equivalent representation of the same family of exact oblique periodic traveling waves of the 2DTL, now expressed in terms of the physical wave parameters $(k_1,k_2,\Omega)$. Throughout this work, however, we retain the elliptic phase $Z$, since it is the natural coordinate for the finite-gap formulation.
\end{remark}

\section{Wedge problems for the two-dimensional Toda lattice}
\label{s:wedge}

We are now ready to study the main topic of this work, namely the dynamics produced by equation~\eqref{eq_Toda1} subject to wedge-type initial data of the form \eqref{InitialValue2dToda}, 
where $B_->B_+$ are constants and $q\in \mathbb{R}$ is the slope parameter of the wedge.

Throughout this section we make use of two equivalent dependent variables. The wedge problem is formulated in terms of the strain variable $B$, since the initial data constitute a Riemann problem in $B$, and this is also the natural variable in the long-wave reduction to the KP equation explored in the next section. On the other hand, the exact one- and two-soliton solutions of the 2DTL, presented in the previous section, available in the literature are naturally expressed in terms of the variable $V=e^B-1$. We therefore employ the $V-$ formulation to exploit the available exact soliton theory, while all physical predictions are subsequently translated back to the strain variable $B$ for comparison with the wedge dynamics.

\subsection{General set-up}

As illustrated in Figure \ref{fig:wedge}, the sign of $q$ determines the orientation of the wedge: for type I wedges ($q>0$) the region of higher amplitude $B_-$ forms a rightward-pointing wedge in the $(n,y)$-plane, while for type II wedges ($q<0$) it forms a leftward-pointing wedge.

Since \eqref{eq_Toda1} is second-order in time, specifying $B_n(y,0)$ alone does not uniquely determine the solution.
One must also prescribe a second initial condition $\dot{B}_n(y,0) = \varphi_n(y)$. 
The choice of $\varphi_n(y)$ is therefore part of the problem formulation and has a direct dynamical consequence. To see this, observe that the jump in $B_n(y,0)$ across the wedge boundary $n+q|y|=0$ acts locally as a Riemann problem for the one-dimensional Toda lattice in the Flaschka variables $(a_n,b_n)$ with $b_n=e^{B_n}$ and $\dot{B}_n=a_{n+1}-a_{n}$. The well-posedness of this local Riemann problem follows from the strict hyperbolicity of the Whitham system \eqref{WhitamTodaGenusg}: the characteristic speeds $s_i$ are real and distinct for distinct branch points, guaranteeing the existence and uniqueness of the solution for all  time in Eq. \eqref{WhitamTodaGenusg}. 
However, within this framework, the choice of $\varphi_n(y)$ selects which DSWs are excited. 
In order to isolate the right-going DSW, we choose $\varphi_n(y)$ so that $\lambda_2=a_n(0)+2\sqrt{b_n(0)}$ remains constant across the jump.  
Setting $a_-=0$ the condition $\lambda_2^-=\lambda_2^+$ leads to:
\begin{equation} \label{velocity}
    a_+=2 \big(e^{{B_-}/{2}}-e^{{B_+}/{2}} \big)\,,
\end{equation}
which leads to a well-posed initial value problem for \eqref{e:Toda2D_0} with initial velocity given by 
\begin{equation}\label{InitialVelocity}
    \dot{B}_n(y,0)=2 \big(e^{{B_-}/{2}}-e^{{B_+}/{2}} \big)\chi_{\{n=\lfloor -q|y| \rfloor\}}
\end{equation}
where $\lfloor \cdot \rfloor $ denotes the floor function.

Before proceeding further, it is important to observe that the wedge problem \eqref{InitialValue2dToda} with arbitrary constant background $B_+$ can always be reduced to a wedge with zero background considered throughout the  numerical simulations. 
Indeed, the invariance~\eqref{SymmetryBackground} established in Section~\ref{s:2DTL_background} implies that any solution of the 2DTL \eqref{eq_Toda1} generates a one-parameter family of solutions via the map
\begin{equation}
    B_n(y,t)\mapsto B_n\left (e^{B_0/2}y,e^{B_0/2}t \right )+B_0\,.
\end{equation} 
Therefore, a wedge \eqref{InitialValue2dToda} can be transformed  into an equivalent normalized (i.e., zero background) wedge by introducing rescaled quantities
\begin{equation}
    \hat{B}_n=B_n-B_+ \qquad \hat y=e^{B_+/2}y, \qquad \hat t=e^{B_+/2}t
\end{equation}
The transformed problem satisfies the same Toda equations and it is of the form:
\begin{equation} 
    \hat B_n(y,0)=\begin{cases}
       \Delta , &n + q_{\mathrm{eff}}\,|\hat y| \le 0,\\
        0, &n + q_{\mathrm{eff}}\,|\hat y| > 0,
    \end{cases}
    \end{equation}
with 
\begin{equation}
\Delta=B_--B_+,\qquad q_{\mathrm{eff}}=e^{-B_+/2}q\,.
\end{equation}
Hence, we can always reduce ourselves to working exclusively with a normalized wedge problem with zero background without loss of generality, i.e., to taking $B_+=0$. 
We will do so throughout the rest of this work,
which will allow us to more easily take advantage of the analysis of the exact multi-soliton solutions of the 2DTL that were discussed in section~\ref{s:2DTL_background}.

\begin{remark}
The choice of imposing $\lambda_2$ constant isolates the right-moving dispersive shock. Indeed, let $b_\pm =e^{B_\pm}$ and choose the initial data so that $\lambda_2^+=\lambda_2^-$ that is Eq. \eqref{velocity} with the choice $a_-=0$. Substituting  these values into the amplitude Eqs. \eqref{eq: SolitonAmplitudeLeftRiemannProblem} and \eqref{eq: SolitonAmplitudeRightRiemannProblem} one has $A_b^L=0$ and $A_b^R>0$ as discussed later.
\end{remark}
Moreover, the previous section confirms the fact that the amplitude of the leading-edge of each oblique DSW is determined entirely by the one-dimensional Whitham theory of Section~\ref{s:Riemann1DTL}. 
Specifically, applying \eqref{eq: SolitonAmplitudeRightRiemannProblem} to the initial data \eqref{InitialValue2dToda} yields the following estimate for the amplitude of the leading edge solution in the oblique DSW:
\begin{equation}\label{e:2DTL_obliqueleadingedgeamplitude}
  A_{B,\mathrm{obl}}=\log{\left( \frac{b^R_{\mathrm{max}}}{b^R_{\mathrm{min}}}\right)}=2\log{ \left (2e^{\frac12\Delta}-1 \right )}\,,
\end{equation}
where $\Delta = B_--B_+>0$ is the size of the jump.
\begin{remark}\label{obs: AmplitudeOblique}
Unlike what happens for the KP equation in the continuum limit, the 2DTL does not possess a Galilean/shift invariance. 
Nevertheless, \eqref{e:2DTL_obliqueleadingedgeamplitude} shows that,
like in the KP equation, the amplitude of the leading-edge soliton in a DSW of the 2DTL depends only on the size of the jump $\Delta$, and it is independent of the background.
Note also that, similarly to the KP equation, the amplitude is also independent of the orientation of the jump. 
\end{remark}

\begin{remark}
The two-dimensional nature of the problem manifests when the two oblique DSWs interact around the symmetry axis $y=0$. The dynamics of this interaction is not captured by the one-dimensional theory, and indeed it depends crucially on the geometry of the wedge, namely on the slope parameter~$q_{\mathrm{eff}}$. 
In the wedge problem for the KP equation, the resulting dynamics was studied and classified in \cite{BBH_PRL2025} using suitable reductions of the Whitham modulation equations for the KP equation,
which had been derived in \cite{RSPA2017v473p20160695}.
No Whitham modulation equations are available for the 2DTL to the best of our knowledge, although their derivation is a particularly 
intriguing question for future work.
Nonetheless, we next show that the dynamical patterns can be studied and classified to the same 
level of effectiveness by looking at the interactions of the leading-edge solitons of the two oblique DSWs.
Indeed, the two oblique leading-edge solitons, having the same amplitude $A_{B,\mathrm{obl}}$ but opposite slopes $q_{\mathrm{eff}}$ and $-q_{\mathrm{eff}}$, can interact in two inequivalent ways depending critically on the relative ordering between the value of $|q_{\mathrm{eff}}|$ and the critical value $q_{\mathrm{cr}}$. 
Specifically, we show below that the following two scenarios can arise:
\begin{description}
    \item[Subcritical regime  ($|q_{\mathrm{eff}}|<q_{\mathrm{cr}}$):] 
    The interaction between the leading-edge solitons is resonant, producing an expanding vertical region called a stem, which generates a third, vertical ``stem DSW''. 
    This scenario is studied in Section \ref{s:SubII}.
    \item[Supercritical regime ($|q_{\mathrm{eff}}|>q_{\mathrm{cr}}$):] 
    The interaction between the leading-edge solitons is ordinary; the two solitons pass through each other without forming a stem, and the interaction produces instead a localized peak of high amplitude, characterized analytically in  Section \ref{s:SuperII}. 
\end{description}
\end{remark}

The critical slope $q_{\mathrm{cr}}$ that separates the above two regimes is determined by identifying the resonance condition on the phase parameters of the two leading-edge solitons.
Recall that these solitons have equal amplitudes and opposite slopes.

Let $\alpha$ and $\gamma$ be the parameters associated with the leading-edge soliton in the upper half plane via \eqref{e:alphagammafromAq} with $A_V=A_{V,\mathrm{obl}}$, with 
\begin{equation}
A_{V,\mathrm{obl}} = 4 e^{\Delta/2}(e^{\Delta/2}-1)\,,
\end{equation}
and where $A_{V,\mathrm{obl}}=e^{A_{B,\mathrm{obl}}}-1$. .
The four phase parameters for the two solitons are then given by the set 
$\{-\gamma-\alpha,\alpha-\gamma,\gamma-\alpha,\gamma+\alpha\}$.
It is convenient to denote the well-ordered set of phase parameters for the two-soliton solutions by 
\begin{equation}
(P_1,P_2,P_3,P_4):=\mathop{\rm ord}\{-\gamma-\alpha,\alpha-\gamma,\gamma-\alpha,\gamma+\alpha\}\,.
\label{e:P_ord}
\end{equation}
(As in the case of the KP equation, the ordering of the solitons parameters $P_1,\dots,P_4$ is needed in the formalism presented of Section~\ref{s:2Sol} in order to obtain non-singular soliton solutions.)
The precise ordering in our setting depends on the sign of $\gamma$ and the relative magnitude of $\gamma$ and $\alpha$ as discussed below. 
(Note that $\alpha$ is always positive, while $\gamma$ can be either positive or negative.)
Using the expression \eqref{e:alphagammafromAq} for the soliton parameters $P_2$ and $P_3$ derived in Section \ref{s:OneSol} the resonance condition is obtained by imposing $P_2=P_3$, namely
\begin{equation*}
    \arcsinh(\sqrt{A_{V,\mathrm{obl}}})=\arcsinh\left( |q_{\mathrm{eff}}| \frac{\arcsinh(\sqrt{A_{V,\mathrm{obl}}})}{\sqrt{A_{V,\mathrm{obl}}}}\right),
\end{equation*}
which in turn implies that the critical slope is:
\begin{equation} \label{Eq:Res_qA}
   q_{\mathrm{cr}} = \frac{A_{V,\mathrm{obl}}}{\arcsinh(\sqrt{A_{V,\mathrm{obl}}})}\,.
\end{equation}

\begin{remark}\label{obs: alphaGamma}
  Note that both $\alpha$ and $\gamma$ are explicit functions of $q_{\mathrm{eff}}$ and of the jump size $\Delta$.  
\end{remark}
The precise ordering of the parameters in \eqref{e:P_ord} and the resulting dynamical regime for each value of $q_{\mathrm{eff}}$ are summarized in the following:

\begin{remark} \label{Obs:Cases}
There are precisely four cases, which correspond exactly to the four dynamical regimes analyzed later in our work. The four scenarios arise by combining the two possible wedge orientations (type I and type II), determined by the sign of $q_{\mathrm{eff}}$, with the two possibilities determined by whether $|q_{\mathrm{eff}}|$ is below or above the critical value $q_{\mathrm{cr}}$.
\begin{description}
\item[\textbf{Case 1:}] $0<\gamma<\alpha$. 
    In this case $(P_1,P_2,P_3,P_4)=(-\gamma-\alpha,\gamma-\alpha,\alpha-\gamma,\alpha+\gamma)$, 
    which leads to subcritical type~II wedges. 
    The index pairs are given by $[1,3]$ for the soliton of slope $q$ and $[2,4]$ for the soliton of slope~$-q$.
\item[\textbf{Case 2:}] $0<\alpha<\gamma$. 
    In this case $(P_1,P_2,P_3,P_4)=(-\gamma-\alpha,\alpha-\gamma,\gamma-\alpha,\alpha+\gamma)$, 
    which leads to supercritical type~II wedges. 
    The index pairs are given by $[1,2]$ for the soliton of slope $q$ and $[3,4]$ for the soliton of slope~$-q$.
\item[\textbf{Case 3:}] $-\alpha<\gamma<0$. 
    In this case $(P_1,P_2,P_3,P_4)=(\gamma-\alpha,-\gamma-\alpha,\alpha+\gamma,\alpha-\gamma)$, 
    which leads to subcritical type~I wedges. 
    The index pairs are given by $[1,3]$ for the soliton of slope $-q$ and $[2,4]$ for the soliton of slope~$q$.
\item[\textbf{Case 4:}] $\gamma<-\alpha<0$. 
    In this case $(P_1,P_2,P_3,P_4)=(\gamma - \alpha, \gamma + \alpha, - \gamma - \alpha, \alpha - \gamma)$, which leads to supercritical type~I wedges. 
    The index pairs are given by $[1,2]$ for the soliton of slope $-q$ and $[3,4]$ for the soliton of slope~$q$.
\end{description}
\end{remark}

In the first two cases, the resonance condition is given by $\gamma - \alpha = \alpha - \gamma$, which gives $\gamma=\alpha$.
In the last two cases, the resonance condition is given by $-\gamma - \alpha = \alpha + \gamma$, which gives $\gamma=- \alpha$.
In all four cases, the condition $P_2=P_3$ leads to $|\gamma|=\alpha$, which in turn yields expression \eqref{Eq:Res_qA}. Notice that the critical value $q_{\mathrm{cr}}$ is positive, since $A_{V,\mathrm{obl}}>0$. The distinction between Type I and Type II wedges therefore does not affect the magnitude of the critical slope, but only its sign: Type I corresponds to $q_{\mathrm{eff}}=q_{\mathrm{cr}}$, whereas Type II corresponds to $q_{\mathrm{eff}}=-q_{\mathrm{cr}}$. Hence, the transition between ordinary and resonant interaction is determined by the threshold $|q_{\mathrm{eff}}|=q_{\mathrm{cr}}$.

\begin{remark}\label{Obs:Asymmetric}
Note that the asymmetric interaction described in Section \ref{s:2Sol} does not arise in any of the four cases above.
\end{remark}


In the following sections we analyze each of the four dynamical regimes identified in Remark~\ref{Obs:Cases}, providing both analytical predictions and their numerical confirmation.
All simulations presented in the following sections are performed with parameters $B_+=0$ and $B_-=0.5$, which corresponds via \eqref{Eq:Res_qA} to a critical value $|q_{\mathrm{cr}}|\approx 1.42884$. {Since $B_+=0$, we have $q_{\mathrm{eff}}=q$ so for notational simplicity we henceforth drop the subscript ``eff".}  The initial profile is smoothed by a hyperbolic tangent transition to avoid spurious small-scale oscillations. 
(Note that this smoothing does not affect the analytical predictions, as the Whitham modulation theory determining the leading-edge soliton amplitude and the critical slope $q_{\mathrm{cr}}$ depends only on the asymptotic states $B_\pm$, not on the fine structure of the transition layer connecting them. Further numerical details are given in Appendix \ref{app:D}.)

The initial velocity $\dot{B}_n(y,0)$ is prescribed using Eq. \eqref{velocity}  to isolate the right-going dynamics. Time integration is performed using the symplectic Verlet scheme in a domain of $N=2400$ lattice sites in $n$ and $M=6000$ points in $y$ with spacing $\Delta y=0.25$, up to $t=300$; full numerical details are given in Appendix \ref{app:D}.

\subsection{Subcritical type II wedges} 
\label{s:SubII}

\begin{figure}[]
    \centering
    \includegraphics[width=\textwidth]{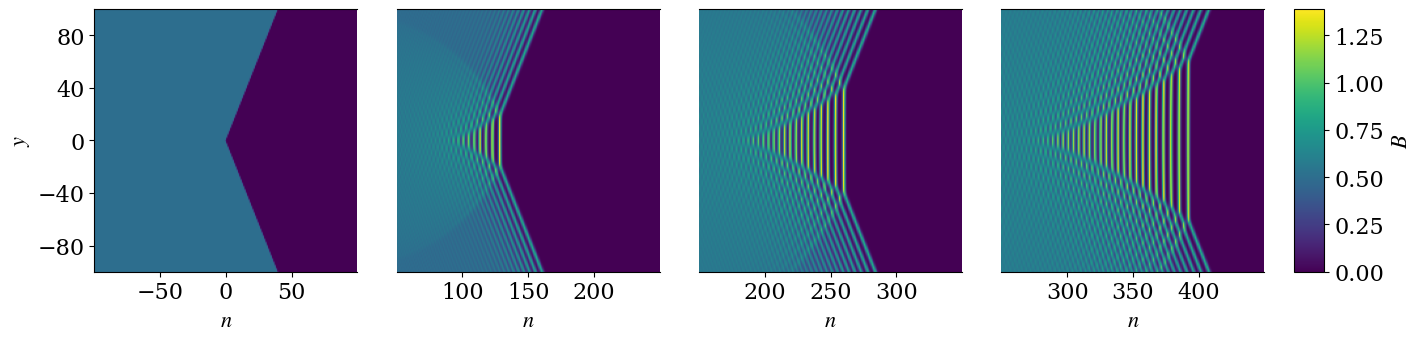}
    \caption{Density plots illustrating the evolution of a subcritical type II wedge with $q = -0.4$. The snapshots are at $t = 0,\ 100,\ 200,\ 300$. Note the formation of the vertical stem and its growth in $y$ as time progresses.}
    \label{fig:SubII}
\end{figure}

In the subcritical type II regime ($q<0$ and $|q|<q_{\mathrm{cr}}$, corresponding to Case~1 of Remark~\ref{Obs:Cases}) the two oblique dispersive shock waves originating from the wedge interact resonantly along $y=0$, generating a third, vertical ``stem DSW'' of higher amplitude. 
The time evolution is illustrated in Fig.~\ref{fig:SubII}, which shows four snapshots at $t=0,100,200,300$: the initial wedge profile gives rise to two symmetrically propagating DSWs whose interaction generates a third vertical DSW, the stem. The stems spatial extent grows linearly in time. 
This phenomenon is a direct consequence of the resonant interaction described in  Section~\ref{s:2Sol}.
The leading-edge solitons of the two oblique DSWs are described by the parameters
\begin{equation*}
(P_1,P_2,P_3,P_4)=(-\gamma-\alpha, \gamma-\alpha, \alpha - \gamma, \alpha + \gamma),
\end{equation*}
where $0<\gamma<\alpha$, and correspond to index pairs $[1,3]$ (soliton of slope $q$) and $[2,4]$ (soliton of slope $-q$). 
Following the classification of resonant two-soliton solutions of the 2DTL in \cite{BW2010} the stem soliton generated from this resonant interaction is the leading-edge soliton of the vertical DSW, described by the parameters  $P_1$ and $P_4$.
Its slope and amplitude are, respectively,
\bse
\label{parameters_stem}
\begin{align} 
  q_{\mathrm{stem}} &= \frac{\cosh{P_4}-\cosh{P_1}}{P_4-P_1}=0, 
  \\
  A_{B,\mathrm{stem}} &= \log\left(1+\sinh^2 \left( \tfrac12(P_4-P_1)\right)\right) = 
     \log \big(1+\sinh^2 (\alpha+\gamma)\big).
\end{align}
\ese
Since $\alpha$ and $\gamma$ depend explicitly on $q$ and $\Delta$, as remarked in \eqref{obs: alphaGamma}, so does $A_{B,\mathrm{stem}}$,
although we retain the compact notation in $\alpha,\gamma$ for brevity.

\begin{remark}
The fact that $q_{\mathrm{stem}}=0$ confirms that indeed the stem DSW that is formed by the dynamics is vertical.
\end{remark}
The vertical extent of the stem DSW grows linearly in time, while the DSW also expands and propagates  horizontally. 
These expansion rates can be tracked by noticing that the bottom edge $(n_-(t),y_-(t))$ of the stem is given by the intersection of the three solitons $[1,2], \ [1,4]$ and $[2,4]$. This triple intersection is equivalent to imposing $\theta_1=\theta_2=\theta_4$, which yields a linear system in $n$ and $y$.
The solution of this linear system (computed in Appendix \ref{app:C}) yields
\bse
\begin{equation}
\label{upper vertex}
    y_-(t) = - \frac{\alpha\coth{\alpha-\gamma\coth{\gamma}}}{\alpha+\gamma} t+y_-^0,\qquad
    n_-(t) = \frac{\sinh{(\alpha+\gamma)}}{\alpha+\gamma}t+n_-^0,
\end{equation}
with 
\be
y_-^0 = - \frac{(\theta_{1,0}-\theta_{4,0})\alpha+\theta_{1,0}-\theta_{2,0}}{2\sinh\gamma\sinh\alpha},\qquad
n_-^0 = \tfrac12(\theta_{4,0}-\theta_{1,0}).
\ee
\ese
Similarly, the upper edge $(n_+(t),y_+(t))$ is given by the intersection of the solitons $[2,3], \ [2,4]$ and $ [3,4]$. The associated linear system yields
\bse
\begin{equation} 
\label{lower vertex}
    y_+(t) = + \frac{\alpha \coth{\alpha}-\gamma\coth{\gamma}}{\alpha+\gamma}t+y_{+}^0,\qquad
    n_+(t) = \frac{\sinh{(\alpha+\gamma)}}{\alpha+\gamma}t+n_+^0,
\end{equation}
with 
\be
y_+^0=\frac{(\theta_{3,0}-\theta_{4,0})\alpha+\theta_{3,0}-\theta_{2,0}}{2\sinh\gamma\sinh\alpha},\qquad
n_+^0 = \tfrac12(\theta_{4,0}-\theta_{3,0}).
\ee
\ese

\begin{remark}
The constants $\theta_{j,0}$ contribute an overall offset for each leg in the two-soliton solution. 
In general, these constants are arbitrary, and only affect the position of the interaction pattern. 
However, for the symmetric wedge ICs considered here, the reflection symmetry $y\mapsto -y$ exchanges the two oblique leading solitons, namely $[1,3]$ and $[2,4]$. 
At the level of the phase constants, this symmetry results in the further conditions
$\theta_{1,0}=\theta_{2,0}$ and $\theta_{3,0}=\theta_{4,0}$.
The remaining relative phase is fixed by requiring that the resonant interaction is generated at the wedge tip, located at $(n,y)=(0,0)$.
Equivalently, the condition
$(n_\pm^0,y_\pm^0)=(0,0)$
forces all phase constants to coincide:
$\theta_{1,0}=\theta_{2,0}=\theta_{3,0}=\theta_{4,0}$.
A common value is irrelevant because it produces only a global multiplicative factor in the tau-function. 
Therefore, without loss of generality we set $\theta_{j,0}=0$ for all $j=1,...,4$.
\end{remark}

Combining \eqref{upper vertex} and~\eqref{lower vertex} yields the length $L(t):=y_+(t)-y_-(t)$ and the horizontal velocity $c_{\mathrm{stem}}$ of the leading-edge soliton in the stem DSW respectively as
\bse
\label{theoretical}
\begin{align} 
    L(t) &= 2\frac{\alpha\coth\alpha - \gamma\coth\gamma}{\alpha+\gamma}t,
    \\
    c_{\mathrm{stem}} &= \frac{\sinh{(\alpha+\gamma)}}{\alpha+\gamma}, 
\end{align}
\ese
where we used the fact that $L(0)=y_+^0-y_-^0=0$.

\begin{figure}[t!]
    \centerline{
        \includegraphics[width=0.48\linewidth]{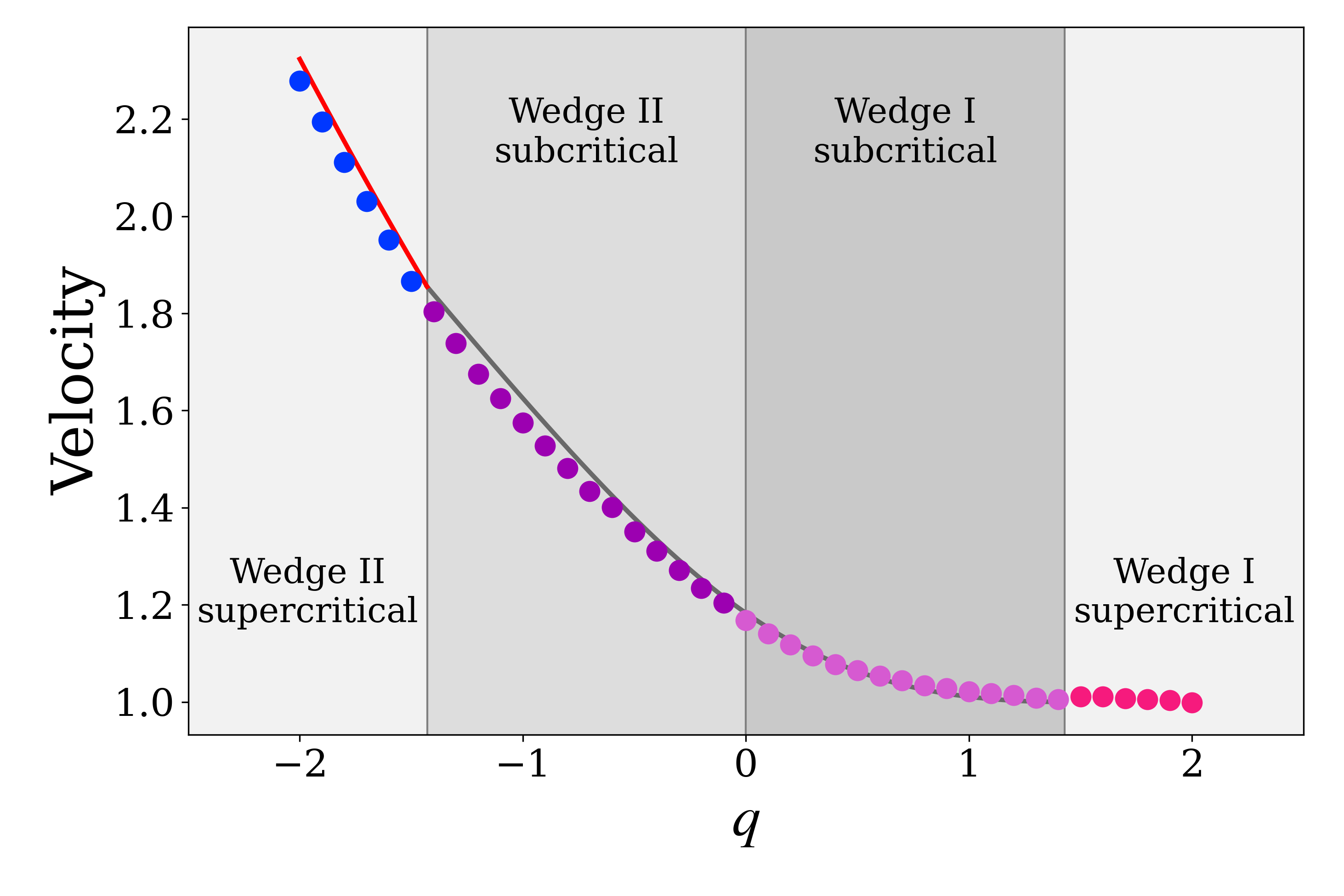}
        \includegraphics[width=0.48\linewidth]{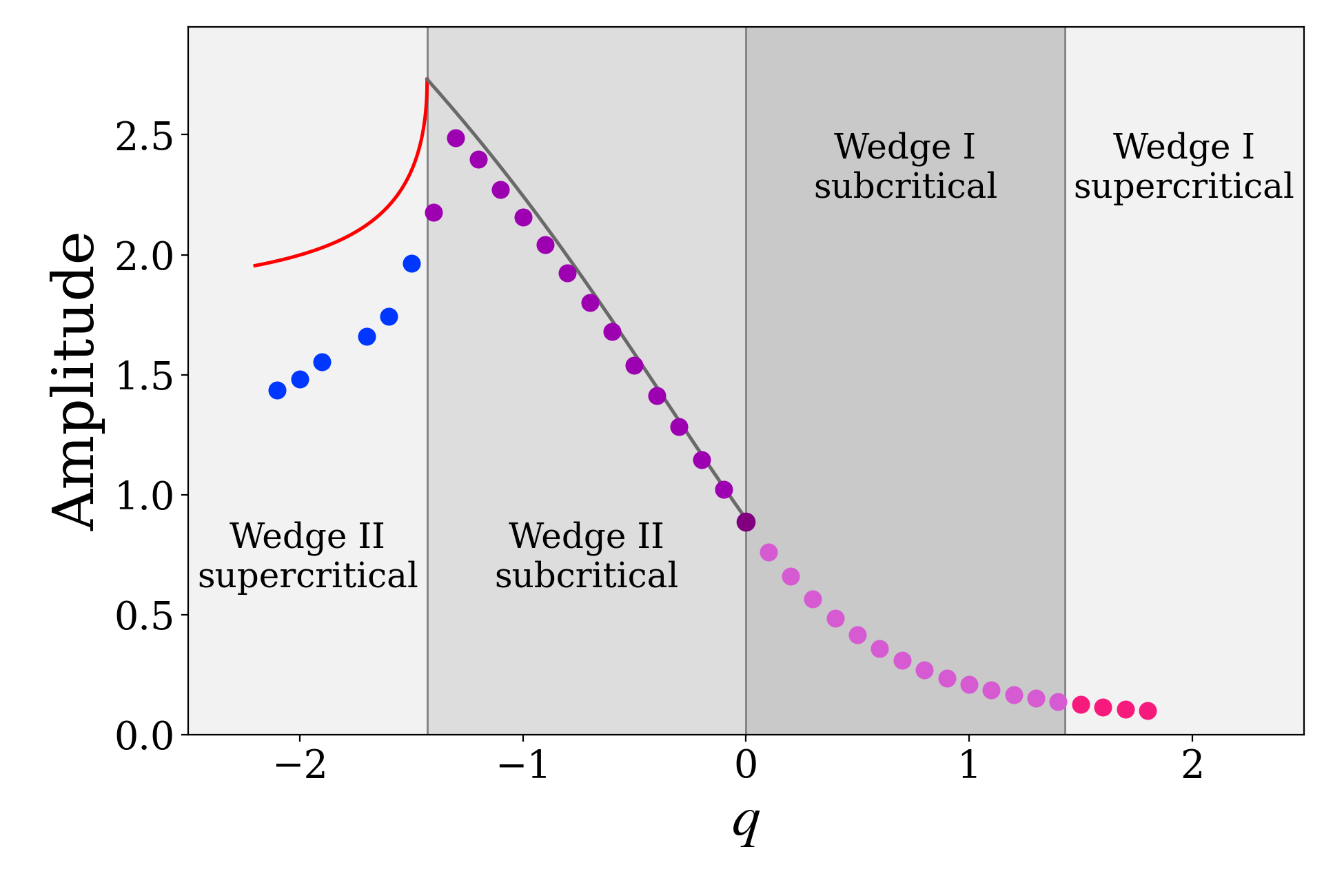}
        \qquad}
\kern-\medskipamount
\caption{Left: Numerically measured horizontal velocity of the leading-edge as a function of $q$: $c_{\mathrm{stem}}$ 
for subcritical type~II wedges (purple dots) and subcritical type~I wedges (pink dots)
as a function of $q$ (dots) compared to the theoretical prediction from \eqref{theoretical} (solid gray curve); $c_{\mathrm{peak}}$ for supercritical type~II  wedges (blue dots) and supercritical type I wedges (magenta dots). Supercritical type ~II wedges are compared with the theoretical prediction from\eqref{peak} (solid red curve).
Right: Numerically measured maximum amplitude at $y=0$ as a function of $q$ for 
subcritical type~II wedges (purple dots) and supercritical type~II wedges (blue dots)
compared with the theoretical prediction $A_{B,\mathrm{stem}}$ from \eqref{parameters_stem} (solid gray line)
and $A_{B,\mathrm{peak}}$ from \eqref{peak} (red solid line).
}
\label{fig:Amplitudes}
\kern1.4\bigskipamount
    \centering
    \begin{subfigure}{0.49\textwidth}
        \includegraphics[width=\linewidth]{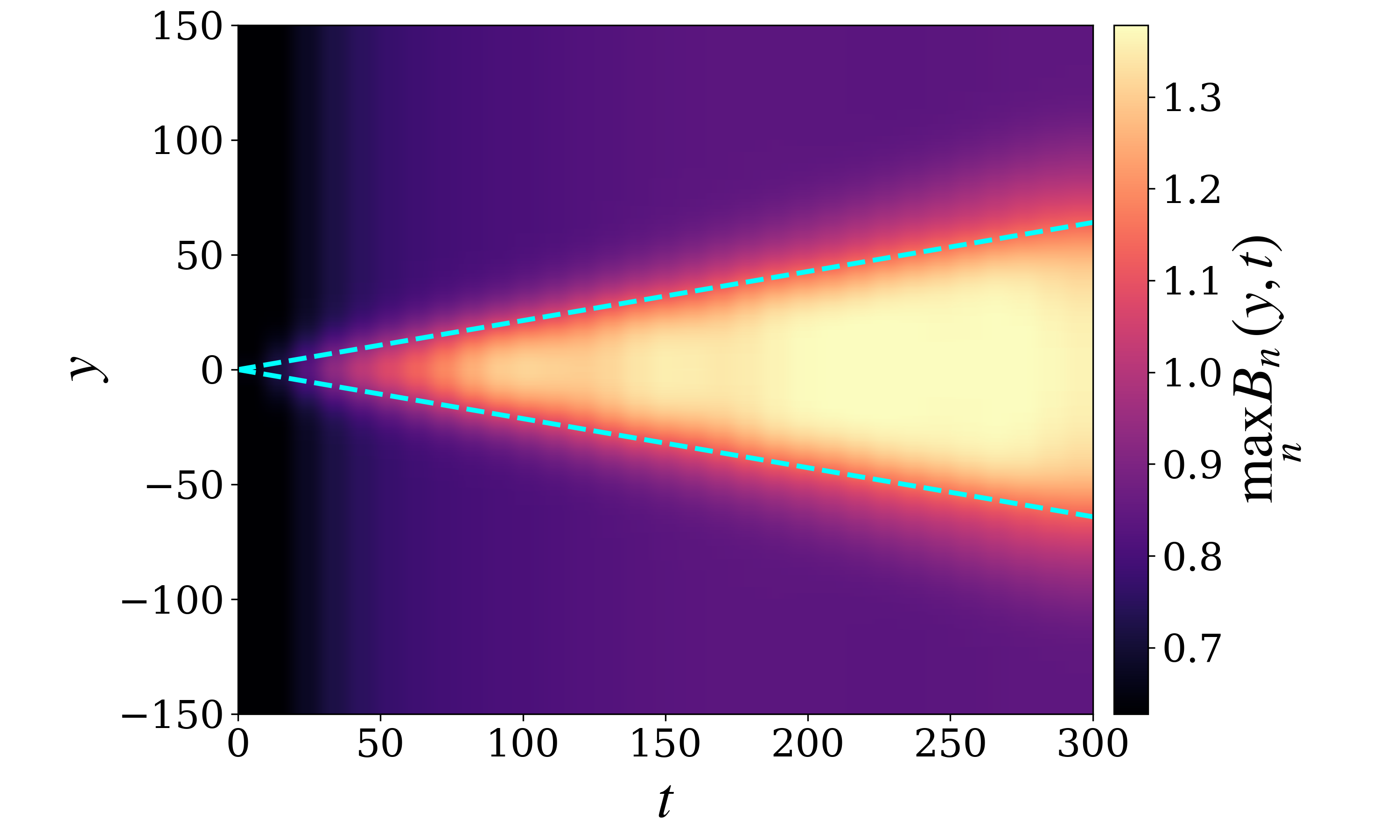}
    \end{subfigure}
    \hfill
    \begin{subfigure}{0.49\textwidth}
        \includegraphics[width=\linewidth]{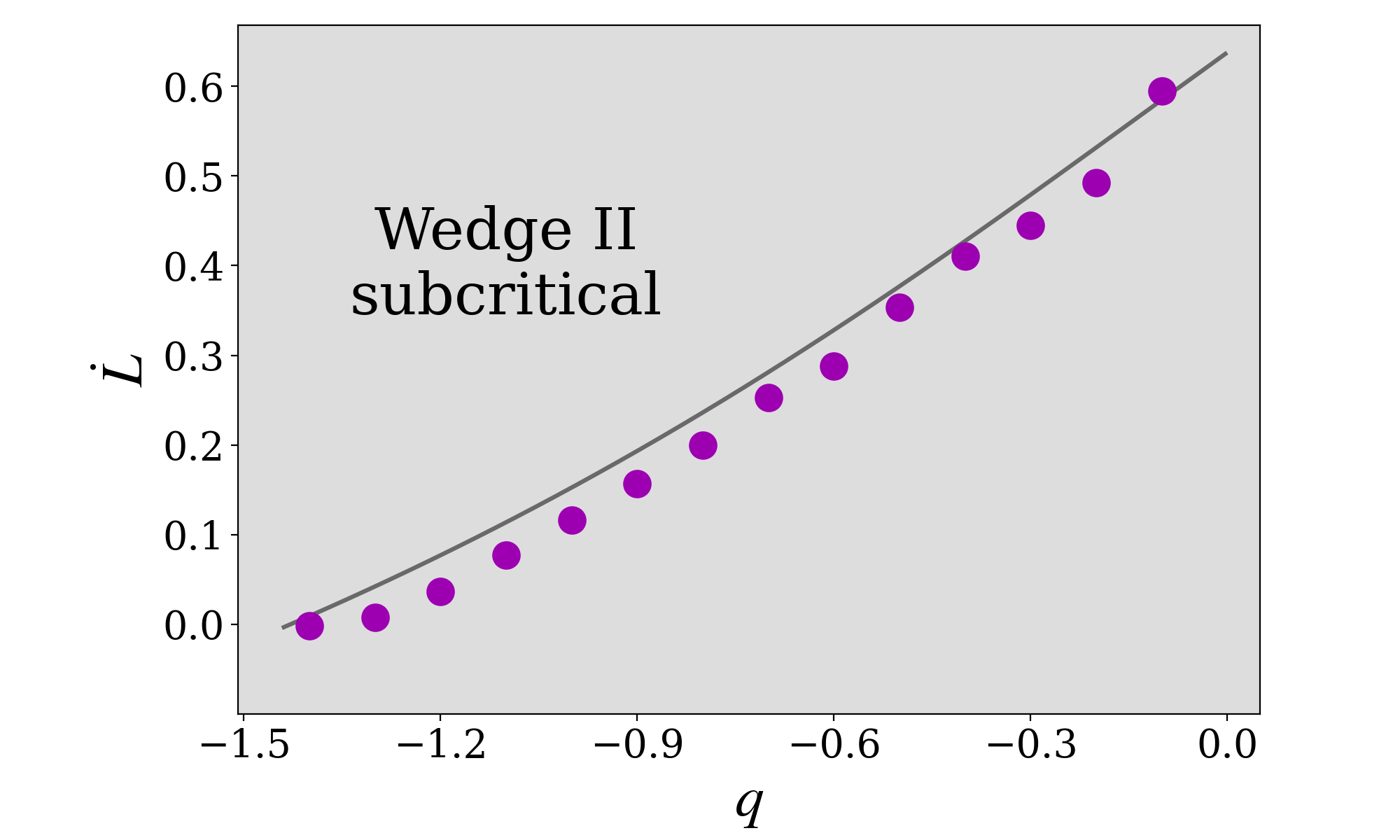}
    \end{subfigure}
    \caption{Left: Space-time density plot of $\max_nB(y,t)$ for the subcritical type II wedge with $q=-0.4$,  smoothed via spline interpolation with Gaussian filters. The cyan dashed lines show the analytical prediction for the stem $y_\pm (t)$ from equations \eqref{upper vertex} and \eqref{lower vertex}. Right: Numerically measured vertical expansion rated $\dot{L}$ as a function of $q$ (purple dots) compared with the theoretical prediction \eqref{theoretical} (solid curve) for all subcritical type II wedges $-q_{\mathrm{cr}}<q<0$.
   } 
    \label{fig:max}
\end{figure}

\begin{remark}
In the subcritical type II regime, i.e., for $0< \gamma < \alpha$, one has $\dot{L}>0$ confirming that the stem expands vertically in time, as one can observe in Fig.~\ref{fig:SubII}. 
Moreover, notice also that, in the limit $\gamma\to\alpha$, the length $L$ tends to $L_0$, which in the case of our set-up is zero, consistently with the transition from subcritical to supercritical regime where no stem formation is observed.
\end{remark}
\begin{remark}
One has $c_{\mathrm{stem}}>0$, in agreement on the rightward propagating nature of the dynamics for the initial conditions considered here. Note that, unlike the KP equation, the 2DTL is invariant under parity, so this rightward propagation is a consequence of the specific choice of initial data \ref{InitialVelocity} which isolates the right-going branch.
\end{remark}

All theoretical predictions are confirmed by numerical simulations, as shown in Figs.~\ref{fig:Amplitudes} and~\ref{fig:max}. In particular, Fig.~\ref{fig:Amplitudes} shows the numerically measured horizontal stem velocity $c_{\mathrm{stem}}$ (left panel) and the  maximum  amplitude $A_{B,\mathrm{stem}}$ 
as a function of $q$, both compared to the theoretical predictions $\eqref{theoretical}$ and $\eqref{parameters_stem}$, finding good agreement in the subcritical regime $-q_{\mathrm{cr}}<q<0$.  Figure \ref{fig:Amplitudes} provides additional  confirmation: the left panel shows a space-time density plot for $\max_{n} B_n(y,t)$, which, to reduce the oscillations, has been smoothed via spline interpolation with gaussian filtering; the right panel shows the numerical measurement for $\dot{L}$ as a function of $q$ compared to the theoretical prediction \eqref{theoretical}. 
The agreement between the analytical predictions and the numerical measurements in the subcritical regime is excellent. The small discrepancies visible in Fig.~\ref{fig:Amplitudes} are to be expected, and arise from the assumptions underlying the analytical theory. 
Specifically, the analytical predictions describe the interaction of ideal line solitons obtained from the exact two-soliton solutions, whereas the numerical simulations involve dispersive shock waves of finite width whose leading edges only asymptotically approach the corresponding soliton profiles. 
Moreover, the interaction takes place over a finite time, so residual modulation effects remain present when the quantities are measured. 
The discrepancy is not uniform over the parameter range. 
As $|q|$ approaches the critical value $q_{\mathrm{cr}}$, the interaction becomes increasingly sensitive to small variations of the incoming soliton parameters because the resonant and ordinary interaction regimes coalesce. 
Consequently, finite-width and finite-time effects become relatively more important, leading to a modest increase in the deviation from the asymptotic predictions. 
Away from the critical regime, the agreement rapidly improves, confirming the robustness of the analytical description.

\subsection{Supercritical type II wedges}
\label{s:SuperII}

\begin{figure}[t!]
    \centering
    \includegraphics[width=\textwidth]{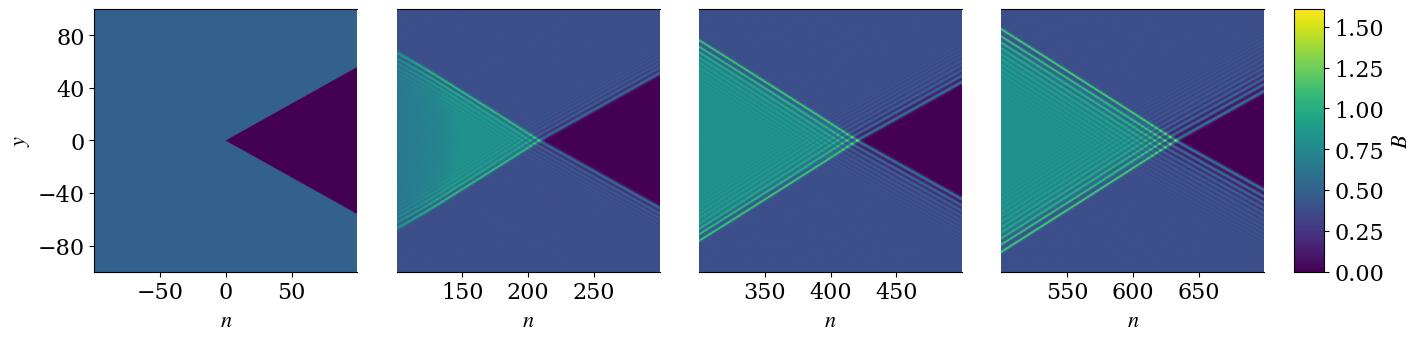}
    \caption{Density plots illustrating the evolution of a supercritical type II wedge with $q = -1.8$. The snapshots are at $t = 0,\ 100,\ 200,\ 300$. Note the absence of a vertical region and the formation of a right propagating localized peak along $y=0$, as well as an expanding multi-phase region surrounding it.}
    \label{fig:SuperII}
\end{figure}

In the supercritical type II regime ($q<0$ and $|q|>q_{\mathrm{cr}}$ corresponding to Case~2 of Remark~\ref{Obs:Cases}) the interaction between the two oblique DSWs is ordinary rather than resonant. No vertical stem DSW is produced; instead, as illustrated in Figure $\ref{fig:SuperII}$, the two DSWs pass through each other, producing a localized but expanding multi-phase region around $y=0$ characterized by high-amplitude peaks that propagate in the positive $n$ direction. This is the discrete analog of the regular reflection described in \cite{BBH_PRL2025}.
In this regime the ordered soliton parameters for the leading-edge solitons in the oblique DSWs are
\begin{equation*}
(P_1,P_2,P_3,P_4)=(-\gamma-\alpha, \alpha-\gamma, \gamma-\alpha, \alpha + \gamma),
\end{equation*}
with $0<\alpha<\gamma$. 
The exact two-soliton solution is given by Eq. \eqref{Bsoliton}
with $N=2$ and $M=4$, which in this case yields
  \begin{equation}
    \label{e:2DTL_ordinary2soliton}
    B_n=\log \left( \frac{\left(\displaystyle\sum_{(i,j)\in S}(e^{P_j}-e^{P_i}) e^{\Theta_{ij}(n+1,y,t)} \right) \left( \displaystyle\sum_{(i,j)\in S}(e^{P_j}-e^{P_i}) e^{\Theta_{ij}(n-1,y,t)}\right)}{\left(\displaystyle\sum_{(i,j)\in S}(e^{P_j}-e^{P_i}) e^{\Theta_{ij}(n,y,t)} \right)^2} \right)
\end{equation}
with $\Theta_{ij}(n,y,t)=\theta_i(n,y,t)+\theta_j(n,y,t)$ and $S=\{(1,3),(1,4),(2,3),(2,4)\}.$ Expression \eqref{Bsoliton} can be written in the compact form (see Appendix \ref{app:C} for details)
\begin{equation} \label{SuperSol}
    B_n(\mathcal U, \mathcal V)=\log \left( \frac{(\sinh\gamma\cosh{(\mathcal U+2\alpha)+\rho\cosh{\mathcal V})}(\sinh\gamma\cosh{(\mathcal U-2\alpha)}+\rho\cosh{\mathcal V})}{(\sinh\gamma\cosh{\mathcal U}+\rho\cosh{\mathcal V})^2} \right)
\end{equation}
where
\bse
\begin{align}
&\mathcal U = 2\alpha\,n+ 2t\,\sinh\alpha\cosh\gamma + \alpha,\qquad
\mathcal V = 2y\,\sinh\gamma\sinh\alpha 
  + \frac12 \log\left[ \frac{\sinh(\gamma+\alpha)}{\sinh(\gamma-\alpha)} \right], 
\\
 &\rho=\sqrt{\sinh^2\gamma-\sinh^2\alpha}\,.
\end{align}
\ese
Note that $\rho$ is real since $|\gamma|>\alpha$ in the supercritical regime.

Using Eq.  \eqref{SuperSol} one can verify by direct computation that the unique stationary point is $U=V=0$ which is the maximum of the solution. 
The condition $U=0$ determines the horizontal velocity of the peak as
\bse
\label{peak}
\begin{align}
    &c_{\mathrm{peak}} = \frac{\sinh\alpha\cosh\gamma}{\alpha}, \\
\noalign{\noindent while evaluation of $B_n$ at the maximum point gives the peak amplitude as}
    &A_{B,\mathrm{peak}} =  
    2\log\left[ 1 + \frac{2\sinh\gamma\sinh^2\alpha}{\rho+\sinh\gamma}\right].
    \end{align}
\ese
As for $A_{B,\mathrm{stem}}$, $A_{B,\mathrm{peak}}$ could equivalently be expressed directly in terms
of $q$ and $\Delta$.
\begin{remark} \label{continuity}
It is interesting to compare the behavior of the peak amplitude $A_{B,\mathrm{peak}}$ from~\eqref{peak} and the stem amplitude $A_{B,\mathrm{stem}}$ from~\eqref{parameters_stem} at the critical slope. 
For the peak amplitude as $\gamma\to\alpha$ 
and for the stem amplitude as $\gamma\to\alpha$ one has
\begin{equation*}
    A_{B,\mathrm{peak}} \big|_{\gamma=\alpha} = 
     A_{B,\mathrm{stem}} \big|_{\gamma=\alpha} = 
      2\log{(\cosh{(2\alpha)})}.    
\end{equation*}
The two expressions coincide, confirming that the maximum amplitude is continuous across the critical threshold. This is consistent with the analogous result for KP equation in \cite{BBH_PRL2025}. 
A similar results applies for the horizontal velocities, for which one has
\begin{equation*}
    c_{\mathrm{peak}} \big|_{\gamma=\alpha} = 
    c_{\mathrm{stem}} \big|_{\gamma=\alpha} = 
    \frac{\sinh{\alpha}\cosh{\alpha}}{\alpha}
\end{equation*}
\end{remark}
The amplitude of the  peak $A_{B,\mathrm{peak}}$ is compared  with the numerical simulations in the right panel of Fig.~\ref{fig:Amplitudes}.
The agreement in this case is not quite as good, similarly to what happens for the KP equation~\cite{BBH_PRL2025}. 
The continuity of $A_{B,\mathrm{stem}}$ and $A_{B,\mathrm{peak}}$ is also visible in the numerical simulations shown in Fig.~\ref{fig:SuperII}. We suspect that a source for the remaining discrepancy is the fact that the analytical prediction isolates the interaction between the two leading-edge solitons, whereas the numerical solution contains the full DSWs, whose oscillatory structure continues to interact with the collision region. These additional finite-amplitude dispersive effects are not captured by the ideal two-soliton description and may therefore produce the small deviations observed. 
A complete understanding of the discrepancies would likely require a genuinely two-dimensional Whitham modulation theory for the 2DTL, capable of describing the modulation of the full dispersive shock wave during the interaction. 
Developing such a theory lies beyond the scope of the present work and is left for future investigation.

\begin{figure}[b!]
    \centering
    \includegraphics[width=\textwidth]{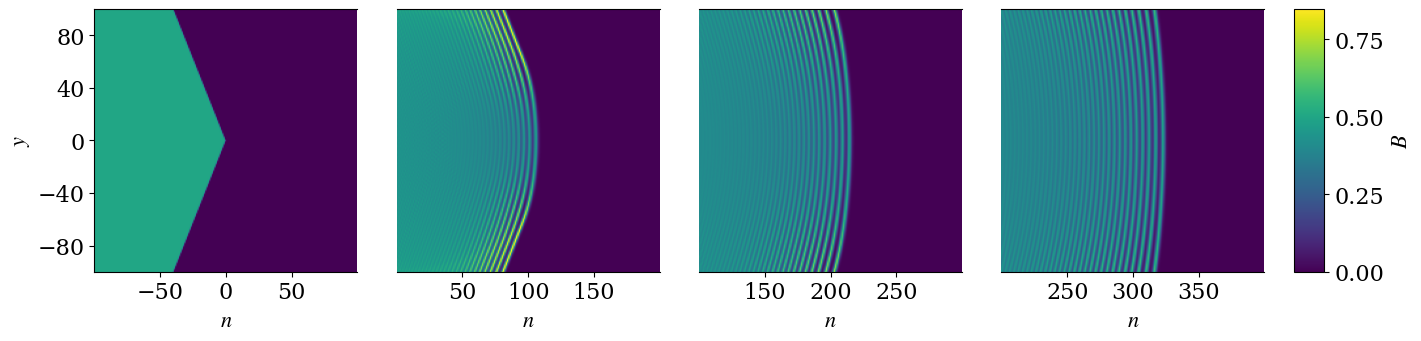}
    \caption{Density plots illustrating the evolution of a subcritical type I wedge with $q = 0.4$. The snapshots are at $t = 0,\ 100,\ 200,\ 300$.}
    \label{fig:SubI}
\bigskip
    \centering
    \includegraphics[width=\textwidth]{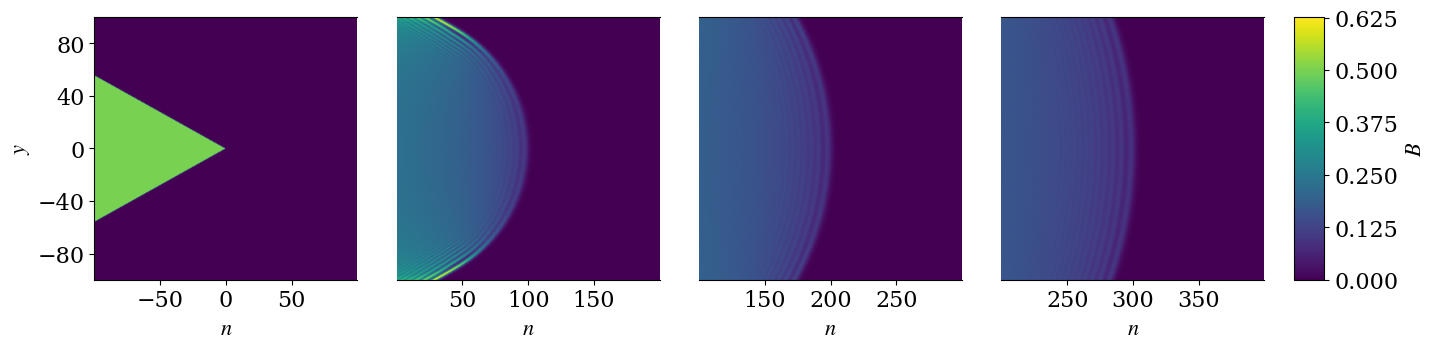}
    \caption{Density plots illustrating the evolution of supercritical type I wedge with $q = 1.8$. The snapshots are at $t = 0,\ 100,\ 200,\ 300$.}
    \label{fig:SuperI}
\end{figure}

\subsection{Type I wedges}\label{TypeI}

Next we briefly discuss the dynamics produced by type I wedges ($q>0$), corresponding to Cases 3 and 4 of Remark \ref{Obs:Cases}. 
In contrast to the type~II wedges, in this case the two oblique DSWs propagate away from the symmetry axis $y=0$, producing a qualitatively less dramatic phenomenology,
as illustrated in Figs.~\ref{fig:SubI} and~\ref{fig:SuperI}.
On the other hand, like with type~II wedges, the qualitative behavior mirrors that observed for the KP equation in \cite{BBH_PRL2025}, with some quantitative  differences due to the discrete nature of 2DTL. 

In subcritical type~I wedges, an example of which is shown in Fig.~\ref{fig:SubI}, 
the time evolution produces a constant-amplitude vertical DSW near $y=0$ which connects smoothly to the two oblique DSWs. As time progresses, the front tends to flatten, approaching a uniform profile in the transverse direction. 
The numerically measured horizontal velocity agrees reasonably well with the theoretical prediction $c_{\mathrm{stem}}$ derived in Section~\ref{s:SubII}, as shown in the left panel of Fig.~\ref{fig:Amplitudes}.

In supercritical type~I wedges, an example of which is shown in Fig.~\ref{fig:SuperI}, the solution near $y=0$ is characterized by a DSW region whose amplitude decays rapidly in time, in analogy with the behavior observed for the KP equation in~\cite{BBH_PRL2025}, where this regime was connected quantitatively to the cylindrical Korteweg-de~Vries (KdV) reduction. 
    
A more precise analytical study and quantitative numerical investigation of both type~I regimes would require the use of Whitham modulation equations for the 2DTL.
Such equations are not available at present, however.
Such a study is therefore left as an open problem for future work.

\section{Continuum limit of the lattice dynamics}
\label{s:continuum}

It is well known that the two-dimensional Toda lattice studied in this work is connected to the KP equation via a long-wave, small-amplitude limit. 
In this section, we review this connection and use this reduction to show that the quantitative analysis provided in Section~\ref{s:wedge} is consistent with what explored in  \cite{BBH_PRL2025}. 
Throughout this section, to clearly distinguish between the parameters of the two models, we adopt the superscripts ``Toda''  and ``\text{KP}'' for quantities referring to the 2DTL and the KP equation respectively. 
Note that in the previous sections these superscripts were omitted since only the Toda lattice was considered.

\subsection{Reduction of the 2DTL to the Kadomtsev-Petviashvili equation}
\label{s:kpreduction}

To derive the continuum limit of the 2DTL, we introduce the multiple scale variables
\begin{equation} \label{multiex}
    X=\epsilon( n - c t), \quad 
    Y= \epsilon^2 y, \quad 
    T = \epsilon^3 t,
\end{equation}
as well as the rescaled amplitude 
\be
\label{e:BexpansionKP}
B_n(y,t) = \epsilon^2 U(X,Y,T),
\ee
where $0<\epsilon\ll 1$ is a small parameter. 
Applying the chain rule and expanding $e^{B_n}$ via Taylor series, the left hand side 
and right hand side of \eqref{eq_Toda1} yield, respectively,
\begin{gather*}
    B_{tt}-B_{yy} = 
        \epsilon^4 c \partial_{XX}U-2\epsilon^6 \partial_X \partial_T U- \epsilon^6 \partial_{YY}U+ o(\epsilon^8),
\\        
    e^{B_{n+1}} - 2 e^{B_n} + e^{B_{n-1}} =
        \epsilon^4U_{XX}+\frac{\epsilon^6}{12}U_{XXXX}+\epsilon^6 (UU_X)_X+o(\epsilon^8).
\end{gather*}
Balancing at order $\epsilon^4$ fixes $c=1$, while at order $\epsilon^6$ one  obtains:
\begin{equation} \label{KPA}
    \left(2U_T+\frac{1}{12}U_{XXX}+UU_X \right)_{X}+ U_{YY}=0
\end{equation}
which is precisely KP equation. 
Introducing the rescaled variables $\Tilde{T}=\frac{1}{2}T$, $\Tilde{X}=\sqrt[3]{12}X$, $\Tilde{Y}=Y$ and $\Tilde{U}=\frac{1}{\sqrt[3]{12}}U$, equation~\eqref{KPA} reduces to the standard KP form:
\begin{equation} \label{KP}
    \left( \Tilde{U}_{\Tilde{T}}+\Tilde{U}_{\Tilde{X}\Tilde{X}\Tilde{X}}+\Tilde{U}_{\Tilde{X}}\Tilde{U}\right)_{\Tilde{X}}+ \Tilde{U}_{\Tilde{Y}\Tilde{Y}}=0
\end{equation}
This connection shows that the 2DTL is a natural discrete counterpart of the KP equation, and provides the framework for comparing the wedge theory developed in Section~\ref{s:wedge}  with the analogous results of \cite{BBH_PRL2025}.

\begin{remark} \label{obs: velocity}
For the purpose of comparing numerical results obtained with the 2DTL to those coming from the KP equation, one also needs to specify an initial velocity consistent with the KP ansatz. 
Differentiating equation~\eqref{e:BexpansionKP} with respect to $t$ yields
\begin{equation*}
  \dot{B}_{n}(y,t) = - \epsilon^3 \partial_X U + \epsilon^5 \partial_T U\,.
\end{equation*}
The term $U_T$ can be eliminated by using the KP Eq. \eqref{KP} to  express $U_T$ in terms of the spatial derivatives.
\end{remark}

\begin{figure}
    \centering
    \includegraphics[width=\linewidth]{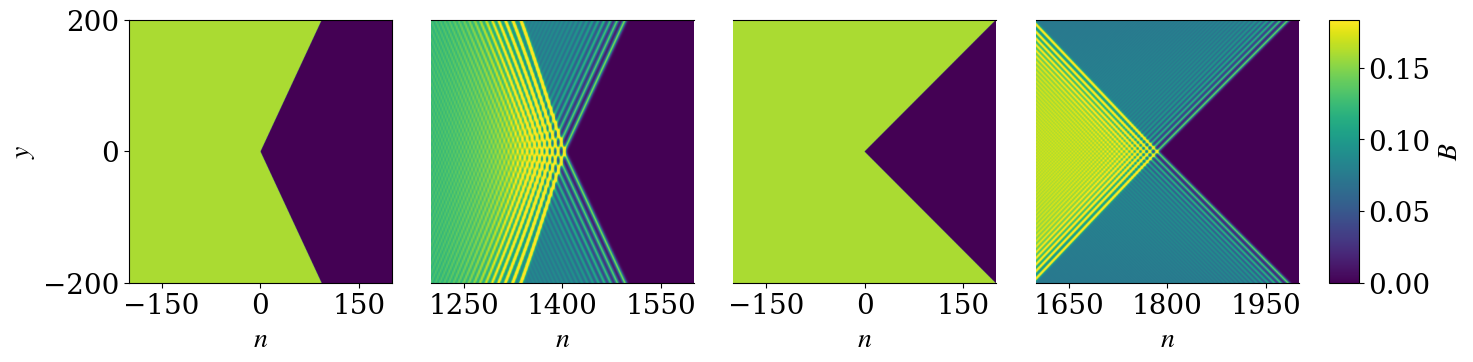}
    \caption{Numerical evolution of the 2DTL  with initial conditions given by the KP  reduction for two representative configurations, both initialized using the KP ansatz \eqref{multiex} with $\epsilon=0.4$ and  $\Delta^\mathrm{Toda}=0.16$. Left two panels: subcritical case $q=-0.45$, showing the initial profile and the solution at time $t=1250$, where the vertical stem has started forming.  Right two panels: supercritical case $q=-1$, showing the initial profile and the solution at time $t=1250$, where instead the peak propagates along $y=0$.}
    \label{fig:wedgeKP}
\end{figure}

\subsection{Consistency of the wedge theory in the Kadomtsev-Petviashvili limit}

The KP reduction established in Section \ref{s:kpreduction} allows to verify the consistency of the critical slope \eqref{Eq:Res_qA}, obtained in Section~\ref{s:wedge}, with the analogous result obtained in \cite{BBH_PRL2025}.  
Recall that in \cite{BBH_PRL2025} it was shown that for the KP equation the critical slope is given by
\begin{equation}\label{qKP}
q^{\mathrm{KP}}_{\mathrm{cr}}=\sqrt{2\Delta^{\mathrm{KP}}},
\end{equation}
where $\Delta^{\mathrm{KP}}$ denotes the size of the initial jump in the Riemann problem for the KP equation. 
Under the multiple scale reduction~\eqref{multiex}, the jump and the slope parameter transform respectively as
\begin{equation}\label{qToda}
    \Delta^{\mathrm{Toda}} = \epsilon^2 \Delta^{\mathrm{KP}},\qquad
    q^{\mathrm{Toda}}=\epsilon q^{\mathrm{KP}},
\end{equation}
where the second relation follows from the fact that $q^{\mathrm{Toda}} = {dn}/{dy} = \epsilon {dX}/{dY}$. 
Substituting Eq. \eqref{qToda} into Eq. \eqref{qKP} yields the small-amplitude limit of the critical angle for the 2DTL as
\begin{equation*}
q_{\mathrm{cr}}^{\mathrm{Toda}}=\sqrt{2\Delta^{\mathrm{Toda}}}.
\end{equation*}
We now verify that equation~\eqref{Eq:Res_qA} reproduces this expression in the small-amplitude limit. Observe indeed that for $\Delta^{\mathrm{Toda}}\ll 1$ one has:
\begin{equation*}
    A^{\mathrm{Toda}}_{B,\mathrm{obl}}=2\log{\left(2e^{\frac{\Delta^{\mathrm{Toda}}}{2}}-1\right)} \approx 2\Delta^{\mathrm{Toda}}
\end{equation*}
and consequently 
\begin{equation*}
   A_V \approx 2\Delta^{\mathrm{Toda}}.
\end{equation*}
Substituting into \eqref{Eq:Res_qA} and using $\arcsinh(x) = x + O(x^3)$ as $x\to0$, one obtains
\begin{align*}
    q^{\mathrm{Toda}}_{\mathrm{cr}} =\frac{A_{V,\mathrm{obl}}}{\arcsinh(\sqrt{A_{V,\mathrm{obl}}})} \approx \sqrt{A_{V,\mathrm{obl}}} \approx \sqrt{2\Delta^{\mathrm{Toda}}}, 
\end{align*}
which is the desired result, confirming the consistency of the critical slope formula with the KP theory of \cite{BBH_PRL2025}. 

\begin{figure}[t!]
\kern-\medskipamount
\centerline{\includegraphics[width=0.9225\linewidth]{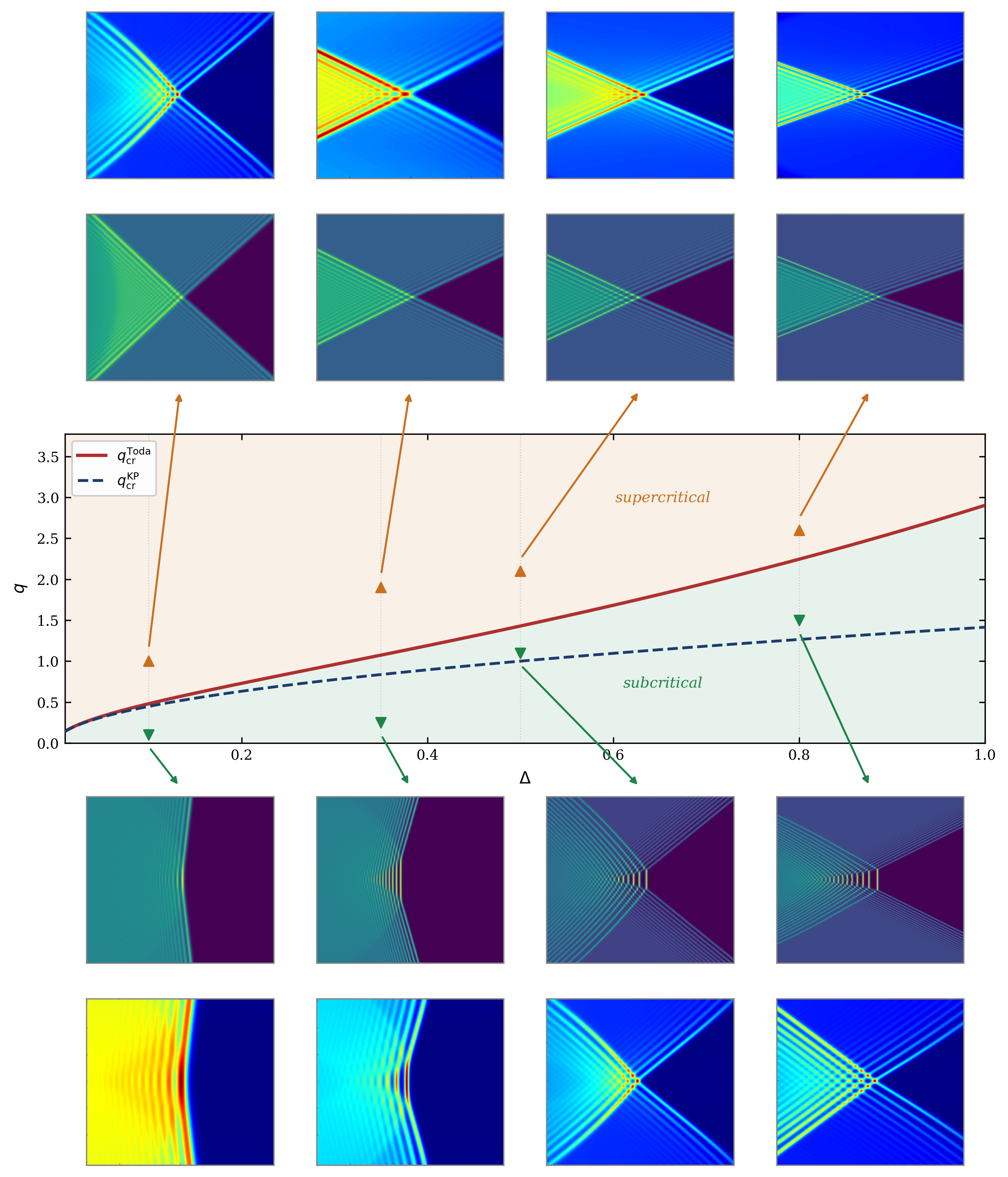}}
\kern2\smallskipamount
\caption{Summary of wedge phenomenology and critical slope diagram. 
    The central panel shows the critical slope $q_{\mathrm{cr}}$ as a function of the jump size $\Delta$ for the 2DTL~\eqref{Eq:Res_qA} (solid red curve) and the KP Eq.~\eqref{qKP} (dashed blue curve). 
    The green and orange shaded regions indicate respectively subcritical and supercritical regimes for the 2DTL. 
    Each arrow points to a corresponding snapshot for the 2DTL (rows closest to the center panel) and the KP equation (rows furthest from the center panel).
    Notably, the right two snapshots in the bottom two rows correspond to parameter choices that are subcritical for the 2DTL but supercritical for the KP equation, highlighting the quantitative discrepancy between the two models. 
    Axes labels are omitted for clarity; the 2DTL snapshots are shown in the $(n,y)$ plane and KP snapshots in the $(X,Y)$ plane, with matching aspect ratios between panels in each column. 
    }
\label{fig:Comparison}
\vglue-4\medskipamount
\end{figure}

We emphasize that, beyond the critical slope, several other qualitative features of the wedge phenomenology are preserved in the KP limit: the resonant versus ordinary nature of the soliton interaction, the continuity of the amplitude across the critical threshold (as pointed out in Remark \ref{continuity}), and the qualitative structure of both type~I and type~II regimes all have direct counterparts in the KP setting. 
On the other hand, the discrete nature of the 2DTL introduces quantitative corrections like in the stem amplitude \eqref{parameters_stem} and in the amplitude formula for the supercritical peak \eqref{peak},  which,
importantly, reduce to the KP expressions in the limit. Two examples are shown in Figure \ref{fig:wedgeKP}, both initialized using the KP ansatz \eqref{multiex} and initial velocity as \eqref{velocity}. In the subcritical case, a vertical stem begins to form, although it remains embryonic at time $t=1250$ since the slow modulation timescale of the KP regime requires significantly longer integration times for the stem to fully develop. In the supercritical case, the localized  peak is produced consistently with the KP theory. Both examples confirm the validity of the multiscale reduction. 

A comprehensive comparison of the two critical slope curves as a function of $\Delta$, together with representative snapshots of the dynamics in both models, is shown in Figure \ref{fig:Comparison},
which compares the critical slope $q_{\mathrm{cr}}$ as a function of the jump size $\Delta$ for the two models. Note the quantitative difference between the two models visible in Fig.~\ref{fig:Comparison}: while the two curves agree for small $\Delta$, consistently with the asymptotic equivalence established in Section \ref{s:kpreduction}, they diverge significantly for larger values of $\Delta$, with the Toda critical slope lying strictly above the KP one. This means that for a given jump size, the subcritical regime is wider for the 2DTL than for the KP equation. Representative snapshots of the dynamics in both models, for parameter choices lying in each of the four regions of the $(\Delta,q)$ plane delimited by the two critical curves, are also shown in Figure \ref{fig:Comparison}. In particular, the rightmost pair of snapshots in the bottom rows corresponds to a parameter choice that is subcritical for the 2DTL but supercritical for the KP equation, lying between the two critical curves: in this case the 2DTL produces a stem DSW while the KP equation produces only a localized peak, providing a direct illustration of the quantitative discrepancy between the two models as the jump
size $\Delta$ increases.

Throughout Sections \ref{s:wedge} and \ref{s:kpreduction} we considered wedges whose two branches have equal and opposite slope $\pm q$. Appendix \ref{app:E} extends the analysis to the more general case in which the two branches have independent slopes $q_1$ and $q_2$ with $q_1q_2<0$, and shows that both the 2DTL and the KP equation possess a symmetry invariance: a Lorentz boost in the $yt$-plane for the former and a pseudo-rotation invariance for the latter. This symmetry maps any such asymmetric wedge onto an equivalent wedge with a single, common slope. As a consequence, the four dynamical regimes identified in Section \ref{s:wedge}, together with their KP counterparts, extend essentially unchanged to the asymmetric case. The two symmetries act differently, however: the KP transformation acts directly and linearly on the physical slopes $q_1$ and $q_2$, so that only the wedge opening angle matters, up to symmetry; the 2DTL Lorentz boost instead acts linearly on the underlying soliton phase parameters, which are related to the physical slopes only through a nonlinear map. Once expressed in the natural variable of each model, the physical slope for the KP equation and the soliton phase parameter for the 2DTL, the two symmetries share exactly the same structure. 

\section{Conclusions and future challenges}
\label{s:conclusions}

In summary, in this work we studied the formation and interaction of DSWs in the 2DTL~\eqref{eq_Toda1} subject to the wedge-type initial conditions~\eqref{InitialValue2dToda}.
We showed that, away from the symmetry axis, the jump across each branch of the wedge acts
locally as a Riemann problem for the one-dimensional Toda lattice, so that the leading-edge
soliton amplitude of each oblique DSW is completely determined by the one-dimensional Whitham modulation theory and depends only on the size of the jump.
The genuinely two-dimensional dynamics emerges from the interaction of the two oblique DSWs
along the axis $y=0$, whose nature is governed by the slope parameter $q$.
We identified two regimes, subcritical and supercritical, and quantified the critical angle $q_\mathrm{cr}$ based on the resonance conditions for the elastic two-soliton solutions of the 2DTL~\cite{BW2010}.
For type~II wedges, in the subcritical regime $|q|<q_{\mathrm{cr}}$ the interaction is resonant, and produces an expanding vertical stem DSW, whose amplitude, vertical expansion rate and horizontal velocity
were computed in closed form, again using the exact two-soliton solutions of the 2DTL.
In the supercritical regime $|q|>q_{\mathrm{cr}}$ the interaction is ordinary, and produces a
localized peak whose amplitude and velocity were obtained analytically.
For type~I wedges the two corresponding regimes were described qualitatively.
Along the way, we also derived a new class of periodic traveling wave solutions of the 2DTL
that generalizes the genus-one solutions of the one-dimensional lattice.
The analytical predictions were confirmed by direct numerical simulations, and the continuum
limit of the results was shown to be consistent with the corresponding theory for the
KP equation, thereby providing an independent validation of the analytical framework. Nevertheless, the quantitative differences between the
2DTL and the KP settings were also illustrated as the size of
the initial jump $\Delta$ was increased.

We believe that this work opens up a number of interesting follow-up problems.   For example, these include the following:

\textit{Transverse stability of the oblique periodic traveling waves.}
The oblique periodic traveling wave solutions of the 2DTL derived in
Section~\ref{s:2DTL_background} are the natural two-dimensional analogues of the genus-one
solutions of the one-dimensional lattice, and their transverse (in)stability is a prerequisite
for understanding the robustness of the DSW structures studied here. It would be interesting to
determine whether, and in which parameter ranges, these solutions are stable to long-wavelength
transverse perturbations.

\textit{Whitham modulation equations for the 2DTL and its applications.}
The periodic traveling wave solutions derived in this work provide the natural starting point
for the derivation of the Whitham modulation equations of the 2DTL, in the same spirit as the
two-dimensional modulation systems recently obtained for several continuous nonlinear evolution
equations~\cite{JPA2023v56p025701,JPA2018v51p215501,RSPA2017v473p20160695,PRE2017v96p032225,MJA-Cole-Rumanov,SAPM2024v152p597}.
Such a discrete two-dimensional Whitham system would put the soliton-based
description of the wedge dynamics on a firmer modulation-theoretic footing, and in particular
would furnish a fully quantitative description of both type~I regimes, which the present analysis only captured qualitatively.
Once available, the resulting modulation equations, or suitable reductions thereof, could be
used to study a variety of dynamical problems, as has been done in the continuous
setting~\cite{RSPA2022v478v20210823,NLTY2021v34p3583,JFM2021v909pA24}.
Particularly interesting would be the study of supercritical type~I regime, for which the oscillatory region near $y=0$ could be governed by a cylindrical KdV-type reduction analogous to the one identified for the KP equation in~\cite{BBH_PRL2025}.
The availability of the Whitham modulation equations for the 2DTL would also provide a quantitative framework to study the stability of the oblique periodic traveling wave solutions of the 2DTL,
similarly to what was recently done for the KP and other (2+1)-dimensional evolution equations in
\cite{RSPA2017v473p20160695,PRE2017v96p032225,SAPM2024v152p597,MJA-Cole-Rumanov,SAPM2024v152p597}.

Another intriguing further direction concerns the study of the elliptic two-dimensional Toda lattice. 
While in the hyperbolic case of the present work resonant interactions can generate stem DSWs, the elliptic case may support qualitatively different dynamics. 
In analogy with the transition from KPII to KPI, where line-soliton interactions are closely connected to the emergence of lump solutions, it would therefore be interesting to investigate whether wedge-type initial data in the elliptic 2DTL can generate localized coherent structures through similar resonant mechanisms. 
Such structures would provide a natural discrete analogue of the KPI lumps \cite{Pang_2023, https://doi.org/10.1002/mma.6370}.

More generally, the fact that phenomena such as stem DSWs appear in both the KP equation and 2DTL suggests that these might be rather general features of (2+1)-dimensional integrable systems,
like resonant multi-soliton solutions and web structure, which are found not only in the KP equation \cite{JPA36p10519,Kodama2017KPSA}
and in the 2DTL \cite{BW2010}, but also in the Davey-Stewartson system \cite{JPA2022v55p305701} and others 
\cite{IsojimaWilloxSatsuma,KodamaMaruno}.
It will therefore be an interesting question whether stem DSWs are also found in other (2+1)-dimensional integrable systems. 

Finally, we should emphasize that, since integrable systems are amenable to exact analytical treatment, they often provide a useful paradigm for the study of more general physical systems. 
We therefore expect that, similarly to the continuous setting, the phenomenology described here will persist, at least qualitatively, in non-integrable two-dimensional lattices, where the local Riemann structure that underlies the oblique DSWs is generic, even if the exact soliton machinery used to
characterize their interaction is not available.
Indeed, it would be interesting to investigate ---potentially via
numerical computations--- whether the same dichotomy between resonant
stem formation and ordinary interaction arises in non-integrable ones such as two-dimensional Fermi--Pasta--Ulam--Tsingou-type chains, for which the local one-dimensional Whitham theory is available but the exact
multi-soliton solutions are not. An  initial attempt along
this direction can be found, e.g., in the very recent work
of~\cite{yang2026localizedpatternsdispersivestructures}.
A comparison across lattice geometries would help disentangle which features of the wedge phenomenology are genuinely integrable and which are robust consequences of discreteness and dimensionality.

We hope that the results of this work and the above discussion will inspire further research along these directions.

\section*{Acknowledgments}
We thank Alexander Bivolcic for interesting discussions about the numerical methods and for providing the figures on the results of KP simulations that were reported in Figure~\ref{fig:Comparison}.
 This work was partially supported by the Simons Foundation under grant numbers SFI-MPS-TSM-00013369 (G.B.) and SFI-MPS-SFM-00011048 (P.G.K.).
This material is also based upon work supported by the US National Science Foundation under grant numbers DMS-2107945 (C.C.) and  PHY-2408988 (P.G.K.). 
This research was partly conducted while P.G.K.\ was  visiting the Okinawa Institute of Science and
Technology (OIST) through the Theoretical Sciences Visiting Program (TSVP), the University of
Sydney through the visitor program of the Sydney Mathematical Research Institute (SMRI) and the Department of Mechanical Engineering at Seoul National University through a Fulbright Fellowship. Their support is gratefully acknowledged.

\addcontentsline{toc}{section}{Appendix}
\section*{Appendix}
\setcounter{section}1
\setcounter{subsection}0
\setcounter{equation}0
\def\theequation{\Alph{section}.\arabic{equation}}
\def\thesection{\Alph{section}}

\subsection{Numerical algorithms}
\label{app:D}

Here we describe the details of the numerical scheme used to simulate the two-dimensional Toda lattice equation in the formulation of equation~\eqref{eq_Toda1}. 
The computational domain consists of $J=2400$ lattice sites in the $n$ direction and $H=6000$ uniformly spaced points in $y$ with spacing $\Delta y=0.25$, chosen large enough to contain all the dynamically relevant structure within an integration window $t\in [0,300]$.

\paragraph{Initial conditions.}
The initial field taken to be a smooth version of the wedge profile \eqref{InitialValue2dToda}. 
Introducing the variable $s=n+q|y|$, the initial numerical profile is:
\begin{equation*} \label{Initial_Num_profile}
    B_n(y,0)=B_-+\tfrac12(B_+-B_-)\big[ 1-\tanh(\delta s) \big],
\end{equation*}
where $\delta$ is the transition steepness parameter. 
The smoothing of the step via a hyperbolic function is not merely a numerical convenience: 
it eliminates the spurious small-scale oscillations that a sharp jump would otherwise generate and that would persist throughout the evolution. 
Crucially, this smoothing does not affect the analytical predictions, since the Whitham modulation theory depends only on the asymptotic states $B_\pm$ and not on the fine structure of the transition layer.
We used the value $\delta = 3$ in all the numerical simulations.

Another aspect that must be addressed is the fact that \eqref{eq_Toda1} is second-order in time, specifying $B_n(y,0)$ alone does not determine the solution: one must also prescribe $\dot{B}_n(y,0)$. To isolate the right-going dynamics, we exploit the Flaschka variable formulation of the 1DTL \eqref{Toda1dFlaschka}: using the relation $\dot{B}_n=a_{n+1}-a_n$ together with Eq. \ref{velocity},  we fix the Riemann invariant $\lambda_1$ to be constant across the jump, as described in Section 4.

\paragraph{Time integration.}
Since \eqref{eq_Toda1} is a second-order hyperbolic system possessing a Hamiltonian structure, we integrate it in time using the Verlet scheme. 
Denoting $B_n^j=B_n(\cdot,j\Delta t)$, 
given the field $B_n$ at two levels $B_n^j$ and $B_n^{j-1}$,
and taking the first update as 
$B_n^1=B_n^0+\dot{B}_n^0\Delta t+ \tfrac12{\Delta t^2}\mathcal{L}(B_n^0)$, 
the $j$-th update is
\begin{gather}
    B_n^{j+1}=2B_n^j-B_n^{j-1}+ \Delta t^2 \mathcal{L}(B_n^j),
\end{gather}
where $\mathcal{L}(B_n^j):=\partial_{yy}B_n^j+D^2_n[e^{B_n^j}]$,
the operator $D^2_n$ denotes the standard second-order discrete Laplacian (i.e., $D_n^2[f]=f_{n+1}-2f_n+f_{n-1}$) 
and $\Delta t$ is the time step. 
The scheme is symplectic: it preserves the discrete counterpart of the symplectic structure (i.e., the symplectic two-form) of the Toda lattice. The scheme is also explicit and second-order accurate in time.
In all simulations, we used the value $\Delta t= 0.25$. The spatial discretization in the transverse direction $y$ is described below, together with the absorbing boundary conditions.

\paragraph{Boundary conditions in the lattice direction.}
At the two endpoints $n=0$ and $n=J-1$ we impose homogeneous Neumann conditions on $e^{B_n}$ implemented via ghost points $e^{B_{-1}}:=e^{B_1}$ and $e^{B_{J}}:=e^{B_{J-2}}$, giving%
\bse
\begin{equations}
        D_0^2\left (e^{B_0}\right )=2\left (e^{B_1}-e^{B_0} \right ),\\
        D_{J-1}^2\left (e^{B_{J-1}} \right )=2\left (e^{B_{J-2}}-e^{B_{J-1}}\right ).
\end{equations}
\ese
Since the wedge initial data are designed to suppress the left-going DSW and isolate the right-going branch, the resulting stem (respectively, peak) propagates toward increasing values of $n$. In our simulations, the computational domain is chosen sufficiently large so that no boundary-generated disturbance reaches the interaction region over the time interval $0\le t\le 300$.

\paragraph{Absorbing boundary conditions in the transverse direction.}
The transverse direction requires a different treatment, since the wedge initial data radiates energy in the $y$ direction and periodic or reflective boundary conditions would contaminate the long-time dynamics with spurious reflections. We therefore surround the domain with a Convolutional Perfectly Matched Layer (CPML)  \cite{CPML}, which absorbs outgoing waves without reflection at the physical layer interface. Inside the absorbing layer, the $y-$derivative is replaced by a complex-stretched version:
\begin{equation}
    \partial_y \to \frac{1}{\Tilde{\kappa}(y)}\partial_y, \qquad 
    \tilde{\kappa}(y)=\kappa(y)+\frac{\sigma(y)}{\alpha(y)+i\omega},
\end{equation}
where $\sigma(y) \ge 0$ provides the exponential damping of the propagation waves. It vanishes on the physical boundary and grows inside the layer. The frequency shift parameter $\alpha(y)\ge 0$ regularizes the pole at $\omega=0$, and $\kappa(y)\ge 1$ is the stretching factor. In our implementation the CPML occupies $14\%$ of the transverse computational domain at each boundary. Within each absorbing layer, we introduce a normalized coordinate $\upsilon \in [0,1]$, measured from the interface with the physical domain ($\upsilon=0$) toward the outer boundary ($\upsilon=1$). The stretching factor $\kappa$ follows a cubic polynomial profile:
\begin{equation}
    \kappa(\upsilon)=1+ 4\upsilon^3
\end{equation}
so that $\kappa$ increases from $1$ at the physical domain (CPML interface) to $5$ at the outer boundary. The damping and the frequency-shift parameter $\sigma$ and $\alpha$ are also graded using cubic profiles. The resulting transformation is implemented via an auxiliary variable $\phi_n(y,t)$ that satisfies the update rule
\begin{equation} \label{Recursion}
    \phi_{n+1}=e^{-(\frac{\sigma}{\kappa}+\alpha)\Delta t}\phi_n+\frac{\sigma \left(e^{-(\frac{\sigma}{\kappa}+\alpha)\Delta t}-1\right)}{\kappa(\sigma +\kappa \alpha)}(\partial_y B)_n\,.
\end{equation}
More details on the CPML method and its implementation can be found in \cite{CPML}. The transverse coordinate  $y$ is discretized on a uniform grid of $H$ points with spacing $\Delta y$. The first derivative $\partial_y  B$ is computed via second-order central differences, the auxiliary memory variable $\phi_n$ is updated via the formula \eqref{Recursion}. A second order central difference operator is then applied to the flux $\tfrac{\partial_y B}{\kappa}+\phi_n$ to obtain the discrete approximation of $\partial_{yy}B$

\subsection{Further background on the one-dimensional Toda lattice}
\label{app:A}

\paragraph{Genus-two regularization.}
%
For completeness, here we report the genus-two regularization of the Riemann problem \eqref{RiemannProblemLambda}, which arises when the two intervals 
\begin{equation*}
L_1:=[\min{(\lambda_1^-,\lambda_1^+)},\max{(\lambda_1^-,\lambda_1^+)}],\qquad
L_2:=[\min{(\lambda_2^-,\lambda_2^+)},\max{(\lambda_2^-,\lambda_2^+)}]
\end{equation*}
are disjoint and $\lambda_1^- \le \lambda_1^+$ or $\lambda_2^- \le \lambda_2^+$.  
In this case, a full regularization of the problem requires the embedding of the initial data \eqref{RiemannProblemLambda} into a degenerate Whitham genus-two system. 
Denoting
$(\lambda_1^o,\lambda_2^o,\lambda_3^o,\lambda_4^o):=\mathrm{ord}(\lambda_1^-,\lambda_2^-,\lambda_1^+,\lambda_2^+)$ where $\mathrm{ord} (v_1,...,v_n):=(v_{i_1},....,v_{i_n})$ with $v_{i_1}\le v_{i_2} \le ... \le v_{i_n}$, the solution is obtained,
similarly to \cite{AMBYK},
by considering the genus-two Whitham system with non-increasing initial conditions:
\begin{align*} 
    &\lambda_1(x,0)=\lambda_1^o,\qquad  
    \lambda_2(x,0)=\begin{cases}
        \lambda_2^o, &x\le0, \\ \lambda_1^o, &x>0,
    \end{cases}\qquad
    \lambda_3(x,0)=\lambda_2^o,
    \\ 
    & \lambda_4(x,0)=\lambda_3^o,\qquad
    \lambda_5(x,0)= \begin{cases}
        \lambda_4^o, &x\le0, \\ \lambda_3^o, &x>0,
    \end{cases}\qquad
    \lambda_6(x,0)=\lambda_4^o\,.
    \end{align*}
Since these initial conditions are non-increasing, the Whitham system \eqref{WhitamTodaGenusg}  admits a global solution in $\mathbb{R}\times [0,\infty)$ given by
\begin{align*} 
    &\lambda_1(x,t)=\lambda_1^o,\qquad
    \lambda_2(x,t)=\begin{cases}
        \lambda_2^o \ \ x\le s_2^-t, \\ \lambda_1^o \ \ x>s_2^+t,
    \end{cases}\qquad
    \lambda_3(x,t)=\lambda_2^o,\\
    & \lambda_4(x,t)=\lambda_3^o,\qquad 
    \lambda_5(x,t)= \begin{cases}
        \lambda_4^o\ \ x\le s_5^-t, \\ \lambda_3^o \ \ x>s_5^+t,
    \end{cases}\qquad 
    \lambda_6(x,t)=\lambda_4^o,
    \end{align*}
where $\lambda_1(x,t), \lambda_3(x,t), \lambda_4(x,t)$ a and $\lambda_6(x,t)$ remain constant while $\lambda_2$ and $\lambda_5$ vary smoothly in the expanding regions $s_2^-t < x < s_2^+t$ and $s_5^- <x < s_5^+ t$ respectively, these regions can be obtained using a similar computation as in Section~\ref{s:Riemann1DTL}.

\paragraph{Amplitudes of the leading edge solitons of the DSW in Flaschka variables}
%
Here we prove the formulae for the extremal values of the Flaschka variables for each DSW produced by the ICs~\eqref{RiemannProblemFlaschka} in Section \ref{s:Riemann1DTL}.
These expressions are then used in Section~\ref{s:wedge}.

\begin{proposition}
The Riemann problem \eqref{RiemannProblemFlaschka} produces two counter-propagating DSWs.
The extremal values of the Flaschka variables within each DSW are:
\bse
\begin{align}
    &a^L_{\mathrm{max}} = -2\sqrt{b_-}+a_++2\sqrt{b_+},\qquad
    a^L_{\mathrm{min}} = a_-,\qquad
    b_{max}^L = \tfrac14{(a_+-a_-+2\sqrt{b_+})^2},\qquad
    b_{\mathrm{min}}^L = b_- \\
    &a^R_{\mathrm{max}} = -2\sqrt{b_+}-a_-+2\sqrt{b_-},\qquad
    a^R_{\mathrm{min}} = -a_+,\qquad
    b^R_{\mathrm{max}} = \tfrac14{(a_+-a_-+2\sqrt{b_-})^2},\qquad
    b_{\mathrm{min}}^R = b_+.
\end{align}
\ese
\end{proposition}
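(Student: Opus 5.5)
The plan is to read off the extremal values from the explicit genus-one solution \eqref{PeriodicSolutions}, evaluated at the two ends of each DSW, where the modulated wave train degenerates. In the genus-one regularization the Riemann invariants are $(\lambda_1,\lambda_2,\lambda_3,\lambda_4)$, with $\lambda_1=\lambda_1^o=\lambda_1^-$ and $\lambda_4=\lambda_4^o=\lambda_2^+$ held constant. Inside the left-going DSW we have $\lambda_2=\lambda_2^o$ and $\lambda_3$ varying. Inside the right-going DSW, $\lambda_3=\lambda_3^o$ and $\lambda_2$ varies. At each DSW the oscillation is bounded between the values attained when $\sn^2$ equals $0$ and when it equals $1$, and these are largest at the soliton edge, where two branch points coalesce and $m\to1$.

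\textbf{Step 1: extremes of $a_n$.} By \eqref{e:a_n(t)_periodic}, $a_n=\tfrac12\sum\lambda_i-2\mu_n$. The Dirichlet eigenvalue $\mu_n$ oscillates in the gap. From the formula for $\mu_n$, $\mu_n=\lambda_2/2$ when $\sn=0$. When $\sn=1$, $\mu_n=\tfrac12(\lambda_2-\lambda_1W)/(1-W)$, which simplifies using $W=(\lambda_3-\lambda_2)/(\lambda_3-\lambda_1)$ to $\lambda_3/2$. Hence
\[
a\in\big[\tfrac12(\lambda_1-\lambda_2-\lambda_3+\lambda_4),\ \tfrac12(\lambda_1+\lambda_2-\lambda_3+\lambda_4)\big].
\]
Next I substitute the edge values. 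At the soliton edge of the left DSW, $\lambda_3\to\lambda_4$, and we insert $\lambda_1=a_--2\sqrt{b_-}$, $\lambda_2=a_-+2\sqrt{b_-}$, $\lambda_4=a_++2\sqrt{b_+}$. The resulting max/min are then compared with the stated $a^L_{\max}$, $a^L_{\min}=a_-$, the latter being the background value $\tfrac12(\lambda_1+\lambda_2)$. The right DSW is treated the same way with $\lambda_2\to\lambda_1$ and $\lambda_3=\lambda_1^+$. I expect the sign conventions (which of $\sn^2=0,1$ gives max or min, and the stated $a^R_{\min}=-a_+$ versus the background $a_+$) to require care. I would check them against the constant states the DSW must connect to at its harmonic edge.

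\textbf{Step 2: extremes of $b_n$.} I would avoid handling $\hat R_n$ directly. At $\sn^2\in\{0,1\}$ we have $\mu_n=\lambda_2/2$ or $\lambda_3/2$, both roots of $P$, so $\hat R_n=0$ there. Hence $b_{n+1}=\tfrac18\sum\lambda_i^2-\mu^2-\tfrac14a^2$ becomes an explicit quadratic in the $\lambda_i$. The computation then proceeds as follows:
\begin{itemize}
\item In the soliton limit, evaluate this quadratic at the edge values.
\item Check that it factorizes as a perfect square such as $\tfrac1{16}(\lambda_4-\lambda_1-\ldots)^2$. This should yield $\tfrac14(a_+-a_-+2\sqrt{b_+})^2$ for the left DSW and $\tfrac14(a_+-a_-+2\sqrt{b_-})^2$ for the right one.
\item At the other phase the expression should reduce to $\tfrac1{16}(\lambda_2-\lambda_1)^2=b_-$, respectively $b_+$.
\end{itemize}

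\textbf{Main obstacle.} I expect the hardest part to be justifying that these phase-endpoint values at the soliton edge are indeed the global extrema over the whole DSW. This requires monotonicity of the oscillation envelope in the varying invariant across the fan, for example that the amplitude increases as $\lambda_3$ moves from $\lambda_2$ to $\lambda_4$. I would argue this from the explicit dependence of the endpoint formulas on the single varying invariant, which is affine or quadratic and monotone on the relevant interval. A secondary difficulty is the index shift $B_n=\log b_{n+1}$ and the bookkeeping of signs.
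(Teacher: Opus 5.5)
\textbf{Step 1 matches the paper; Step 2 has a genuine gap.} Step 1 follows the paper's route. Since $a_n$ is affine in $\mu_n$, its extremes sit at $\mu=\tfrac12\lambda_2$ ($\sn=0$) and $\mu=\tfrac12\lambda_3$ ($\sn^2=1$). Your displayed interval has a slip, though: it should read $a\in[\tfrac12(\lambda_1+\lambda_2-\lambda_3+\lambda_4),\,\tfrac12(\lambda_1-\lambda_2+\lambda_3+\lambda_4)]$, with the maximum at $\sn=0$.

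Step 2 assumes that the extrema of $b$ occur at $\sn^2\in\{0,1\}$, where $\hat R_n=0$. This is false for the maximum. In the soliton limit $\lambda_3=\lambda_4$, your quadratic correctly gives the minimum $\tfrac1{16}(\lambda_2-\lambda_1)^2=b_-$ at $\mu=\tfrac12\lambda_4$. At $\mu=\tfrac12\lambda_2$, however, it gives
\[
\tfrac1{16}(\lambda_2-\lambda_1)(4\lambda_4-\lambda_1-3\lambda_2)=\tfrac1{16}(2\lambda_4-\lambda_1-\lambda_2)^2-\tfrac14(\lambda_4-\lambda_2)^2 .
\]
This is not a perfect square, and it falls strictly below the claimed $b^L_{\max}$ whenever $\lambda_4>\lambda_2$. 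For example, $\lambda_1=-2$, $\lambda_2=2$, $\lambda_4=6$ gives $5$ at that endpoint, while the true maximum is $9$. So the factorization check you plan would fail, for the right-going DSW as well. The cause is that $b_{n+1}$ depends on $\mu_n$ through the non-polynomial term $2\hat R_n=\mp 2\sqrt{P(\mu_n)}$. As a result $b$ is not monotone between the turning points of $\mu_n$, and its peak lies away from the turning point of the Dirichlet eigenvalue, which reflects the index offset between $b_{n+1}$ and $\mu_n$.

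\textbf{The missing step, as the paper does it.} Keep $\hat R$ on the branch where it is nonnegative, $R(\mu)=(\tfrac12\lambda_4-\mu)\sqrt{(\mu-\tfrac12\lambda_1)(\mu-\tfrac12\lambda_2)}$, and maximize $b(\mu)$ over $[\tfrac12\lambda_2,\tfrac12\lambda_4]$. Setting $db/d\mu=0$ gives the interior critical point $\mu_*=\frac{\lambda_4^2-\lambda_1\lambda_2}{2(2\lambda_4-\lambda_1-\lambda_2)}$, and then $b(\mu_*)=\tfrac1{16}(2\lambda_4-\lambda_1-\lambda_2)^2=\tfrac14(a_+-a_-+2\sqrt{b_+})^2$.

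A clean way to see this: set $\delta=\tfrac12(\lambda_2-\lambda_1)$, $h=\tfrac14(2\lambda_4-\lambda_1-\lambda_2)$ and $\mu=\tfrac14(\lambda_1+\lambda_2)+\tfrac{\delta}{2}\cosh\phi$. Both sign branches of $\hat R$ then combine into $b=\delta h\,e^{\phi}-\tfrac14\delta^2e^{2\phi}$. This is concave in $e^{\phi}$ and has maximum $h^2$ at $e^{\phi}=2h/\delta$, which lies inside the allowed range. The right DSW then follows from the reflection $\tilde a_n=-a_{-n}$, $\tilde b_n=b_{-n+1}$, or from the analogous computation with $\lambda_1=\lambda_2$.

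\textbf{Your concern about $a^R$ is justified.} A direct computation gives $a^R_{\max}=a_+$ and $a^R_{\min}=a_--2\sqrt{b_-}+2\sqrt{b_+}$, which are the negatives of the stated values with max and min exchanged. The paper's symmetry step never undoes $a\mapsto -a$, so only the difference $A^R_a$ comes out as stated.

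Finally, the monotonicity across the fan that you list as the main obstacle is not addressed by the paper at all; it works only in the soliton limit. If you pursue that argument, it must be built on the interior-maximum formula above, not on the endpoint quadratic.
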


\begin{proof}
The extremal values of $a_n$ follow from equation~\eqref{e:a_n(t)_periodic},
which, combined with the bounds $0\le sn(Z_n(t),\gamma)\le 1$, give
\begin{align*}
    a_{\mathrm{min}}=\tfrac{1}{2}(\lambda_1+\lambda_2-\lambda_3+\lambda_4),\qquad
    a_{\mathrm{max}}=\tfrac{1}{2}(\lambda_1-\lambda_2+\lambda_3+\lambda_4) 
\end{align*}
Substituting the appropriate Riemann invariants in the solitonic limit yields the expression for the left extremal values.

For the $b_n$ coordinate, in the solitonic limit of the periodic solutions \eqref{PeriodicSolutions} one has: 
\begin{equation*}
    b(\mu) = 
    \tfrac{1}{4} \left( \tfrac12\lambda^2_1+\tfrac12\lambda^2_2+\lambda^2_4 \right)
    + 2R(\mu)-\mu^2-\tfrac{1}{4}a^2(\mu),
\end{equation*}
where $R(\mu)=(\tfrac12\lambda_4-\mu)\sqrt{(\mu-\frac12\lambda_1)(\mu-\frac12\lambda_2)}$ and $a(\mu)=\tfrac12\lambda_1+\tfrac12\lambda_2+\lambda_4-2\mu$, using the fact that, in the degenerate limit, of the genus-one Whitham equations, the left Riemann invariants satisfy $\lambda_1 < \lambda_2 < \lambda_3 \equiv\lambda_4$.
The minimum value of $b$ is attained  at $\mu =\frac12\lambda_4$, giving
\begin{equation*}
    b^L_{\mathrm{min}} =
    \tfrac{1}{4} \left( \tfrac12\lambda^2_1+\tfrac12\lambda^2_2+\lambda^2_4 \right)
    - \tfrac14\lambda^2_4 - \tfrac1{16}(\lambda_1+ \lambda_2)^2 = \tfrac1{16}(\lambda_2-\lambda_1)^2.
\end{equation*}
Upon substituting $\lambda_1=\lambda_1^-=a_--2\sqrt{b_-}$ and $  \lambda_2=\lambda_2^-=a_-+2\sqrt{b_-}$  one obtains $b^L_{\mathrm{min}}=b_-$, as claimed.
For the maximum value $b_{\mathrm{max}}^L$ one has:
\begin{equation*}
    \frac{db}{d\mu}=2 R'(\mu)-2\mu+a(\mu)
\end{equation*}
with $R'(\mu)= -\sqrt{(\mu-\frac12\lambda_1)(\mu-\frac12\lambda_2)}
  +  \frac14\left( \frac12\lambda_4-\mu \right)\big[4\mu - \lambda_1 - \lambda_2\big]\big/\sqrt{(\mu-\frac12\lambda_1)(\mu-\frac12\lambda_2)}.$
Setting ${db}/{d\mu}=0$ yields the critical point
\begin{equation*}
\mu_*=\frac{\lambda_4^2-\lambda_1\lambda_2}{2(2\lambda_4-\lambda_1-\lambda_2)},
\end{equation*}
at which
\begin{equation*}
    b^L_{\mathrm{max}}:=b(\mu_* )=\tfrac1{16}{(2\lambda_4-\lambda_1-\lambda_2)^2}.
\end{equation*}
Upon substituting the expressions for $\lambda_1,\lambda_2,\lambda_4$ written in terms of $a_\pm,b_\pm$ one obtains $ b^L_{\mathrm{max}}=\frac14{(a_2-a_1+2\sqrt{b_2})^2}$ as claimed. 

The values for the right propagating DSW are obtained via the symmetry $\Tilde{a}_n(t)=-a_{-n}(t)$ and $\Tilde{b}_{n}(t)=b_{-n+1}(t)$, which corresponds to replacing $(a_-,b_-,a_+,b_+)$ with $(-a_+,b_+,-a_-,b_-)$
\end{proof}

\subsection{Soliton solutions of the two-dimensional Toda lattice in laboratory coordinates}
\label{app:B}

\paragraph{One-soliton solution in laboratory coordinates.}
We derive the one-soliton solution of the 2DTL in laboratory frame. The line soliton solutions for \eqref{e:Toda2D_0} in light cone variables are given by \cite{BW2010}:
\begin{equation} \label{1Soliton}
    V_n(r,s)=\frac{1}{4}a \omega \sech^2\left [\frac{1}{2}a \left(r+dn+\frac{\omega  s}{a}-r_0 \right) \right ]   
\end{equation}
where $a,\omega,d$ and $r_0$ are real parameters. Passing to laboratory frame coordinates via $r=\frac{t+y}{2}$ and $r=\frac{t-y}{2}$, Eq. \eqref{e:Toda2D_0} becomes:
\begin{equation}\label{Toda3yt}
    \Box\log{(1+V_n)}=V_{n+1}-2V_n+V_{n+1}
\end{equation}
where $\Box:= \partial^2_t-\partial_y^2$ is the wave operator. Applying the same transformation of variables to the soliton solution \eqref{1Soliton} yields:
\begin{equation}\label{1Solitonyt}
    \Tilde{V}_n(y,t) =A\sech^2\left [\Tilde{a}\left(\left(\frac{1}{2} -\frac{\omega}{4\Tilde{a}}\right)y+dn+\left(\frac{1}{2}+\frac{\omega
    }{4\Tilde{a}} \right)t-\frac{t_0+y_0}{2} \right) \right ] 
\end{equation}
with $A:= \frac14a\omega$, $\tilde{a}=\frac12{a}$ and where 
$\tilde V_n(y,t):=V_n\big(\frac12(t+y),\frac12(t-y)\big)$.

Following \cite{BW2010}, the parameters $a$ and $\omega$ are expressed in terms of two positive real parameters $p_1<p_2$:  
\begin{equation*}
a = p_2-p_1, \qquad 
\omega = \frac{1}{p_1}-\frac{1}{p_2}.
\end{equation*}
Using this parameterization, the solution \eqref{1Solitonyt} becomes
\begin{align*}
\tilde V_n(y,t) = A(p_1,p_2)\sech^2 \left [ \Tilde{a}(p_1,p_2) \left( \left( \frac{1}{2}-\frac{1}{2p_1p_2}\right)y+dn+ \left( \frac{1}{2}+\frac{1}{2p_1p_2}\right)t - \frac{t_0+y_0}{2} \right)\right ],
\end{align*}
with 
\begin{equation*}
A(p_1,p_2) = -\frac{1}{4}(p_2-p_1)\left( \frac{1}{p_1}-\frac{1}{p_2}\right),\qquad
\Tilde{a}(p_1,p_2)=\frac{p_2-p_1}{2}.
\end{equation*}
This is the parameterization adopted in \cite{BW2010}.
For our purposes, it is convenient to introduce a logarithmic parameterization given by $P_i=\log{p_i}$ for $i=1,2$. Then, using the identities
\begin{gather*}
    \frac12 \left( \frac{1}{p_i}+p_i \right) = \frac{e^{P_i}+e^{-P_i}}{2}=\cosh{P_i}\,,\qquad
    \frac12 \left( \frac{1}{p_i}-p_i \right) = \frac{-e^{P_i}+e^{-P_i}}{2}=-\sinh{P_i}\,,
\\
A = \frac{1}{4}(p_i-p_j)\left( \frac{1}{p_j}-\frac{1}{p_i}\right)
    =\left( \frac{p_i-p_j}{2\sqrt{p_ip_j}}\right)^2 
    = \frac14 \left( e^{\frac12(\log{p_i}-\log{p_j})}-e^{-\frac12(\log{p_i}-\log{p_j})} \right)^2 
    =  \sinh^2\left( \frac{P_i-P_j}{2} \right),
\end{gather*}
one immediately obtains equation~\eqref{SolitonHyperbolic}.

This waveform constitutes the two-dimensional Toda lattice parameterized by $P_1$ and $P_2$. As we showed in Section~\ref{s:wedge}, these solutions combined with the one-dimensional Whitham modulation theory presented in Section~\ref{s:1DTL} allowed us to analytically obtain the leading-edge line solitons formed during the evolution of the wedge problem explored in Section~\ref{s:wedge}.

\paragraph{Inversion formulae.}
Here we derive the inversion formulae~\eqref{InverseRelation} that allow one to express the soliton parameters $P_1$ and $P_2$ in terms of the physically significant quantities amplitude $A$ and slope $q$ defined in \eqref{Amplitude} and \eqref{Slope}.

\begin{proposition}
The system
\begin{equation}\label{physicalquant.}
    A=\sinh^2 \left( \tfrac12(P_2-P_1) \right)\,,\qquad 
    q=\frac{\cosh{P_2}-\cosh{P_1}}{P_2-P_1}\,,
\end{equation}
can be inverted for $P_2>P_1$ to give \eqref{InverseRelation} with $\alpha$ and $\gamma$ as in \eqref{e:alphagammafromAq}.
\end{proposition}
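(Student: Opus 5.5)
The plan is to pass to half-sum and half-difference coordinates, in which the system \eqref{physicalquant.} decouples into two scalar equations that can each be inverted by an elementary monotone function. Set
\[
\alpha:=\tfrac12(P_2-P_1),\qquad \gamma:=\tfrac12(P_1+P_2),
\]
so that $P_1=\gamma-\alpha$ and $P_2=\gamma+\alpha$. This is exactly the form \eqref{InverseRelation}. The requirement $P_2>P_1$ is then equivalent to $\alpha>0$, and $\gamma\in\mathbb R$ is unrestricted. It therefore suffices to express $\alpha$ and $\gamma$ in terms of $(A,q)$ and to check that this is a bijection.

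First I will handle the amplitude equation, which involves only $\alpha$. Indeed $A=\sinh^2\alpha$. Since $\alpha>0$, we have $\sinh\alpha>0$, so $\sinh\alpha=\sqrt{A}$. Because $\sinh$ is a strictly increasing bijection of $(0,\infty)$, this gives $\alpha=\arcsinh(\sqrt A)$ uniquely for every $A>0$. This matches the first formula in \eqref{e:alphagammafromAq}.

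Next I will treat the slope equation using the sum-to-product identity
\[
\cosh P_2-\cosh P_1=2\sinh\!\big(\tfrac12(P_1+P_2)\big)\sinh\!\big(\tfrac12(P_2-P_1)\big)=2\sinh\gamma\,\sinh\alpha .
\]
Together with $P_2-P_1=2\alpha$, this yields
\[
q=\frac{\sinh\gamma\,\sinh\alpha}{\alpha},\qquad\text{that is,}\qquad \sinh\gamma=q\,\frac{\alpha}{\sinh\alpha}.
\]
Since $\sinh$ is a bijection of $\mathbb R$, this determines $\gamma$ uniquely. Substituting $\alpha=\arcsinh(\sqrt A)$ and $\sinh\alpha=\sqrt A$ gives $\gamma=\arcsinh\!\big(q\,\arcsinh(\sqrt A)/\sqrt A\big)$, which is the second formula in \eqref{e:alphagammafromAq}.

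Finally I will verify that the map is a bijection from $\{P_1<P_2\}$ onto $\{A>0,\ q\in\mathbb R\}$. The steps above reverse directly: given any $(A,q)$ with $A>0$, the formulas produce some $\alpha>0$ and $\gamma\in\mathbb R$, hence a pair $P_1<P_2$ that satisfies \eqref{physicalquant.}. Every step is forced, so the inverse is unique. I do not expect a real obstacle here. The only point requiring care is the sign choice $\sinh\alpha=+\sqrt A$, and the ordering $P_2>P_1$ resolves it; as noted in the text, this ordering involves no loss of generality because \eqref{SolitonHyperbolic} is symmetric under $P_1\leftrightarrow P_2$.
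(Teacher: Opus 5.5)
Your proposal is correct and follows essentially the same route as the paper. You pass to the half-difference $\alpha=\tfrac12(P_2-P_1)$ and half-sum $\gamma=\tfrac12(P_1+P_2)$, use $P_2>P_1$ to fix $\sinh\alpha=+\sqrt{A}$, and apply the sum-to-product identity to get $\sinh\gamma=q\,\alpha/\sinh\alpha$; your added check that the map is a bijection from $\{P_1<P_2\}$ onto $\{A>0,\ q\in\mathbb{R}\}$ is a small completion that the paper leaves implicit.
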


\begin{proof}
Define the following quantities:
\begin{equation*}
\nu:=P_2-P_1\,,\qquad 
\gamma:=\tfrac12(P_1+P_2)\end{equation*}
From the first equation, one has $\sinh{\left(\frac{\nu}{2} \right)}=\sqrt{A}$, hence $\nu=2\alpha$ since we have $P_2>P_1$ by hypotheses. 
Using the identity 
$\cosh{u}-\cosh{v} = 2 \sinh\left( \frac12(u+v) \right)\sinh{\left( \frac12(u-v) \right)}$, the slope formula becomes
\begin{equation*}
    q=\frac{\cosh{P_2}-\cosh{P_1}}{P_2-P_1}=\frac{2\sinh\gamma\sinh{\left(\frac{\nu}{2} \right)}}{\sigma2\alpha}= \frac{\sqrt{A}}{\alpha} \sinh\gamma
\end{equation*}
which gives
\begin{equation*}
\gamma=\arcsinh\left( q \frac{\arcsinh(\sqrt{A})}{\sqrt{A}}\right)\end{equation*}
Inverting the definitions for $\Delta(P_1,P_2)$ and $\gamma(P_1,P_2)$ and using the last expression for $\gamma$ one has
\begin{gather*}
    P_1=\arcsinh \left( q\frac{\arcsinh(\sqrt{A})}{\sqrt{A}}\right)- \arcsinh(\sqrt{A}), \qquad
    P_2= \arcsinh \left( q\frac{\arcsinh(\sqrt{A})}{\sqrt{A}}\right)+ \arcsinh(\sqrt{A}),
\end{gather*}
which is what we claimed, namely \eqref{InverseRelation}.
\end{proof}

\begin{remark}  
The choice $P_2>P_1$ is made without loss of generality, since \eqref{SolitonHyperbolic} is symmetric under the exchange $P_1 \leftrightarrow P_2$.   
\end{remark}

\subsection{Quantitative analysis of subcritical and supercritical type~II wedges}
\label{app:C}

\paragraph{Subcritical type II wedges.}

Here we derive the analytical expressions for the time  evolution of the upper and  lower edges of the stem DSW in  the subcritical type II regime, which are  used in Section~\ref{s:SubII}. Recall that in this regime the soliton parameters are  $(P_1,P_2,P_3,P_4)=(\gamma-\alpha,-\gamma-\alpha,\alpha + \gamma,\alpha - \gamma)$ with $-\alpha<\gamma<0$, and the phases of the two-soliton solution are:
\begin{equation*}
\theta_i=nP_i+y\cosh{P_i}-t\sinh{P_i}+\theta_{0i} \ \ \ i=1,2,3,4
\end{equation*}
The bottom edge $(n_-(t),y_-(t))$ is the triple point at which the three solitons with  index  pairs $[1,2],[1,4]$ and $[2,4]$ meet, i.e., the point where  $\theta_1=\theta_2=\theta_4$. This yields the system
\begin{equation*}
    \theta_1=\theta_4,\qquad 
    \theta_1=\theta_2.
\end{equation*}
The first condition above yields 
\begin{equation} \label{theta1theta4}
    n(P_1-P_4)+y(\cosh{P_1}-\cosh{P_4)}-t(\sinh{P_1}-\sinh{P_4})+\theta_{01}-\theta_{04}=0
\end{equation}
Now observe the following:
\begin{equation*}
    P_4-P_1=2(\gamma+\alpha), \qquad 
    \cosh{P_4}-\cosh{P_1}=0, \qquad
    \sinh{P_4}-\sinh{P_1}=2\sinh{(\gamma+\alpha)},
\end{equation*}
which, substituting in Eq. \eqref{theta1theta4} gives:
\begin{equation*}
    2(\gamma+\alpha)n + 2\sinh{(\gamma+\alpha)}t+\theta_{4,0}-\theta_{1,0}=0,
\end{equation*}
from which one obtains
\begin{equation}\label{nCoordinate}
 n_-(t)=\frac{\sinh{(\gamma+\alpha)}}{\gamma+\alpha}t+n_-^0   \qquad 
 n_-^0 = \tfrac12(\theta_{4,0}-\theta_{1,0}),
\end{equation}
which is the equation of the evolution of the $n-$coordinate dynamics for the top edge of the stem.

Next, considering the equation $\theta_1=\theta_2$  one has
\begin{equation}\label{theta1theta2}
    n(P_1-P_2)+y(\cosh{P_1}-\cosh{P_2)}-t(\sinh{P_1}-\sinh{P_2})+\theta_{01}-\theta_{02}=0
\end{equation}
from which we compute 
\begin{equation*}
    P_2-P_1 = 2\gamma, \qquad
    \cosh{P_2}-\cosh{P_1} = 2\sinh\gamma \,\sinh\alpha,\qquad
    \sinh{P_2}-\sinh{P_1} = 2\cosh\alpha \,\sinh\gamma,
\end{equation*}
which, substituting in Eq. \eqref{theta1theta2} gives:
\begin{equation*}
    2\gamma n-2y\sinh\gamma\sinh\alpha-2t\sinh\gamma\cosh\alpha+\theta_{2,0}-\theta_{1,0}=0
\end{equation*}
Substituting the expression for $n_-(t)$ given from Eq. \eqref{nCoordinate} one obtains 
\begin{equation} \label{yCoordinate}
 y_-(t) = - \frac{\alpha\coth{\alpha-\gamma\coth{\gamma}}}{\gamma+\alpha} t+y_-^0 \qquad 
 y_-^0 = \frac{(\theta_{4,0}-\theta_{1,0})\alpha+\theta_{2,0}-\theta_{1,0}}{2\sinh\gamma\sinh\alpha},
\end{equation}
which is the equation of the evolution of the $y-$coordinate dynamics for the lower edge of the stem.

Equations \eqref{nCoordinate} and \eqref{yCoordinate} can  be obtained for the upper edge of the stem by solving the system for the triple point  $[2,3],[2,4]$ and $[3,4]$.

\paragraph{Supercritical type II wedges.}
%
Here we derive the analytical expressions for the peak amplitude and horizontal velocity in the supercritical type II regime, stated in Section \ref{s:SuperII} as Eq. \eqref{peak}. In this regime, the soliton parameters are $(P_1,P_2,P_3,P_4)=(-\gamma-\alpha,\alpha - \gamma,\gamma-\alpha,\alpha + \gamma)$ with $0<\alpha<\gamma$. 
The exact ordinary two-soliton solution is given by \eqref{Bsoliton}
with $N=2$ and $M=4$, which yields 
\eqref{e:2DTL_ordinary2soliton},
with $\Theta_{ij}(n,y,t)=\theta_i(n,y,t)+\theta_j(n,y,t)$ and $S=\{(1,3),(1,4),(2,3),(2,4)\}$. 
Setting $\theta_{i,0}=0$ $\forall i=1,2,3,4$, the phase combinations are as follows:
\begin{gather*}
    \theta_1+\theta_3 = -X + 2y\cosh\gamma\cosh\alpha,\qquad
    \theta_2+\theta_4 = X + 2y\cosh\gamma\cosh\alpha,\\
     \theta_1+\theta_4 = 2y \cosh(\alpha+\gamma),\qquad
     \theta_2+\theta_3 = 2y\cosh(\alpha-\gamma),
\end{gather*}
where $X=2n\alpha-2\sinh\alpha\cosh\gamma t$. 
The coefficients appearing  in \eqref{e:2DTL_ordinary2soliton} can be expressed as
\begin{gather*}
        e^{P_4}-e^{P_2} = 2e^\alpha \sinh\gamma,\qquad
        e^{P_3}-e^{P_1} = 2e^{-\alpha} \sinh\gamma, \\
        e^{P_4}-e^{P_1} = 2 \sinh(\alpha+\gamma),\qquad 
        e^{P_3}-e^{P_2} = 2 \sinh(\gamma-\alpha),
\end{gather*}
so the tau function in \eqref{Bsoliton} reduces to:
\begin{equation*}
    \tau_n = 2e^{-\alpha}\sinh\gamma e^{-X+\mathcal Cy} + 2e^{\alpha}\sinh\gamma e^{X+\mathcal Cy}
        +2\sinh(\alpha+\gamma)\,e^{D_+y}+2\sinh(\gamma-\alpha)\,e^{D_-y}\,,
\end{equation*}
with $\mathcal C=2\cosh\gamma\cosh\alpha$, $D_+=2\cosh(\alpha+\gamma)$ and $D_-=2\cosh(\alpha-\gamma)$. 
Observing that $D_+=\mathcal C+2\sinh\gamma\sinh\alpha$ and $D_-=\mathcal C-2\sinh\gamma\sinh\alpha$, we can factor out the common term $e^{\mathcal{C}y}$ and write $\hat{\tau}_n:=e^{-\mathcal{C}y\tau_n}$. Since this factor is independent of the lattice index $n$, it cancels identically in the ratio $\tau_{n+1}\tau_{n-1}/\tau_n^2$ defining $B_n$  in \eqref{Bsoliton}. We may therefore equivalently work with
\begin{equation*}
    \hat\tau_n = 2e^{-\alpha}\sinh\gamma e^{-X}+2e^{\alpha}\sinh\gamma e^{X}+2\sinh{(\gamma-\alpha)}e^{-\mathcal Z}+2\sinh{(\alpha + \gamma)}e^{\mathcal Z},
\end{equation*}
with $\mathcal Z=2y\sinh\gamma\sinh\alpha$. 
After a few computations, one then obtains
\begin{equation*}
     \hat \tau_n = 4\sinh\gamma\cosh(X+\alpha)+4\sqrt{\sinh^2\gamma-\sinh^2\alpha}\,\cosh\left(\mathcal Z+\frac{1}{2}\log\bigg[\frac{\sinh(\gamma+\alpha)}{\sinh(\gamma-\alpha)}\bigg]\right).
\end{equation*}
Now we introduce for convenience the quantities 
\begin{equation*}
    \mathcal U:=X+\alpha,\qquad 
    \mathcal V:=\mathcal Z+\tfrac{1}{2}\log[\sinh(\gamma+\alpha)/\sinh(\gamma-\alpha)],\qquad 
    \rho:=\sqrt{\sinh^2{\gamma}-\sinh^2\alpha}\,.
\end{equation*}
Using the shift property $X(n\pm1)=X\pm 2 \alpha $ which implies $ \mathcal U(n\pm1)=\mathcal U\pm 2 \alpha $, the explicit two-soliton solution becomes~\eqref{e:2DTL_ordinary2soliton} with the constants given by the compact form~\eqref{SuperSol}.

\subsection{Asymmetric wedges, further differences between the 2DTL and the Kadomtsev-Petviashvili equation}
\label{app:E}

In the main text we considered symmetric wedges, i.e., ICs for which the slope for $y>0$ and that for $y<0$ were equal and opposite.  
In this appendix we consider asymmetric wedges, in which these two slopes are unrelated, and we show that, once the symmetries of the problem are taken into account, similar phenomena arise.
Along the way, we perform the same kind of analysis for the Kadomtsev-Petviashvili equation.

\paragraph{Kadomtsev-Petviashvili equation.}
Even though the KPII equation~\eqref{KP} is not invariant under rotations of the $xy$-plane, it possesses a ``pseudo-rotational'' symmetry (e.g., see \cite{RSPA2017v473p20160695})
\begin{equation}\label{PseudoRotation}
    \tilde U(x,y,t)\mapsto \tilde U(x+\zeta y-\zeta^2t,y-2\zeta  t,t)
\end{equation}
To see how this symmetry affects the line solitons, consider a generic exponential phase for the KP equation,
\begin{equation}
    \theta_j = \mathcal K_j x+ \mathcal K_j ^2y-\mathcal K_j^3t+\theta_{j,0}\,.
\end{equation}
Using the inverse transformation of \eqref{PseudoRotation}, namely,  
\begin{equation*}
    x=x'-\zeta y'+\zeta^2y, \qquad y=y'+2\zeta t,
\end{equation*}
one has
\begin{equation}
   \theta_j = \mathcal K_j x'+(\mathcal K_j^2-\zeta \mathcal K_j)y'+(\zeta^2\mathcal K_j+2\zeta \mathcal K_j^2-\mathcal K_j^3)t+\theta_{j,0}.
\end{equation}
Thus, the transformation leaves the coefficient of $x$ unchanged, while it changes the coefficient of $y$ from $\mathcal K^2$ to $\mathcal K^2-\zeta \mathcal K$.
Therefore, a generic slope parameter, which is the ratio between the $y$-coefficient and the $x$-coefficient, transforms as
\begin{equation}\label{invariance}
  \mathcal K_j \mapsto \mathcal K_j - \zeta\,.
\end{equation}
That is, the pseudo-rotation acts as a common translation of all slope parameters.

The above observation is corroborated by repeating the resonant interaction analysis of  Section \ref{s:wedge} for the KP equation~\eqref{KP}, and considering an asymmetric wedge initial condition
\begin{equation}\label{AsymmetricKP}
 \tilde U(x,y,0)= \begin{cases}
    \tilde U_-,  &x + q_1y < 0 ~\wedge~ y>0,\\
    \tilde U_-,  &x + q_2y < 0 ~\wedge~ y<0,\\
    \tilde U_+ \quad \text{otherwise},
    \end{cases}  
\end{equation}
with $q_1<0<q_2$. 
The symmetric wedge considered in the main text corresponds to the special case $q_2=-q_1$.

Let $a_0$ denote the common amplitude of the two incident line soliton. 
Following \cite{Ryskamp_2021} [specifically, equations~(13c) and~(14)], 
the corresponding soliton parameters for the dynamics of \eqref{AsymmetricKP} are
\begin{equation*}
   \mathcal K_1^-= \frac{q_1-\sqrt{a_0}}{2\sqrt{3}},\qquad 
   \mathcal K_1^+= \frac{q_1+\sqrt{a_0}}{2\sqrt{3}},\qquad  
   \mathcal K_2^-= \frac{q_2-\sqrt{a_0}}{2\sqrt{3}},\qquad  
   \mathcal K_2^+= \frac{q_2+\sqrt{a_0}}{2\sqrt{3}}.
\end{equation*}
In the subcritical regime, $q_2-q_1<2\sqrt{a_0}$, the ordering of the soliton parameters is $K_1^-<K_2^-<K_1^+<K_2^+$, 
and the resonant interaction generates a $[1,4]$ soliton stem, exactly as in the Toda analysis. 
Its amplitude is 
\begin{equation}
    A_{\mathrm{stem}}^{\mathrm{KP}}=\left (\sqrt{a_0}-\frac{(q_1-q_2)}{2}\right )^2,
\end{equation}
which depends only on the difference $q_1-q_2$, and is therefore the same as for a symmetric wedge.
(At the same time, note that, since the wedge angles are given by $q_j = \tan\beta_j$, the wedge opening angle is affected.) 
The corresponding stem slope is
\begin{equation}
q_{\mathrm{stem}}^{\mathrm{KP}}=\tfrac12(q_1+q_2),
\end{equation}
which is the mean orientation of the wedge. 
This dependence is entirely associated with the KP pseudo-rotational invariance \eqref{PseudoRotation} and can be removed by \eqref{invariance} choosing $\zeta=-\frac12(q_1+q_2)$. 
Therefore, asymmetric wedges sharing the same opening angle are dynamically equivalent up to a symmetry transformation of the KP equation.
This discussion shows that every asymmetric KPII wedge can be reduced to a symmetric one by means of the pseudo-rotational symmetry.

\paragraph{Two-dimensional Toda lattice.}
It is natural to ask whether a reduction similar to the one above also exists for the two-dimensional Toda lattice.
The answer is affirmative, although the underlying symmetry is of a different nature. Indeed, the 2DTL Eq. \eqref{eq_Toda1} is invariant under Lorentz boosts in the $yt$-plane:
\begin{equation}\label{Lorentz}
y'=y\cosh\zeta+t\sinh\zeta,
\qquad
t'=t\cosh\zeta+y\sinh\zeta,
\end{equation}
For a 2DTL line soliton, the phase variables are written naturally as  
$\theta_j(n,y,t) = y\cosh{P_j}+nP_j+\sigma t\sinh{P_j}+\theta_{j,0}$,
as per Eq.~\eqref{phase}.
Under the Lorentz transformation~\eqref{Lorentz},
one finds $P_i\mapsto P_i-\sigma \zeta.$
Thus the Lorentz symmetry acts as a common translation of all phase parameters $P_i$. 
Recalling that the 2DTL line solitons are parameterized by parameters
$P_1=\gamma-\alpha$ and $P_2=\gamma+\alpha$, the Lorentz boost preserves the parameter $\alpha = \frac12(P_2-P_1)$ while it shifts the parameter $\gamma=\frac12(P_1+P_2)$ according to $\gamma \mapsto \gamma-\sigma \zeta $. Consequently, an asymmetric wedge with parameters $\gamma_1$ and $\gamma_2$, where $\gamma_1$ and $\gamma_2$ denote the values of the parameter $\gamma$ associated with the two incident line solitons, can be transformed into a symmetric configuration by choosing 
\begin{equation}\label{zetaToda}
\zeta=\tfrac12\sigma(\gamma_1+\gamma_2),
\end{equation}
for which 
\begin{equation}\label{FinalGamma}
      \gamma_1'=\tfrac12(\gamma_1-\gamma_2),\qquad 
    \gamma_2'=-\tfrac12(\gamma_1-\gamma_2).
\end{equation}
Thus, exactly as in the KP case, any asymmetric wedge can be mapped to a symmetric one. 
However, the action of the symmetry is fundamentally different from that in the KP equation. 
In the latter, the Galilean symmetry acts directly on the physical slope parameter,
$q\mapsto q+\zeta$,
so that the opening angle completely characterizes the wedge modulo symmetry. In contrast, the Toda Lorentz symmetry acts linearly on the parameter $\gamma$, while the physical slope is related to $\gamma$ through the nonlinear relation
\begin{equation}\label{NonlinearRelation}
q=\frac{\sqrt{A_V}}{\alpha}\sinh{\gamma}\end{equation} 
Therefore, the symmetry does not correspond to a uniform shift of the physical slopes.

Motivated by the previous discussion, we now consider asymmetric wedges in the 2DTL. Although the Lorentz symmetry can always be used to symmetrize the parameters $\gamma_1$ and $\gamma_2$, it is still useful to formulate the problem directly in terms of the physical slopes $q_1$ and $q_2$ appearing in the initial data.  That is, we consider ICs of the form
\begin{equation}\label{Asymmetric}
 B_n(y,0)= \begin{cases}
    B_-,  &n + q_1y < 0 ~\wedge~ y>0,\\
    B_-,  &n + q_2y < 0 ~\wedge~ y<0,\\
    B_+ \quad \text{otherwise},
    \end{cases}  
\end{equation}
where $q_1q_2<0$ and\ $q_1\neq -q_2$. 
As in the main text, for simplicity and without loss of generality, we will consider the case $B_+=0$. 
The two oblique DSWs still have the same amplitude $A_{B}^{\mathrm{obl}}$
(since, as observed in \ref{obs: AmplitudeOblique}, the latter depends only on the jump size~$\Delta$), but different slopes $q_1$ and $q_2$. 
The corresponding soliton parameters, obtained from the inversion formulae~\eqref{InverseRelation}, are:
\begin{equation}
    (P_1,P_2,P_3,P_4):=\mathrm{ord}\left \{\gamma_1-\alpha, \gamma_1+\alpha, \gamma_2-\alpha,\gamma_2+\alpha \right \}
\end{equation}
with $\alpha=\arcsinh{\left( \sqrt{A_V}\right)}$ as before and 
\begin{equation*}
    \gamma_1=\arcsinh\left(q_1\frac{\arcsinh (\sqrt{A_V})}{\sqrt{A_V}} \right), \qquad
      \gamma_2=\arcsinh\left(q_2\frac{\arcsinh (\sqrt{A_V})}{\sqrt{A_V}} \right).
\end{equation*}
A similar analysis to the one in Section~\ref{s:wedge} then yields:

\begin{remark}
\label{Obs:AsymmetricCases} 
The asymmetric wedge \eqref{Asymmetric} gives rise to precisely  four dynamical cases analogous to the ones described in Section \ref{s:wedge}:
\begin{description}
\item[\textbf{Case 1:}] 
$q_1<0<q_2 $ and $\gamma_2-\gamma_1<2\alpha$. In this case $(P_1,P_2,P_3,P_4)=(\gamma_1-\alpha, \  \gamma_2-\alpha, \ \gamma_1+\alpha, \ \gamma_2+\alpha)$ that corresponds to the subcritical type II wedge. 
\item[\textbf{Case 2:}] 
$q_1<0<q_2 $  and $\gamma_2-\gamma_1>2\alpha$. In this case $(P_1,P_2,P_3,P_4)=(\gamma_1-\alpha, \  \gamma_1+\alpha, \ \gamma_2-\alpha, \ \gamma_2+\alpha)$ that corresponds to the supercritical type II wedge. 
\item[\textbf{Case 3:}]
$q_2<0<q_1 $  and $\gamma_1-\gamma_2<2\alpha$. In this case $(P_1,P_2,P_3,P_4)=(\gamma_2-\alpha, \  \gamma_1-\alpha, \ \gamma_2+\alpha, \ \gamma_1+\alpha)$ that corresponds to the subcritical type I wedge.
\item[\textbf{Case 4:}]
$q_2<0<q_1 $  and $\gamma_1-\gamma_2>2\alpha$. In this case $(P_1,P_2,P_3,P_4)=(\gamma_2-\alpha, \  \gamma_2+\alpha, \ \gamma_1-\alpha, \ \gamma_1+\alpha)$  that corresponds to the supercritical type I wedge.
\end{description}   

In all four cases, the resonance condition $P_2=P_3$ is equivalent to $|\gamma_2-\gamma_1|=2\alpha$, 
which yields
\begin{equation} \label{AsymmetricResonantCodnition}
  |\gamma_2-\gamma_1|=2\alpha\,,
\end{equation}
which in turn generalizes the symmetric resonance condition~\eqref{Eq:Res_qA}. 
Even in the asymmetric wedge setting, the interaction of the leading edge solitons is either ordinary or resonant, and the asymmetric soliton interactions of Section~\ref{s:2Sol} does not arise in any of the four cases described.
The asymmetry of the wedge only affects the slope and amplitude of the stem, but not its qualitative nature. 
This is a direct consequence of the parameter ordering structure, which always produces index pairs of the form $[1,3]$ and $[2,4]$ or $[1,2]$ and $[3,4]$, regardless of the asymmetry between $\gamma_1$ and $\gamma_2$. These predictions are illustrated in Figs.~\ref{fig:AsII} and~\ref{fig:AsI}, which show respectively the time evolution of asymmetric type~II and type~I wedges.
\end{remark}

\begin{figure}[t!]
    \centering
    \includegraphics[width=\linewidth]{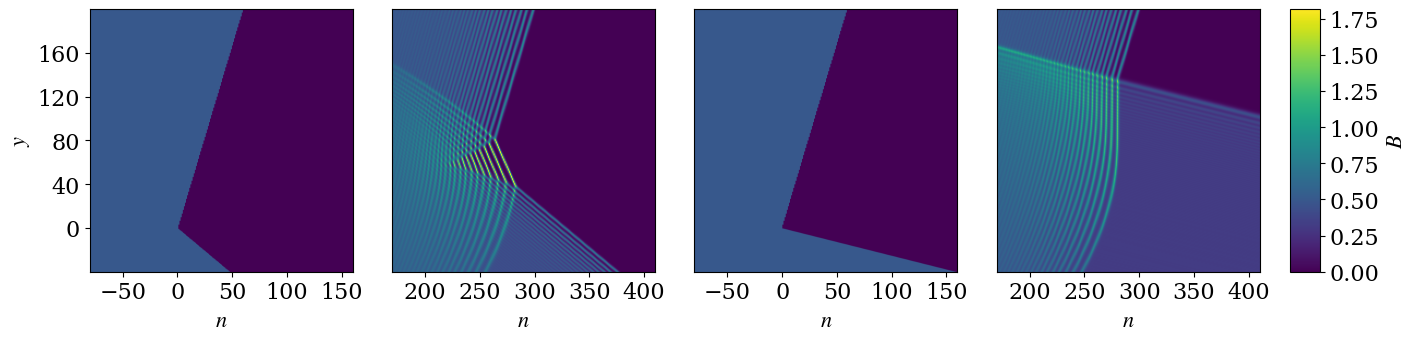}
    \caption{Density plots illustrating the evolution of an asymmetric type II wedges with $q_1=-0.3$ and $q_2=-1.2,-4$, at $t=0$ and $t=300$. Left two panels: subcritical regime ($q_1=-0.3$,$q_2=-1.2$), showing the formation of a tilted stem DSW  with slope $\tilde q_{\mathrm{stem}}\neq 0$, in contrast to the symmetric case. Right two panels: supercritical regime (($q_1=-0.3$,$q_2=-4$)), showing the formation of a localized peak propagating along a tilted trajectory.}
    \label{fig:AsII}
\bigskip
    \centering
    \includegraphics[width=\linewidth]{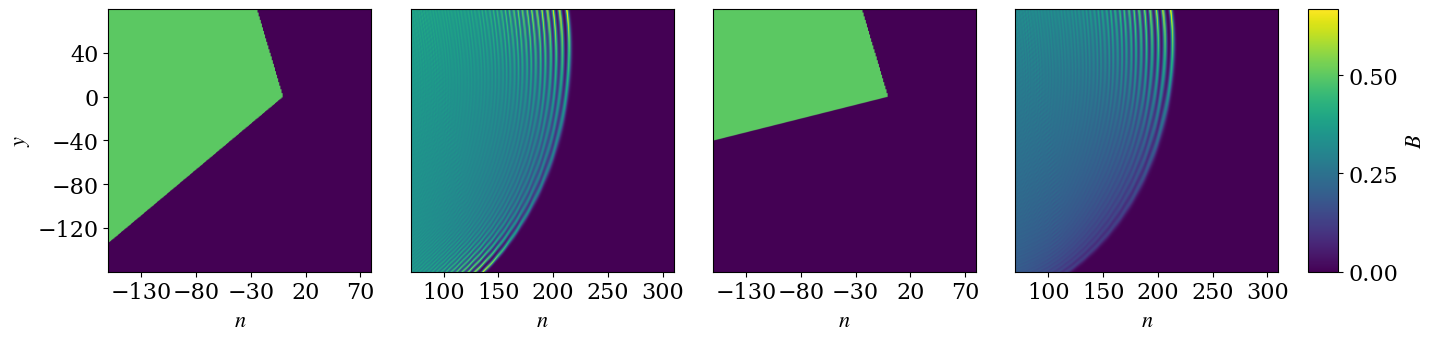}
    \caption{Density plots illustrating the evolution of an asymmetric type I wedges with $q_1=0.3$ and $q_2=1.2,4$, at $t=0$ and $t=300$. Left two panels: subcritical regime ($q_1=0.3$,$q_2=1.2$). Right two panels: supercritical regime ($q_1=0.3$,$q_2=4$).}
    \label{fig:AsI}
\end{figure}

Since the interactions are still described by exact multi-soliton solutions of the two-dimensional Toda lattice, all quantities can be obtained by following the same arguments used for the symmetric case. The only difference is that the two incident oblique DSWs are now characterized  by distinct parameters $\gamma_1$ and $\gamma_2$. 
Therefore, for brevity, we will not repeat the full derivation of Section~\ref{s:wedge} here. 
On the other hand, 
as an illustration, we consider a subcritical type~II wedge, for which the parameters are $(P_1,P_2,P_3,P_4)=(\gamma_1-\alpha, \  \gamma_2-\alpha, \ \gamma_1+\alpha, \ \gamma_2+\alpha)$.
The resonant interaction generates a stem corresponding to the $[1,4]$ soliton with slope and amplitude given by
\begin{equation} \label{parameters_stem_as} q_{\mathrm{stem}}^{\mathrm{Toda}}=\frac{\cosh{(\gamma_2+\alpha)}-\cosh{(\gamma_1-\alpha)}}{\gamma_2-\gamma_1+2\alpha},\qquad
A_{B,\mathrm{stem}}^{\mathrm{Toda}}=\log\left[1+\sinh^2\big(\tfrac12(\gamma_2-\gamma_1+2\big)\right].
\end{equation}
The asymmetric nature of the wedge is reflected in the slope of the stem. Indeed, the expression for the stem slope in \eqref{parameters_stem_as} can be rewritten as
\begin{equation*}
q_{\mathrm{stem}}^{\mathrm{Toda}}=\sinh\big(\tfrac12(\gamma_2+\gamma_1)\big)
    \frac{\sinh\left(\frac12(\gamma_2-\gamma_1)+\alpha\right)}
        {\tfrac12(\gamma_2-\gamma_1)+\alpha}\,.
\end{equation*}
Thus, the stem slope does not depend only on the wedge opening parameter $\gamma_1-\gamma_2$, but also on the mean orientation $\gamma_1+\gamma_2$. 
In the symmetric case $\gamma_2=-\gamma_1$, one has $q_{\mathrm{stem}}^{\mathrm{Toda}}=0$, so that the stem is vertical, 
whereas for a genuinely asymmetric wedge, the stem is tilted. 
On the other hand, this tilt is not a new feature of the dynamics,
since, as discussed above, the Lorentz symmetry acts as a common translation of the soliton parameters. 
Indeed, by choosing $\zeta$ as in~\eqref{zetaToda}, the pair $ (\gamma_1,\gamma_2)$ is mapped into $(\gamma'_1,\gamma'_2)$ via~\eqref{FinalGamma}, 
thereby allowing one to reduce an arbitrary asymmetric wedge to a symmetric one. 
In the transformed frame, the stem satisfies $\gamma'_\mathrm{stem}=0$, while its amplitude, which depends only on the invariant difference $\gamma_2-\gamma_1$, remains unchanged.

To summarize, both the KP equation and the 2DTL possess a symmetry that allows one to reduce asymmetric wedges into symmetric ones. 
The difference lies in the variables on which these symmetries act. 
For the KP equation, the pseudo-rotational symmetry acts directly on the physical slope parameters $q_i$. In contrast, the Lorentz symmetry of the 2DTL acts on the parameters $\gamma_i$, while the physical slopes are recovered through the nonlinear relation \eqref{NonlinearRelation}.

\addcontentsline{toc}{section}{References}
\printbibliography

\end{document}